\setlist{noitemsep}
\setlist[2]{noitemsep}
\spnewtheorem{observation}[theorem]{Observation}{\bfseries}{\itshape}
\newcommand{\cl}[1]{\ensuremath{\mathsf{#1}}}
\newcommand{\ds}{\textsc{ds}\xspace}
\newcommand{\cds}{\textsc{cds}\xspace}
\newcommand{\dsudg}{\textsc{ds-udg}\xspace}
\newcommand{\cdsudg}{\textsc{cds-udg}\xspace}
\newcommand{\nph}{\cl{NP}-hard\xspace}
\newcommand{\apxh}{\cl{APX}-hard\xspace}
\newcommand{\p}{\cl{P}\xspace}
\newcommand{\np}{\cl{NP}\xspace}
\newcommand{\xp}{\cl{XP}\xspace}
\newcommand{\Wone}{\cl{W[1]}\xspace}
\newcommand{\disk}{\delta}
\newcommand{\mypath}{\pi}
\renewcommand{\leq}{\leqslant}
\renewcommand{\geq}{\geqslant}
\renewcommand{\le}{\leqslant}
\renewcommand{\ge}{\geqslant}
\newcommand{\mypara}[1]{\vspace{10pt} \noindent \textbf{\sffamily #1}}
\newcommand{\Reals}{\mathbb{R}}
\renewcommand{\to}{\rightarrow}
\newcommand{\cd}{\mathcal{D}}
\newcommand{\cS}{\mathcal{S}}
\newcommand{\cu}{\mathcal{U}}
\newcommand{\cR}{\mathcal{R}}
\newcommand{\graph}{\mathcal{G}}
\renewcommand{\deg}{^{\circ}}
\newcommand{\Zleftsmall}{Z^-_{\leq}}
\newcommand{\Zrightsmall}{Z^+_{\leq}}
\newcommand{\Zleftlarge}{Z^-_{>}}
\newcommand{\Zrightlarge}{Z^+_{>}}
\newcommand{\tree}{\mathcal{T}}
\newcommand{\opt}{\mbox{{\sc opt}}}
\newcommand{\lchild}{\mbox{left-child}}
\newcommand{\rchild}{\mbox{right-child}}
\newcommand{\myclaim}[2]
{\begin{quotation} \noindent {\emph{Claim.} }{#1} \\[2mm]
{\emph{Proof of claim.}} #2 \hfill {\footnotesize $\Box$} \end{quotation}}
\tikzstyle{cp}=[draw=black,thick]
\tikzstyle{cl}=[draw=red, fill=red, fill opacity=0.1]
\tikzstyle{xb}=[draw=black, fill=black, fill opacity=0.4, very thick]
\tikzstyle{yb}=[draw=black, fill=none, very thick]
\DeclarePairedDelimiter{\civ}{[}{]}
\DeclarePairedDelimiter{\loiv}{(}{]}
\DeclarePairedDelimiter{\roiv}{[}{)}
\DeclareMathOperator{\sep}{sep}
\DeclareMathOperator{\cost}{cost}
\DeclareMathOperator{\core}{core}
\newcommand{\skb}[1]{{#1}}
\newenvironment{proofsketch}{\noindent\emph{Proof sketch.}}{\hfill $\Box$ \medskip\\}
\DeclareMathOperator{\mypred}{\mathit{pred}}
\DeclareMathOperator{\myray}{\mathit{ray}}
\DeclareMathOperator{\bundle}{\mathit{bundle}}
\newcommand{\mytrue}{{\sc true}\xspace}
\newcommand{\myfalse}{{\sc false}\xspace}
\newcommand{\mynext}[1]{\mathit{next}(#1)}
\newcommand{\BeginMyItemize}{\begin{itemize}\setlength{\itemsep}{-\parskip}}
\newcommand{\EndMyItemize}{\end{itemize}}
\newcommand{\BeginMyEnumerate}{\begin{enumerate}\setlength{\itemsep}{-\parskip}}
\newcommand{\EndMyEnumerate}{\end{enumerate}}
\tikzstyle{xb}=[draw=black, fill=black, fill opacity=0.2, very thick]
\tikzstyle{yb}=[draw=black, fill=none, very thick]
\tikzstyle{xbi}=[draw=black, fill=black, fill opacity=0.2, thick]
\tikzstyle{ybi}=[draw=black, fill=none, thick]
\newcommand{\cc}[2]{
\draw [draw=blue, fill=blue, fill opacity=0.15] (#1) circle (1);
\draw [draw=red, fill=red, fill opacity=0.08] (#2) circle (1);
}
\newcommand{\tapeblock}[2]{
\begin{scope}[shift={(#1)},rotate around={{#2}:(0,0)}]
\draw [RedViolet] (0,0) circle (1);
\draw [RedViolet] (0.1,0) circle (1);
\draw [RedViolet] (0.2,0) circle (1);
\end{scope}
}
\newcommand{\tf}[2]{
\draw [draw=ForestGreen, fill=ForestGreen, fill opacity=0.15] (#1) circle
(1);
\draw [draw=YellowOrange, fill=YellowOrange, fill opacity=0.15](#2) circle
(1); }
\newcommand{\turnleft}[2]{
\begin{scope}[shift={(#1)},rotate around={{#2}:(0,0)}]
\tf{0,0.7}{0,0}
\tf{1.15,1.7}{1.95,0.0}
\tf{1.6,2.15}{3.3,1.35}
\tf{2.6,3.3}{3.3,3.3}
\end{scope}
}
\newcommand{\turnright}[2]{
\begin{scope}[shift={(#1)},rotate around={{#2}:(0,0)}]
\tf{0,0}{0,0.7}
\tf{1.95,0.0}{1.15,1.7}
\tf{3.3,1.35}{1.6,2.15}
\tf{3.3,3.3}{2.6,3.3}
\end{scope}
}
\newcommand{\ccc}[2]{
\draw [draw=blue, fill=blue, fill opacity=0.15] (#1) circle (1);
\draw [draw=red, pattern=north west lines, pattern color=red] (#2) circle
(1); }
\newcommand{\connectors}[1]{
\begin{scope}[shift={(#1)}]
\cc{17.3,10.4}{18,11.55}
\cc{18.7,9.6}{18,8.45}
\end{scope}
}
\newcommand{\sevencircles}[1]
{
\begin{scope}[rotate around={{#1}:(8,8)}]
\cc{\base,2-\de}{\base,3-\de}
\cc{\base + \shift,2+\de}{\base + \shift,2+2*\de}
\cc{\base + 2*\shift,2-\de}{\base + 2*\shift,2+6*\de}
\cc{\base + 3*\shift,2+\de}{\base + 3*\shift,2+2*\de}
\cc{\base + 4*\shift,2-\de}{11,2-\de}
\cc{\base + 4*\shift,4-2*\de}{11,4}
\cc{8,4+3*\de}{7,4+3*\de}
\end{scope}
}
\newcommand{\gadgetgraph}[1]
{
\begin{scope}[rotate around={{#1}:(8,8)}]
\draw [dashed, thick] (0,2) node{} -- (2,1.6) node{} -- (2,0) node{};
\draw [dashed, thick] (2,1.6) -- (4,2) node{} -- (6,2) node{} -- (6,0)
node{};
\draw [dashed, thick] (6,2) -- (8,2) node{} -- (9.75,2) node{} -- (10,0)
node{};
\draw [dashed, thick] (9.75,2) -- (9.7,4) node{} -- (8.2,4) node{} -- (8,6)
node{};
\draw [dashed, thick] (8,6) -- (10,6) node{} -- (10,8) node{};
\end{scope}
}
\newcommand{\graphconnector}[1]{
\begin{scope}[shift={(#1)}]
\draw [dashed, thick] (16,10) node{} -- (17.3,10.4) node{}
-- (18.7,9.6) node{} -- (20,10) node{};
\end{scope}
}
\newcommand{\blocks}
{
\node at (2,0) {$X_6$}; \node at (4,0) {$Y_6$};
\node at (6,0) {$X'_6$}; \node at (8,0) {$Y'_6$};
\node at (10,0) {$X'_5$}; \node at (12,0) {$Y'_5$};
\node at (14,0) {$X_5$}; \node at (16,0) {$Y_5$};
\node at (16,2) {$X_4$}; \node at (16,4) {$Y_4$};
\node at (16,6) {$X'_4$}; \node at (16,8) {$Y'_4$};
\node at (16,10) {$X'_3$}; \node at (16,12) {$Y'_3$};
\node at (16,14) {$X_3$}; \node at (16,16) {$Y_3$};
\node at (14,16) {$X_2$}; \node at (12,16) {$Y_2$};
\node at (10,16) {$X'_2$}; \node at (8,16) {$Y'_2$};
\node at (6,16) {$X'_1$}; \node at (4,16) {$Y'_1$};
\node at (2,16) {$X_1$}; \node at (0,16) {$Y_1$};
\node at (0,14) {$X_8$}; \node at (0,12) {$Y_8$};
\node at (0,10) {$X'_8$}; \node at (0,8) {$Y'_8$};
\node at (0,6) {$X'_7$}; \node at (0,4) {$Y'_7$};
\node at (0,2) {$X_7$}; \node at (0,0) {$Y_7$};
}
\newcommand{\blocksswap}
{
\node at (0,0) {$Y_7$}; \node at (0,2) {$X_7$};
\node at (2,0) {$X_6$}; \node at (4,0) {$Y_6$};
\node at (6,0) {$X'_6$}; \node at (8,0) {$Y'_6$};
\node at (10,0) {$X'_5$}; \node at (12,0) {$Y'_5$};
\node at (14,0) {$X_5$}; \node at (16,0) {$Y_5$};
\node at (16,2) {$X_4$}; \node at (16,4) {$Y_4$};
\node at (16,6) {$X'_4$}; \node at (16,8) {$Y'_4$};
\node at (16,10) {$X'_3$}; \node at (16,12) {$Y'_3$};
\node at (16,14) {$X_3$}; \node at (16,16) {$Y_3$};
\node at (14,16) {$X_1$}; \node at (12,16) {$Y_2$};
\node at (10,16) {$X'_2$}; \node at (8,16) {$Y'_2$};
\node at (6,16) {$X'_1$}; \node at (4,16) {$Y'_1$};
\node at (2,16) {$X_2$}; \node at (0,16) {$Y_1$};
\node at (0,14) {$X_8$}; \node at (0,12) {$Y_8$};
\node at (0,10) {$X'_8$}; \node at (0,8) {$Y'_8$};
\node at (0,6) {$X'_7$}; \node at (0,4) {$Y'_7$};
}
\begin{document}

\title{The Homogeneous Broadcast Problem in Narrow and Wide Strips\thanks{This research was
supported by the Netherlands Organization for Scientific Research (NWO) under
project no.~024.002.003.}}

\author{Mark de Berg\inst{1}
            \and
        Hans L. Bodlaender\inst{1}\inst{2}
            \and
        S\'andor Kisfaludi-Bak\inst{1}
}

\institute{Department of Mathematics and Computer Science, TU Eindhoven, The Netherlands\\
\email{m.t.d.berg@tue.nl; h.l.bodlaender@tue.nl; s.kisfaludi.bak@tue.nl}
\and
Department of Computer Science, Utrecht University, The Netherlands\\
}

\maketitle

\begin{abstract}
Let $P$ be a set of nodes in a wireless network, where each node is modeled
as a point in the plane, and let $s\in P$ be a given source node. Each
node~$p$ can transmit information to all other nodes within unit distance,
provided~$p$ is activated. The (homogeneous) broadcast problem is to activate a minimum
number of nodes such that in the resulting directed communication graph, the
source~$s$ can reach any other node. \skb{We study the complexity of the regular
and the hop-bounded version of the problem (in the latter, $s$ must be able
to reach every node within a specified number of hops), with the restriction
that all points lie inside a strip of width $w$. We almost completely
characterize the complexity of both the regular and the hop-bounded versions
as a function of the strip width $w$.}
\end{abstract}

\section{Introduction}
Wireless networks give rise to a host of interesting algorithmic problems. In
the traditional model of a wireless network each node is modeled as a
point~$p\in \Reals^2$, which is the center of a disk~$\disk(p)$ whose radius
equals the transmission range of~$p$. Thus~$p$ can send a message to another
node~$q$ if and only if $q\in \disk(p)$. Using a larger transmission radius
may allow a node to transmit to more nodes, but it requires more power and is
more expensive. This leads to so-called range-assignment problems, where the
goal is to assign a transmission range to each node such that the resulting
communication graph has desirable properties, while minimizing the cost of
the assignment. We are interested in broadcast problems, where the desired
property is that a given source node can reach any other node in the
communication graph. Next, we define the problem more formally.

Let $P$ be a set of $n$ points in~$\Reals^d$ and let $s\in P$ be a source
node. A \emph{range assignment} is a function $\rho: P \to \Reals_{\geq 0}$
that assigns a transmission range $\rho(p)$ to each point~$p\in P$. Let
$\graph_\rho=(P,E_\rho)$ be the directed graph where $(p,q)\in E_\rho$ iff
$|pq|\leq \rho(p)$. The function~$\rho$ is a \emph{broadcast assignment} if
every point $p \in P$ is reachable from~$s$ in $\graph_\rho$. If
every~$p\in P$ is reachable within $h$ hops, for a given parameter~$h$, then
$\rho$ is an \emph{$h$-hop broadcast assignment}. The ($h$-hop) broadcast
problem is to find an ($h$-hop) broadcast assignment whose cost $\sum_{p \in
P} \cost(\rho(p))$ is minimized. Often the cost of assigning transmission
radius $x$ is defined as $\cost(x)= x^{\alpha}$ for some constant~$\alpha$.
In $\Reals^1$, both the basic broadcast problem and the $h$-hop version are
solvable in $O(n^2)$ time \cite{Das06linear}. In $\Reals^2$ the problem is
\nph for any $\alpha>1$~\cite{Clementi2001worst,Fuchs08}, and in $\Reals^3$
it is even \apxh~\cite{Fuchs08}. There are also several approximation
algorithms~\cite{Ambuhl05,Clementi2001worst}. For the 2-hop broadcast problem
in $\Reals^2$ an $O(n^7)$ algorithm is known~\cite{Ambuhl04} and for any
constant~$h$ there is a PTAS~\cite{Ambuhl04}. Interestingly, the complexity
of the 3-hop broadcast problem is unknown.

An important special case of the broadcast problem is where we allow only two
possible transmission ranges for the points, $\rho(p)=1$ or $\rho(p)=0$. In
this case the exact cost function is irrelevant and the problem becomes to
minimize the number of active points. This is called the \emph{homogeneous
broadcast problem} and it is the version we focus on. From now on, all
mentions of broadcast and $h$-hop broadcast refer to the homogeneous setting.
Observe that if $\rho(p)=1$ then $(p,q)$ is an edge in $\graph_{\rho}$ if and
only if the disks of radius~$1/2$ centered at $p$ and $q$ intersect. Hence,
if all points are active then $\graph_{\rho}$ in the intersection graph of a
set of congruent disks or, in other words, a \emph{unit-disk graph (UDG)}.
Because of their relation to wireless networks, UDGs have been studied
extensively.

Let $\cd$ be a set of congruent disks in the plane, and let $\graph_{\cd}$ be
the UDG induced by~$\cd$. A \emph{broadcast tree} on $\graph_{\cd}$ is a
rooted spanning tree of~$\graph_{\cd}$. To send a message from the root to all other nodes, each internal node of the tree has to
send the message to its children. Hence, the cost of broadcasting is related
to the internal nodes in the broadcast tree. A cheapest broadcast
tree corresponds to a minimum-size \emph{connected dominating set} on
$\graph_{\cd}$, that is, a minimum-size subset $\Delta \subset \cd$ such that
the subgraph induced by $\Delta$ is connected and each node in $\graph_{\cd}$
is either in $\Delta$ or a neighbor of a node in~$\Delta$. The
broadcast problem is thus equivalent to the following: given a UDG
$\graph_{\cd}$ with a designated source node~$s$, compute a minimum-size
connected dominated set $\Delta \subset \cd$ such that $s\in \Delta$.

In the following we denote the dominating set problem by \ds, the connected
dominating set problem by \cds, and we denote these problems on UDGs by
\dsudg and \cdsudg, respectively. Given an algorithm for the broadcast
problem, one can solve \cdsudg by running the algorithm $n$ times, once for
each possible source point. Consequently, hardness results for \cdsudg can be
transferred to the broadcast problem, and algorithms for the broadcast
problem can be transferred to \cdsudg at the cost of an extra linear factor
in the running time. It is well known that \ds and \cds are \nph, even for
planar graphs~\cite{Garey79}. \dsudg and \cdsudg are also
\nph~\cite{lichtenstein82,masuyama81}. The parameterized complexity of \dsudg
has also been investigated: Marx~\cite{Marx06} proved that \dsudg is
\Wone-hard when parameterized by the size of the dominating set. (The
definition of \Wone and other parameterized complexity classes can be found in
the book by Flum and Grohe~\cite{Flum06}.)

\mypara{Our contributions.} Knowing the existing hardness results for the
broadcast problem, we set out to investigate the following questions. Is
there a natural special case or parameterization admitting an efficient
algorithm? Since the broadcast
problem is polynomially solvable in $\Reals^1$, we study how the complexity of
the problem changes as we go from the $1$-dimensional problem to the
$2$-dimensional problem. To do this, we assume the points (that is, the disk
centers) lie in a strip of width $w$, and we study how the problem complexity
changes as we increase~$w$. We give an almost complete characterization of 
the complexity, both for the general and for the hop-bounded version of the
problem. More precisely, our results are as follows.


We first study strips of width at most~$\sqrt{3}/2$. Unit disk graphs restricted to  such \emph{narrow strips} are a subclass of co-comparability graphs \cite{Matsui98}, for which an $O(nm)$  time \cds algorithm is known~\cite{Kratsch93,Breu96}. (Here $m$ denotes the number of edges in the graph.) The broadcast problem is slightly different because it requires $s$ to be in the dominating set; still, one would expect better running times in this restricted graph class. Indeed, we show that for narrow strips the broadcast problem can be solved in $O(n \log n)$ time. The hop condition in the $h$-hop broadcast problem has not been studied yet for co-comparability graphs to our knowledge. This condition complicates the problem considerably. Nevertheless, we show that the $h$-hop broadcast problem in narrow strips is solvable in polynomial time. Our algorithm runs in $O(n^6)$ and uses a subroutine for $2$-hop broadcast, which may be of independent interest: we show that the $2$-hop broadcast problem  is solvable in $O(n^4)$ time. Our subroutine is based on an algorithm by Amb\"{u}hl et al. \cite{Ambuhl04} for the non-homogeneous case, which runs in $O(n^7)$ time. This result is deferred to Appendix~\ref{sec:planaralgs}.

Second, we investigate what happens for wider strips. We show that the
broadcast problem has an $n^{O(w)}$ dynamic-programming algorithm for strips
of width $w$. We prove a matching lower bound of $n^{\Omega(w)}$, conditional
on the Exponential Time Hypothesis (ETH). Interestingly, the $h$-hop
broadcast problem has no such algorithm (unless $\p=\np$): we show this
problem is already \nph on a strip of width~$40$. One of the gadgets in this
intricate construction can also be used to prove that a \cdsudg and the
broadcast problem are \Wone-hard parameterized by the solution size $k$. The 
\Wone-hardness proof is discussed in Section~\ref{sec:AppHardness}. It is a 
reduction from \textsc{Grid Tiling} based on ideas by Marx~\cite{Marx06}, 
and it implies that there is no $f(k)n^{o(\sqrt{k})}$ algorithm for \cdsudg 
unless ETH fails.

\section{Algorithms for broadcasting inside a narrow strip}
\label{sec:narrowstrip}
In this section we present polynomial algorithms (both for broadcast and for
$h$-hop broadcast) for inputs that lie inside a strip~$\cS :=
\mathbb{R}\times \civ{0,w}$, where $0<w\leq \sqrt{3}/2$ is the width of the
strip. Without loss of generality, we assume that the source lies on the
$y$-axis. Define $\cS_{\geq 0} := [0,\infty) \times \civ{0,w}$ and $\cS_{\leq
0} := (-\infty,0] \times \civ{0,w}$.

Let $P$ be the set of input points. We define $x(p)$ and $y(p)$ to be the
$x$- and $y$-coordinate of a point~$p\in P$, respectively, and $\disk(p)$
to be the unit-radius disk centered at~$p$. Let $\graph=(P,E)$ be the graph
with $(p,q)\in E$ iff $q\in\disk(p)$, and let ${P}' := P \setminus
\disk(s)$ be the set of input points outside the source disk. We say that a
point $p\in P$ is \emph{left-covering} if $pp'
\in E$ for all $p' \in {P}'$ with $x(p') < x(p)$; $p$ is
\emph{right-covering} if $p'p \in E$ for all $p' \in {P}'$ with $x(p') >
x(p)$. We denote the set of left-covering and right-covering points by
$Q^-$ and $Q^+$ respectively. Finally, the \emph{core area} of a point
$p$, denoted by $\core(p)$, is
$\civ{x(p)-\frac12,x(p)+\frac12}\times\civ{0,w}$. Note that $\core(p) \subset
\disk(p)$ because $w \le \sqrt{3}/2$, i.e., the disk of $p$ covers a part of
the strip that has horizontal length at least one. This is a key property of
strips of width at most $\sqrt{3}/2$, and will be used repeatedly.

We partition $P$ into levels $L_0,L_1,\dots L_t$, based on hop distance
from $s$ in~$\graph$. Thus $L_i := \{p\in P: d_{\graph}(s,p)=i\}$, where
$d_{\graph}(s,p)$ denotes the hop-distance. Let $L^-_i$ and $L^+_i$ denote
the points of $L_i$ with negative and nonnegative coordinates, respectively.
We will use the following observation multiple times.

\begin{restatable}{observation}{obscorecover}
\label{obs:corecover}
Let $\graph=(P,E)$ be a unit disk graph on a narrow strip $\cS$.\vspace*{-2mm}
\begin{enumerate}
\item[(i)] Let $\mypath$ be a path in $\graph$ from a point $p\in P$ to a
	point $q\in P$. Then the region $\civ{x(p)-\frac12, x(q)+\frac12}\times
	\civ{0,w}$ is fully covered by the disks  of the points in~$\mypath$.
\item[(ii)] The overlap of neighboring levels is at most $\frac{1}{2}$ in
    $x$-coordinates: $\max\{x(p)|p\in L^+_{i-1}\} \leq \min\{x(q)|q\in
    L^+_i\} + \frac{1}{2}$ for any $i>0$ with $L^+_i\neq \emptyset$; similarly,
    $\min\{x(p)|p\in L^-_{i-1}\} \geq \max\{x(q)|q\in L^-_i\} - \frac{1}{2}$
    for any $i>0$ with $L^-_i\neq \emptyset$.
\item[(iii)] Let $p$ be an arbitrary point in $L^+_i$ for some $i>0$. Then
	the disks of any path $\mypath(s,p)$ cover all points in all levels $L_0
	\cup L_1 \cup L^+_2 \cup \dots \cup L^+_{i-1}$. A similar statement holds for points in $L^-_i$.
\end{enumerate}
\end{restatable}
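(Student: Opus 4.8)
The plan is to first prove a sharpened, ``core-area'' version of part~(i) and then derive all three statements from it. The sharpened claim is: for any path $\pi=(r_0,r_1,\dots,r_k)$ in $\graph$, the union $\bigcup_{j}\core(r_j)$ equals $[\ell_\pi-\frac12,\,R_\pi+\frac12]\times[0,w]$, where $\ell_\pi=\min_j x(r_j)$ and $R_\pi=\max_j x(r_j)$. This is elementary: each $\core(r_j)$ is a full-height slab of horizontal width~$1$ centered at $x(r_j)$, and consecutive vertices satisfy $|x(r_j)-x(r_{j+1})|\le|r_jr_{j+1}|\le1$, so the horizontal intervals of consecutive core areas overlap, making the union of these intervals the single interval $[\ell_\pi-\frac12,\,R_\pi+\frac12]$. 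Since $\core(r)\subseteq\disk(r)$ when $w\le\sqrt{3}/2$ (noted above), the disks of $\pi$ cover at least this slab; as $r_0$ and $r_k$ lie on $\pi$, the slab contains $[x(r_0)-\frac12,\,x(r_k)+\frac12]\times[0,w]$, which gives part~(i).

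Next I will prove a uniform monotonicity statement that contains part~(ii): for all $0\le j<i$, every $z\in L^+_j$ and $q\in L^+_i$ satisfy $x(z)\le x(q)+\frac12$, and symmetrically every $z\in L^-_j$ and $q\in L^-_i$ satisfy $x(z)\ge x(q)-\frac12$. The case $j=0$ is immediate since $x(z)=x(s)=0$. For $j\ge1$, suppose $x(q)<x(z)-\frac12$; take a shortest $s$-$z$ path $\pi'=(s=r_0,\dots,r_j=z)$, so that $d_{\graph}(s,r_m)=m$. Since $x(s)=0$ and $x(z)\ge0$, the core-area claim yields $\bigcup_m\core(r_m)\supseteq[-\frac12,\,x(z)+\frac12]\times[0,w]$, and $q$ lies in this slab because $-\frac12<0\le x(q)<x(z)-\frac12$ and $q$ lies in the strip. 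Hence $q\in\core(r_m)$ for some $m\le j$; as $x(q)<x(z)-\frac12=x(r_j)-\frac12$ we have $q\notin\core(r_j)$, so $m\le j-1$, and then $q\in\disk(r_m)$ forces $d_{\graph}(s,q)\le m+1\le j\le i-1$, contradicting $q\in L_i$. Part~(ii) is the case $j=i-1$: every $p\in L^+_{i-1}$ and $q\in L^+_i$ satisfy $x(p)\le x(q)+\frac12$, so the maximum over such $p$ is at most the minimum over such $q$ plus $\frac12$ (the inequality being vacuous if $L^+_{i-1}=\emptyset$), and the negative side is symmetric.

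Finally, part~(iii). Let $p\in L^+_i$ with $i>0$ and let $\pi$ be any $s$-$p$ path. Every $z\in L_0\cup L_1$ equals $s$ or is a neighbor of $s$, hence $z\in\disk(s)$; since $s\in\pi$, $z$ is covered. For $z\in L^+_j$ with $2\le j\le i-1$ we have $x(z)\ge0$ and, by the monotonicity statement, $x(z)\le x(p)+\frac12$, so $z\in[-\frac12,\,x(p)+\frac12]\times[0,w]$. Applying the core-area claim to $\pi$, which runs from $s$ at $x=0$ to $p$ at $x(p)\ge0$, the disks of $\pi$ cover $[\ell_\pi-\frac12,\,R_\pi+\frac12]\times[0,w]\supseteq[-\frac12,\,x(p)+\frac12]\times[0,w]$, so $z$ is covered. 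The statement for $L^-_i$ follows by reflecting in the $y$-axis.

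The only real subtlety I anticipate is that parts~(ii) and~(iii) force one to argue with \emph{core areas} rather than full disks: a vertex's disk reaches horizontally further than its core, so plain disk-covering would not push a point into a strictly smaller level. The interval-overlap computation and the shortest-path level bookkeeping are routine.
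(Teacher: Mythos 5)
Your proof is correct and follows essentially the same route as the paper: part~(i) via overlapping full-height core slabs of consecutive path vertices, part~(ii) by a shortest-path contradiction locating a low-level vertex whose core (hence disk) contains $q$, and part~(iii) by combining the two. The one difference is that you prove the monotonicity of (ii) uniformly for all level pairs $j<i$ rather than only adjacent levels, which makes your deduction of (iii) slightly tighter than the paper's terse "follows from (i) and (ii)".
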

\begin{proof}
For (i), note that for any edge $(u,v)\in E$, we have that $\core(u)$ and
$\core(v)$ intersect. Thus the union of the cores of the points of $\mypath$
is connected, and contains $\core(p)$ and $\core(q)$. Consequently, it covers
$\civ{x(p)-\frac12, x(q)+\frac12}\times \civ{0,w}$.

We prove (ii) by contradiction. Suppose that there are $p\in L_{i-1}$ and
$q\in L_i$ with $x(p) > x(q)+\frac12$. Any shortest path $\mypath(s,p)$ must
have a point $p'$ inside~$\civ{x(q)-\frac12,x(q)+\frac12}\times \civ{0,w}$,
because no edge of the path can jump over this part of the strip. This
point~$p'$ has level at most~$i-2$ and $q\in \disk(p')$, contradicting that
$q$ is at level~$i$.

Statement~(iii) follows from (i) and (ii): the disks of $\mypath(s,p)$ cover
$\disk(s) \cup \civ{-\frac12, x(p)+\frac12}\times \civ{0,w}$, and $L_0
\cup L_1 \cup L^+_2 \cup \dots \cup L^+_{i-1}$ is contained in this set. \qed
\end{proof}

\subsection{Minimum broadcast set in a narrow strip} \label{subse:non-hop-bounded}
A \emph{broadcast set} is a point set $D \subseteq P$ that gives a
feasible broadcast, i.e., a connected dominating set of $\graph$ that
contains $s$. Our task is to find a minimum broadcast set inside a narrow
strip. Let $p,p'\in P$ be points with maximum and minimum $x$-coordinate, respectively.
Obviously there must be paths from $s$ to $p$ and $p'$ in $\graph$ such that all points
on these paths are active, except possibly $p$ and $p'$.
If $p$ and $p'$ are also active, then these
paths alone give us a feasible broadcast set: by
Observation~\ref{obs:corecover}(i), these paths cover all our input points.
Instead of activating $p$ and $p'$, it is also enough to activate the
points of a path that reaches $Q^-$ and a path that reaches~$Q^+$.
In most cases it is sufficient to look for broadcast sets with this structure.
\begin{restatable}{lemma}{lemtwopaths}
\label{lem:twopaths}
If there is a minimum broadcast set with an active point on $L_2$,
then there is a minimum broadcast set consisting of the disks of a
shortest path $\mypath^-$ from $s$ to $Q^-$ and a shortest path $\mypath^+$
from $s$ to $Q^+$.  These two paths share $s$ and they may or may not
share their first point after $s$. 
\end{restatable}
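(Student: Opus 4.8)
The plan is to show that the size of a minimum broadcast set equals the cost of the cheapest pair consisting of a shortest $s$--$Q^+$ path and a shortest $s$--$Q^-$ path, proving the two inequalities separately; the structural statement then drops out.

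\emph{Easy direction (upper bound).} I would first check that, for \emph{any} shortest path $\mypath^+$ from $s$ to a point $q^+\in Q^+$ and any shortest path $\mypath^-$ from $s$ to a point $q^-\in Q^-$, the set $\mypath^-\cup\mypath^+$ is already a broadcast set. It is connected because both paths contain $s$. For domination, Observation~\ref{obs:corecover}(i) gives that the disks of $\mypath^+$ cover $\civ{-\frac12,\,x(q^+)+\frac12}\times\civ{0,w}$ and those of $\mypath^-$ cover $\civ{x(q^-)-\frac12,\,\frac12}\times\civ{0,w}$; since both paths contain $\disk(s)\supseteq\core(s)$ the two regions overlap, so any $p\in P$ with $x(q^-)-\frac12\le x(p)\le x(q^+)+\frac12$ is dominated. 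A point $p$ with $x(p)>x(q^+)+\frac12$ either lies in $\disk(s)$, hence is dominated by $s$, or lies in $P'$ with $x(p)>x(q^+)$, hence is dominated by $q^+$ because $q^+\in Q^+$; the symmetric argument handles $x(p)<x(q^-)-\frac12$. Thus $\mypath^-\cup\mypath^+$ is feasible, and choosing the two shortest paths to overlap as much as possible gives the desired upper bound on $\opt$.

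\emph{Hard direction (lower bound).} Let $D$ be a minimum broadcast set containing an active point on $L_2$; I must produce a set of the stated form with at most $|D|$ points. Fix a spanning tree $T$ of $\graph[D]$ rooted at $s$. Since $D$ is a \emph{minimum} connected dominating set, deleting any leaf $\ell$ of $T$ leaves it connected, so it must destroy domination: $\ell$ privately dominates some $p_\ell\notin D$, and by Observation~\ref{obs:corecover}(i) applied to the $s$--$\ell'$ path in $T$, for every other leaf $\ell'$ the point $p_\ell$ lies strictly outside the $x$-range covered by that path. The geometric lever I would use is that, because $w\le\sqrt3/2$, every point with $|x|\le\frac12$ lies in $\disk(s)$ and hence on $L_0\cup L_1$, and no edge joins a point with $x>\frac12$ to one with $x<-\frac12$; thus the ``middle'' of the strip separates its two sides. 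Combining this with the private-domination property and Observation~\ref{obs:corecover}(i)--(ii), I would argue that $T$ has exactly two leaves, one whose private point lies to the right of all of $D$ and one whose private point lies to the left, and that its branch vertex is $s$ itself or a neighbor $v_1$ of $s$ on $L_1$: an additional leaf, or a branch vertex deeper in $T$, would let us delete one branch and reroute through a shortest path to $Q^+$ or $Q^-$, yielding a strictly smaller broadcast set and contradicting minimality. (The hypothesis that some minimum broadcast set reaches $L_2$ is used exactly here: it excludes the degenerate shallow configurations with $D\subseteq L_0\cup L_1$, for which the two-path description may fail.) Hence $D$ splits as a rightward path $\rho^+$ and a leftward path $\rho^-$ meeting only in $s$, or only in $s$ and $v_1$.

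\emph{Finishing.} It remains to replace $\rho^+$ by a shortest $s$--$Q^+$ path and $\rho^-$ by a shortest $s$--$Q^-$ path without increasing the size. From the decomposition, $\rho^+$ dominates every point of $P$ with $x\ge-\frac12$ (the points further left are dominated by $\rho^-$); in particular the rightmost point of $P$, which lies in $Q^+$, is dominated by $\rho^+$, so $\rho^+$ cannot be shortened while still dominating $P\cap\{x\ge-\frac12\}$. A shortest $s$--$Q^+$ path $\mypath^+$ also dominates all of $P\cap\{x\ge-\frac12\}$ (by the easy-direction argument) and has $d_{\graph}(s,Q^+)+1\le|\rho^+|$ points; replacing $\rho^+$ by $\mypath^+$ and, symmetrically, $\rho^-$ by a shortest $s$--$Q^-$ path $\mypath^-$ --- routed through $v_1$ whenever $\rho^+$ and $\rho^-$ shared $v_1$, which preserves the overlap --- gives a broadcast set of the stated form with at most $|D|$ points, hence a minimum one. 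The step I expect to be the real obstacle is the structural claim in the previous paragraph: showing that a minimum broadcast set in a narrow strip is, up to rerouting, just two shortest paths leaving $s$ in opposite directions with branch vertex $s$ or $v_1$. This is precisely where both the width bound $w\le\sqrt3/2$ (via Observation~\ref{obs:corecover} and the ``middle separates the sides'' fact) and the $L_2$ hypothesis are indispensable, and where the exchange argument must be carried out carefully so that feasibility is preserved at every step.
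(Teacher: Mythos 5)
The easy direction (any shortest $s\to Q^+$ path together with any shortest $s\to Q^-$ path is feasible) is fine and matches what the paper uses implicitly via Observation~\ref{obs:corecover}. The gap is in your lower bound. Your key structural claim --- that a minimum broadcast set $D$ (with a point on $L_2$) has a spanning tree with exactly two leaves, because ``an additional leaf \dots would yield a strictly smaller broadcast set, contradicting minimality'' --- is false as stated. Minimum broadcast sets with three or more leaves genuinely exist: e.g.\ $s\to r_1$ followed by two siblings $a,b$ at the right end, where $a$ privately covers some $\bar a$ and $b$ privately covers some $\bar b$ (this is exactly the configuration in Fig.~\ref{fig:swap}). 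Such a set cannot be beaten by a strictly smaller one in general, so no contradiction with minimality arises, and your argument gives you nothing. The lemma only asserts that \emph{some} minimum broadcast set has the two-path form, and the correct logical structure is a size-preserving exchange, not a strict-improvement contradiction: one shows that if $D\cap Q^+=\emptyset$ then there are two ``competing'' disks $\disk(a)$ and $\disk(b)$ near the right end, each with a privately covered point ($\bar a$, resp.\ $\bar b$) to the right of the other, and that replacing $a$ by $\bar b$ yields another broadcast set of the \emph{same} size; iterating (the sum of $x$-coordinates strictly increases, so this terminates) eventually puts a point of $Q^+$ into the set, after which Observation~\ref{obs:corecover}(iii) forces everything on the right to lie on a single $s\to Q^+$ path.

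You also omit the geometric core that makes the exchange valid: one must verify that $\disk(\bar b)$ covers everything that $\disk(a)$ covered to the right of $\disk(b)$ (via $|x(\bar a)-x(\bar b)|<\tfrac12$, which follows from the core-width computations $x(\bar a)>x(b)+\tfrac12$, $x(\bar b)>x(a)+\tfrac12$, $x(\bar b)\le x(b)+1$), and that removing $a$ cannot disconnect the left side (here the $L_2$ hypothesis enters, via $\disk(a)\cap\cS_{\le 0}\subseteq\disk(s)$). A second, slightly different exchange is needed on the left side when $D\cap L_2^-=\emptyset$, to avoid destroying the point $a^+$ already committed to the right path. Your ``finishing'' step is also not quite safe as written: from ``$\rho^+$ dominates all of $P\cap\{x\ge -\tfrac12\}$'' it does not immediately follow that the endpoint of $\rho^+$ lies in $Q^+$ (a backtracking path could cover far-right points with a non-terminal disk), which is why the paper first forces a point of $Q^+$ into the set and only then argues that the rest of the right side is a shortest path. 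So the overall plan is reasonable, but the step you yourself flag as ``the real obstacle'' is exactly the missing content, and the contradiction-with-minimality route you sketch for it does not work.
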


\begin{figure}
\begin{center}
\begin{tikzpicture}[x=0.8 cm,y=0.8 cm,
	strip/.style={black, line width=3pt}
]
\clip (1,0) rectangle (7,1.73);
\draw [very thick] (0,0) -- (7,0);
\draw [very thick] (0,1.73) -- (7,1.73);
\tikzstyle{every node}=[draw,circle,fill=black,minimum size=3pt,
inner sep=0pt]
	\node [ForestGreen,label={[ForestGreen]above:{$s$}}] (s) at (2,1) {};
\node (r1) at (3.3,1.2) {};
\node [label=above:$a$] (r2) at (4.3,0.1) {};
\node [label=below:$b$] (r3) at (4.35,1.6) {};
\node [label=above left:$\bar{a}$] (r4) at (6,0.3) {};
\node [label=below right:$\bar{b}$] (r5) at (5.9,1.5) {};
\draw (s) -- (r1) --(r2);
\draw [dashed] (r2) --(r4);
\draw (r1) -- (r3);
\draw [dashed] (r3) -- (r5);
\filldraw [ForestGreen, fill opacity=0.2] (s.center) circle (2);
\filldraw [black, fill opacity=0.05, dotted] (r1.center) circle (2);
\filldraw [black, fill opacity=0.2] (r2.center) circle (2);
\filldraw [black, fill opacity=0.2] (r3.center) circle (2);
\end{tikzpicture}
\hspace{0.3cm}
\begin{tikzpicture}[x=0.8 cm,y=0.8 cm,
	strip/.style={black, line width=3pt}
]
\clip (1,0) rectangle (8,1.73);
\draw [very thick] (0,0) -- (8,0);
\draw [very thick] (0,1.73) -- (8,1.73);
\tikzstyle{every node}=[draw,circle,fill=black,minimum size=3pt,
inner sep=0pt]
\node [ForestGreen,label={[ForestGreen]above:{$s$}}] (s) at (2,1) {};
\node (r1) at (3.3,1.2) {};
\node [label=above right:$a$] (r2) at (4.3,0.1) {};
\node [label=below:$b$] (r3) at (4.35,1.6) {};
\node [label=right:$\bar{a}$] (r4) at (6,0.3) {};
\node [red,label=below right:$\bar{b}$] (r5) at (5.9,1.5) {};
\draw (s) -- (r1) --(r2);
\draw (r1) -- (r3) -- (r5);
\draw [dashed] (r5) -- (r4);
\filldraw [ForestGreen, fill opacity=0.2] (s.center) circle (2);
\filldraw [black, fill opacity=0.05, dotted] (r1.center) circle (2);
\filldraw [red, fill opacity=0.2] (r5.center) circle (2);
\filldraw [black, fill opacity=0.2] (r3.center) circle (2);
\end{tikzpicture}
\end{center}
\caption{A swap operation. The edges of the broadcast tree are solid lines.}\label{fig:swap}
\end{figure}
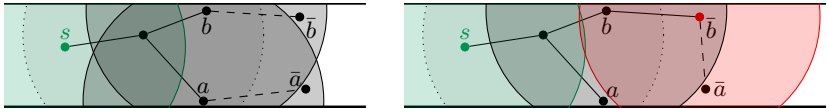

\begin{proof} We begin by showing that there is a minimum broadcast that
intersects both $Q^+$ and $Q^-$.

\myclaim{There is a minimum broadcast set~$D'$ containing a point in $Q^+$.}
{Let $D$ be a minimum broadcast set. Without loss of generality, we may
assume that $L^+_2$ has an active point. It follows that this active point in
$L^+_2$ has a descendant leaf $a \in L^+_{\geq 2}$ in the broadcast tree (the
tree one gets by performing breadth first search from $s$ in the graph
spanned by $D$). Note that $\disk(a)$ does not cover any points in $\cS_{\leq
0}\setminus \disk(s)$, since $a \not\in\core(s)$ and $\core(s)$ has width
$1$.

Suppose that $D\cap Q^+=\emptyset$. Since $a \not\in Q^+$, there is a point
$\bar{b}$ with a larger $x$-coordinate than $a$ which is not covered by
$\disk(a)$, but covered by another disk $\disk(b)$ for some $b \in D$.
Similarly, there must be a point $\bar{a}\in \disk(a)\setminus \disk(b)$ with
$x(\bar{a})>x(b)$ (see Fig.~\ref{fig:swap} for an example). Since
$\disk(b)$ covers $\core(b)$, we have $x(\bar{a})>x(b)+\frac12$, and
similarly $x(\bar{b}) > x(a)+\frac12$.

Note that $x(\bar{b})\le x(b)+1$, so $x(\bar{b})-x(\bar{a})<\frac12$. The
other direction yields $x(\bar{b})-x(\bar{a})>-\frac12$, thus $\bar{a}\in
\disk(\bar{b})$, or in other words, any point covered by $\disk(a)$ to the
right of $\disk(b)$ can be covered by replacing $\disk(a)$ with
$\disk(\bar{b})$. We call such a replacement a \emph{swap operation}. This
operation results in a new minimum broadcast set, because the size of the set
remains the same, and no vertex can become disconnected from the source on
either side: the right side remains connected along the broadcast tree, and
the left is untouched since $\disk(a)\cap \cS_{\leq 0} \subseteq
\disk(s)$. Repeated swap operations lead to a minimum-size broadcast set~$D'$
that contains at least one point from $Q^+$. (The procedure terminates since
the sum of the $x$-coordinates of the active points increases.)}
%
%
The resulting minimum broadcast set $D'$ contains a path $\mypath^+$ from $s$
to $Q^+$. Let $a^+$ be the last point on $\mypath^+$ that falls in $L_1$.
Without loss of generality, we can assume that the first two points of
$\mypath^+$ are $s$ and $a^+$. Let $q^+=Q^+\cap \mypath^+$. By part (iii) of
Observation~\ref{obs:corecover}, the disks around the points of $\mypath^+$
cover all points with $x$ coordinates between $0$ and $x(q^+)+\frac12$; and
$q^+\in Q^+$ implies that it covers all input points with $x$-coordinate
higher than $x(q^+)+\frac12$. Consequently, there are no active points in the
right part outside this path---that is, no active points in $\cS_{\geq 0} \setminus
\big(\disk(s) \cup \mypath^+ \big)$)---since those could be removed while
maintaining the feasibility of the solution.

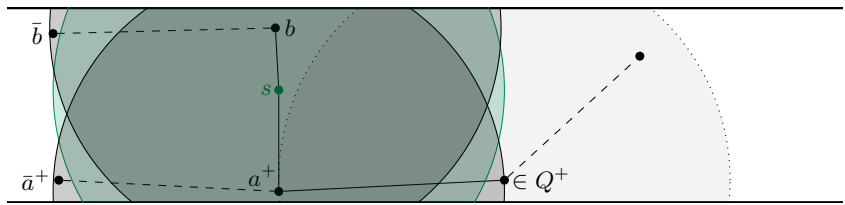
\begin{figure}
\begin{center}
\begin{tikzpicture}[x=1.5 cm,y=1.5 cm,
	strip/.style={black, line width=3pt}
]
\clip (-0.4,0) rectangle (7,1.73);
\draw [very thick] (-1,0) -- (8,0);
\draw [very thick] (-1,1.73) -- (8,1.73);
\tikzstyle{every node}=[draw,circle,fill=black,minimum size=3pt,
inner sep=0pt]
\node [ForestGreen, label=,label={[ForestGreen]left:{$s$}}] (s) at (2,1) {};
\node [label=above left:$a^+$] (r2) at (2,0.1) {};
\node [label=right:$b$] (r3) at (1.97,1.55) {};
\node [label=left:$\bar{a}^+$] (r4) at (0.05,0.2) {};
\node [label=left:$\bar{b}$] (r5) at (0,1.5) {};
\node [label=right:$\in Q^+$] (r6) at (4,0.2) {};
\node (r7) at (5.2,1.3) {};
\draw (r6) -- (r2) -- (s) -- (r3);
\draw [dashed] (r4) -- (r2) (r6) -- (r7);
\draw [dashed] (r3) -- (r5);
\filldraw [ForestGreen, fill opacity=0.2] (s.center) circle (2);
\filldraw [black, fill opacity=0.05, dotted] (r6.center) circle (2);
\filldraw [black, fill opacity=0.2] (r2.center) circle (2);
\filldraw [black, fill opacity=0.2] (r3.center) circle (2);
\end{tikzpicture}
\end{center}
\caption{If $D' \cap L^-_2 = \emptyset$, we can still do
swaps.}\label{fig:apluscoversleft}
\end{figure}

\myclaim{There is a minimum broadcast set~$D'$ containing a point in $Q^+$ and one in~$Q^-$.} {
If there is a disk in $D'\cap L^-_2$ as well, then we can reuse the previous
argument for the other side, and get a broadcast set that contains a path
$\mypath^-$ from $s$ to $Q^-$. Otherwise, we need to be slightly more careful
with our swap operations: we need to make sure not to remove $a^+$. If $a^+
\not \in Q^-$, then we can again use the previous argument: it is possible to
find another disk $b$, and corresponding uniquely covered points $\bar{a}^+$
and $\bar{b}$ (see Fig.~\ref{fig:apluscoversleft}). Note that $b\in
\disk(s)$ since we are in the case $L^-_2=\emptyset$. We argue that $b$ can
be replaced with $\bar{a}^+$: removing $b$ can not disconnect anything from
$s$ on either side, and $\disk(\bar{a}^+)$ covers all points covered by
$\disk(b)$. Repeated swap operations lead to a minimum broadcast set $D''$
that contains points from both $Q^+$ and $Q^-$.}
Let $\mypath^-$ and $a^-$ be defined analogously to how $\mypath^+$ and $a^+$
were defined above. Note that $a^+$ and $a^-$ might coincide. Since
$\mypath^+ \cup \mypath^-$ is connected and covers all points, we have
$D''=\mypath^+ \cup \mypath^-$. To finish the proof, it remains to argue that
we can take $\mypath^+$ and $\mypath^-$ to be shortest paths to $Q^+$ and
$Q^-$. Suppose  $\pi^+$ is not a shortest path to $Q^+$. (The argument for
$\pi^-$ is similar.) Then we can replace $\pi^+ \cup \pi^-$ by $\bar{\pi}^+
\cup \pi^-$ $\overline{\pi}^+$ is a shortest path from $s$ to $Q^+$. Since
$\pi^+$ and $\pi^-$ share at most one point besides~$s$, this replacement
does not increase the size of the solution. \qed
\end{proof}


Lemma~\ref{lem:narrowstructure} below fully characterizes optimal broadcast
sets. To deal with the case where Lemma~\ref{lem:twopaths} does not apply, we
need some more terminology. We say that the disk $\disk(q)$ of an active
point $q$ in a feasible broadcast set is \emph{bidirectional} if there are
two input points $p^- \in L^-_2$ and $p^+ \in L^+_2$ that are covered only by
$\disk(q)$. See points $p$ and $p'$ in Fig.~\ref{fig:strangesolution} for an
example. Note that $q \in \core(s)$, because $\core(s) = \civ{-\frac12,
\frac12} \times \civ{0,w}$ is covered by $\disk(s)$, and our bidirectional
disk has to cover points both in $\loiv{-\infty, -\frac12} \times \civ{0,w}$
and $\roiv{\frac12,\infty}\times\civ{0,w}$. Active disks that are not the
source disk and not bidirectional are called \emph{monodirectional}.
\begin{restatable}{lemma}{lemnarrowstructure}
\label{lem:narrowstructure}
For any input $P$ that has a feasible broadcast set, there is a minimum
broadcast set $D$ that has one of the following structures. \vspace*{-2mm}
\begin{enumerate}
\item[(i)] \emph{Small}: $|D|\leq 2$.
\item[(ii)] \emph{Path-like}: $|D|\ge 3$, and $D$ consists of a shortest path $\mypath^-$
    from $s$ to $Q^-$ and a shortest path $\mypath^+$ from $s$ to $Q^+$;
    $\mypath^+$ and $\mypath^-$ share $s$ and may or may not share their first point after $s$.
\item[(iii)] \emph{Bidirectional}: $|D|=3$, and $D$ contains two
bidirectional disk centers and $s$.
\end{enumerate}
\end{restatable}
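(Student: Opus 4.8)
The plan is to build on Lemma~\ref{lem:twopaths} and to treat separately the single situation it does not cover. First I would split on whether \emph{some} minimum broadcast set has an active point on $L_2$. If one does, Lemma~\ref{lem:twopaths} already gives a minimum broadcast set $D=\mypath^-\cup\mypath^+$ consisting of two shortest paths, which lands in case~(ii) when $|D|\ge 3$ and in case~(i) when $|D|\le 2$. So assume henceforth that \emph{no} minimum broadcast set has an active point on~$L_2$.

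Under this assumption any minimum broadcast set $D$ is very restricted. As $D$ is connected, contains $s\in L_0$, and each edge changes the level by at most one, a point of $D$ on a level $L_{\ge 3}$ would force one on $L_2$; hence $D\subseteq\{s\}\cup L_1$. Since $D$ dominates $P$, this implies $P=L_0\cup L_1\cup L_2$, so $P'=L_2$, and using $\civ{-\tfrac12,\tfrac12}\times\civ{0,w}\subseteq\disk(s)$ we get $L^-_2\subseteq\{x<-\tfrac12\}$ and $L^+_2\subseteq\{x>\tfrac12\}$; moreover $D\cap L_1$ must cover $L^-_2\cup L^+_2$ (connectivity to $s$ being automatic). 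If $|D|\le 2$ we are in case~(i), so assume $|D|\ge 3$, i.e.\ $|D\cap L_1|\ge 2$.

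The core of the argument is a dichotomy on each side. Call $L^+_2$ \emph{cheap} if a single $q\in L_1$ satisfies $\disk(q)\supseteq L^+_2$. The crucial narrow-strip fact is: if $L^+_2\neq\emptyset$ is cheap via $q$, then $\disk(q)$ must reach the leftmost point of $L^+_2$, whose $x$-coordinate exceeds $\tfrac12$, so $x(q)>-\tfrac12$; hence every point of $P'$ lying to the right of $q$ is in $L^+_2$ and is covered by $\disk(q)$, i.e.\ $q\in Q^+$, and $\{s,q\}$ is a shortest path from $s$ to $Q^+$ (since $s\notin Q^+$ when $L^+_2\neq\emptyset$); the symmetric statement holds for $L^-_2$ and $Q^-$. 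If both sides are cheap or empty, gluing the two shortest paths yields a path-like broadcast set of size at most $3$, which, being no larger than $D$, is minimum, so we are in case~(ii) (size at most $2$ would contradict $|D|\ge 3$). If some nonempty side, say $L^+_2$, is not cheap, I would reroute through $L_2$: for $r$ the rightmost point of $L^+_2$ (so $r\in Q^+$ vacuously) and any $q_r\in L_1$ with $r\in\disk(q_r)$, Observation~\ref{obs:corecover}(i) shows the path $s,q_r,r$ covers the strip up to $x(r)+\tfrac12$, hence all of $L^+_2$; combining this with a covering of $L^-_2$ (cheap disk, or the analogous leftward path) and the swap/shortcut bookkeeping from the proof of Lemma~\ref{lem:twopaths} produces a minimum broadcast set with an active point on $L_2$, contradicting our standing assumption. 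The same reroute rules out $|D\cap L_1|\ge 3$: with three essential non-source disks, two of them cover a common side, which is then not cheap. Hence $|D|=3$, say $D=\{s,q_1,q_2\}$.

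It remains to show $\disk(q_1),\disk(q_2)$ are bidirectional. By minimality each $q_i$ uniquely covers some input point, which necessarily lies in $L_2$. If $q_1$ is monodirectional, say it uniquely covers no point of $L^-_2$, then (as $\disk(s)$ misses $L^-_2$) all of $L^-_2$ lies in $\disk(q_2)$, so $q_2\in Q^-$ and $\{s,q_2\}$ is a shortest path from $s$ to $Q^-$; if moreover $\disk(q_1)\supseteq L^+_2$ then $q_1\in Q^+$ and $D$ is already path-like (case~(ii)), and otherwise $L^+_2$ is not cheap and the reroute again contradicts our assumption. Thus both $q_1,q_2$ are bidirectional and $D$ realizes case~(iii). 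The step I expect to be the main obstacle is the reroute argument: showing that whenever a side genuinely needs two or more covering disks, the solution can be rewritten—without increasing its size and while staying connected—into one using an active $L_2$ vertex (so that Lemma~\ref{lem:twopaths} becomes applicable), which requires the swap machinery of that lemma together with the narrow-strip observation that any single disk covering a whole side automatically lies in $Q^{\pm}$; the remaining $x$-coordinate bookkeeping is routine given $L^{\pm}_2\subseteq\{|x|>\tfrac12\}$.
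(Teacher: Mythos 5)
Your overall architecture (reduce to ``no minimum solution touches $L_2$'', deduce $D\subseteq\{s\}\cup L_1$ and $P=L_0\cup L_1\cup L_2$, then split on whether each side is coverable by one $L_1$-disk) is sound and close in spirit to the paper's, and your observations that a ``cheap'' covering disk automatically lies in $Q^{\pm}$ and that the path $s,q_r,r$ covers all of $L^+_2$ are correct. The genuine gap is the reroute step, and it is not merely an obstacle to be smoothed over: as stated it is false. The rerouted set has size $3+|\,\mbox{left cover}\,|$, i.e.\ $4$ when $L^-_2$ is cheap and $5$ when it is not, whereas $|D|$ can equal $3$ when the two non-source disks are bidirectional, each one needed on \emph{both} sides (this is exactly Fig.~\ref{fig:strangesolution}). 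In that configuration no same-size rewrite with an active $L_2$ point exists at all --- that is precisely why case~(iii) appears in the statement --- so the ``contradiction'' you derive whenever a nonempty side is not cheap would make case~(iii) vacuous. Your own final paragraph, which reopens $|D|=3$ and proves bidirectionality, is inconsistent with the unconditional contradiction claimed two sentences earlier, and the sub-case inside it (``otherwise $L^+_2$ is not cheap and the reroute again contradicts our assumption'') invokes the reroute in a situation where it yields a size-$4$ set against a size-$3$ optimum. The same problem defeats your claim that the reroute rules out $|D\cap L_1|\ge 3$: with three pairwise-insufficient bidirectional disks neither side is cheap, and rerouting costs $5$ against $|D|=4$.

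The missing idea is that the rewriting must reuse, not replace, the work the existing disks do: the paper's single swap trades a monodirectional disk $p_i$ for a point $\bar b\in L_2$ that another active disk fails to cover beyond $p_i$ (cost-neutral because $\disk(\bar b)$ inherits everything $\disk(p_i)$ uniquely covered on that side), and its double swap handles three bidirectional disks by deactivating two of them and activating the two $L_2$ points uniquely covered by the third, one on each side. Both operations are exactly size-preserving because they charge one new disk per old disk; your fresh two-hop path $s,q_r,r$ charges two new disks against a side whose old disks may be shared with the other side, and that accounting cannot be made to close. To repair the proof you would need to replace the reroute by these swap arguments (or an equivalent size-preserving exchange), at which point you essentially recover the paper's proof.
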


\begin{proof}
Let $\opt$ be the size of a minimum broadcast set. First consider the case
$\opt\geq 4$. By Lemma~\ref{lem:twopaths} it suffices to prove that there is
an active point in~$L_2$. If $L_3\neq \emptyset$ this is trivially true, so
assume that $L_3=\emptyset$. Since $\opt \geq 4$, it follows that $L^+_2 \neq
\emptyset$ otherwise activating the shortest path from $s$ to the point with
minimum $x$-coordinate is a feasible broadcast set of size at most $3$.
Similarly, $L^-_2 \neq \emptyset$.

If $Q^+ \cap L_1 \neq \emptyset$, then there is a minimum broadcast set with
an active point in $L_2$: we take $s$, a point from $Q^+ \cap L_1$, and a
shortest path from $s$ to the leftmost point (at most two more points). Thus
we may assume that $Q^+$, and similarly, $Q^-$ are disjoint from $L_1$.

Let $\{s,p_1,p_2,p_3\}$ be a subset of a minimum broadcast set. If
$\disk(p_i)$ is monodirectional, then let $\bar{p}_i\in L_2$ be a point
uniquely covered by $p_i$; suppose that $\bar{p}_i \in \cS^+$ (the proof is
the same for the left side). Since $p_i \not \in Q^+$, there is a point $q\in
L_1$ that uniquely covers another point $\bar{q} \in L_2$. We can swap $p_i$
for $\bar{q}$ and get the desired outcome.

If all of $\disk(p_i)$ are bidirectional, then we can do a \emph{double swap
operation}: deactivate both $\disk(p_1)$ and $\disk(p_2)$, and activate
$\disk(a^-)$ and $\disk(a^+)$, where $a^-$ and $a^+$ are points uniquely
covered by $\disk(p_3)$ on the left and right part of the strip. Note that
$\disk(a^+)$ covers both $\cS_{\geq 0} \cap (\disk(p_1) \setminus \disk(s))$
and $\cS_{\geq 0} \cap (\disk(p_2) \setminus \disk(s))$, as we have seen this
happen for regular swap operations in Lemma~\ref{lem:twopaths} -- similarly,
$\disk(a^-)$ covers both $\cS_{\leq 0} \cap (\disk(p_1) \setminus \disk(s))$
and $\cS_{\leq 0} \cap (\disk(p_2) \setminus \disk(s))$.

Therefore, the new broadcast set obtained after the double swap is feasible,
and the size remains unchanged, so it is a minimum broadcast set. Notice that
a single swap or double swap results in a minimum broadcast set that has an
active point in $L_2$.

If the minimum broadcast set has size three, containing
$\{\disk(s),\disk(p_1),\disk(p_2)\}$, then either both $\disk(p_1)$ and
$\disk(p_2)$ are bidirectional, or at least one of them is monodirectional,
so a single swap operation results in a minimum broadcast set with an active
disk in $L_2$, so there is a path-like minimum broadcast set by
Lemma~\ref{lem:twopaths}. \qed
\end{proof}

As it turns out, the bidirectional case is the most difficult one to compute efficiently. (It is similar to \cdsudg in co-comparability graphs, where the case of a connected dominating set of size at most 3 dominates the running time.)

\begin{lemma}\label{lem:strange2center}
In $O(n\log n)$ time we can find a bidirectional broadcast if it exists.
\end{lemma}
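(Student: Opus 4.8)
The plan is to reduce the task to covering the points outside $\disk(s)$ by two unit disks whose centres are input points in $\core(s)$, and then to show that only $O(1)$ candidates have to be tried for one of the two centres.

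\emph{Reduction and preprocessing.} A bidirectional broadcast has the form $\{s,q_1,q_2\}$ with $q_1,q_2\in R:=P\cap\core(s)$, since every bidirectional disk centre lies in $\core(s)$. Such a set is automatically connected, because $\core(s)\subseteq\disk(s)$ makes $s$ adjacent to both $q_1$ and $q_2$; hence it is a feasible broadcast if and only if $P':=P\setminus\disk(s)$ is contained in $\disk(q_1)\cup\disk(q_2)$. In $O(n)$ time I compute $R$ and $P'$, and in $O(n\log n)$ time I sort $P'$ by $x$- and by $y$-coordinate. If $|R|<2$, or if some $p\in P'$ has $|x(p)|>\frac32$ (so that, since $\core(s)$ spans the full height of the strip, $p$ is out of reach of every disk centred in $\core(s)$, and it is not in $\disk(s)$ either), there is no bidirectional broadcast and I stop.

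\emph{Structure from $w\le\sqrt3/2$.} A disk centred at $q\in\core(s)$ contains $\core(q)$, a full-height slab of width $1$, so $\disk(s)\cup\disk(q_1)\cup\disk(q_2)$ already covers the slab between $x$-coordinates $\min(0,x(q_1),x(q_2))-\frac12$ and $\max(0,x(q_1),x(q_2))+\frac12$; the only points of $P'$ still needing attention lie to the right, below the concave ``cap'' $y\mapsto x(q_i)+\sqrt{1-(y-y(q_i))^2}$ of some $q_i$, or symmetrically to the left, above the convex cap $y\mapsto x(q_i)-\sqrt{1-(y-y(q_i))^2}$. Writing $\ell$ and $r$ for the points of $P'$ with smallest and largest $x$-coordinate, one of $q_1,q_2$ must lie in $R\cap\disk(\ell)$ and one in $R\cap\disk(r)$.

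\emph{Main step (the crux).} I would then show that it suffices to try $O(1)$ canonical candidates for the disk covering $r$, and symmetrically for $\ell$. I split into the case where a single disk covers both $\ell$ and $r$ — then $x(r)-x(\ell)\le 2$, and an exchange argument lets one take this disk to be one of the topmost/bottommost points of the convex lens $R\cap\disk(\ell)\cap\disk(r)$ — and the case where $\ell$ and $r$ need different disks — then, when $\ell$ is far out, the centre covering it is forced near the left edge of $\core(s)$ and cannot reach the right end at all, so the other disk must cover essentially all of $\{p\in P':x(p)>0\}$, and symmetrically. In each branch the exchange uses the cap inequalities above together with $w\le\sqrt3/2$ to replace the chosen centre by an extremal point of the relevant lens. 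This is where the difficulty lies: the set of points covered by a core‑centred disk is genuinely two‑dimensional — a disk whose centre has a larger $x$-coordinate need not cover a superset of the points covered by one with a smaller $x$-coordinate, because the heights of the uncovered points matter — so one cannot simply sort $P'$ by $x$ and let one disk take a prefix and the other a suffix; the width bound must be invoked carefully to keep the candidate set of constant size.

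\emph{Finishing.} For each of the $O(1)$ candidate centres $q_1$ I test whether $P'\setminus\disk(q_1)$ is coverable by a single disk centred at a point of $R$: this asks whether $R\cap\bigcap_{p\in P'\setminus\disk(q_1)}\disk(p)\ne\emptyset$, and the common intersection of these congruent disks is a convex region with $O(n)$ circular‑arc edges, computable in $O(n\log n)$ time (or $O(n)$ from the sorted order), after which membership of the sorted points of $R$ is decided by point location in $O(n\log n)$ time total. Running this for all candidates and reporting any pair that works yields the claimed $O(n\log n)$ running time.
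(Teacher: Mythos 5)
Your reduction and preprocessing are fine (both centres must lie in $\core(s)$, connectivity is automatic, and feasibility is exactly coverage of $P'=P\setminus\disk(s)$ by the two disks), and your final subroutine --- testing whether $P'\setminus\disk(q_1)$ can be covered by one core-centred disk via the common intersection $\bigcap_{p}\disk(p)$ --- is sound and close in spirit to what the paper does. But the proof has a genuine gap exactly where you flag it: the claim that $O(1)$ canonical candidates suffice for one of the two centres is asserted, not proved, and it is the entire content of the lemma. Your case analysis does not close it. In the ``different disks'' case you only treat the regime where $\ell$ is so far out that its covering disk reaches nothing on the right; in the generic regime both disks cover nontrivial portions of \emph{both} sides, and the split of, say, the right-hand points between the two disks is a genuine degree of freedom that no single extremal choice of $q_1$ obviously resolves. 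Even in the ``one disk covers both $\ell$ and $r$'' case, replacing $q_1$ by a topmost or bottommost point of the lens $R\cap\disk(\ell)\cap\disk(r)$ changes which other points $\disk(q_1)$ covers in a two-dimensional way (as you yourself observe), so the exchange needs an argument that the partner $q_2$ can still be chosen --- and none is given. As written, the proposal is a plan whose crux is an unestablished structural claim, so it does not prove the lemma.

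For contrast, the paper sidesteps the need for a constant-size candidate set entirely. Its key structural observation is that in a feasible bidirectional solution the points to the left of $\disk(s)$ and the points to the right of $\disk(s)$ are each split between the two disks by a \emph{$y$-threshold}: one disk covers a $y$-prefix on each side, the other the complementary $y$-suffixes (this is where $w\le\sqrt3/2$ and the two-intersection property of circles enter). It then computes, for \emph{every} $p\in\core(s)$, the four quantities ``longest $y$-prefix/suffix of $P^-$ (resp.\ $P^+$) contained in $\disk(p)$'' using a balanced tree over the $y$-sorted points that stores at each node the intersection $\core(s)\cap\bigcap F(\nu)$, answering each query in $O(\log n)$ with fractional cascading; a compatible pair $(p,p')$ is then found by comparing these indices. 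If you want to rescue your approach, the honest route is to first prove the $y$-threshold characterization --- at which point you are essentially reproving the paper's observation and the $O(1)$-candidate detour becomes unnecessary.
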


\begin{figure}
\begin{center}
\includegraphics[scale=0.8]{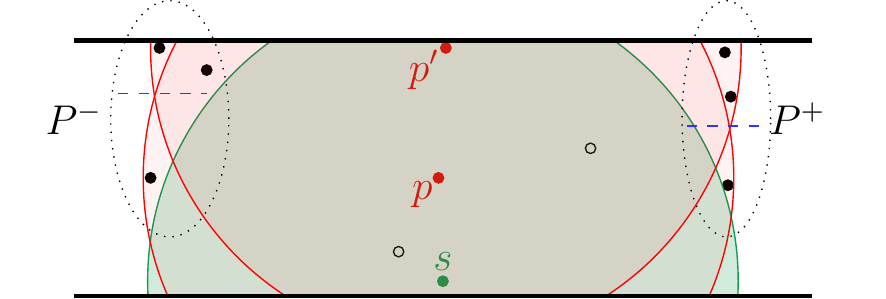}
\end{center}
\caption{A bidirectional broadcast.}\label{fig:strangesolution}
\end{figure}

\begin{proof}
Let $P^- := \{ u_1,u_2,\ldots,u_k\}$ be the set of points to the left of the
source disk~$\disk(s)$, where the points are sorted in increasing $y$-order
with ties broken arbitrarily. Similarly, let $P^+ := \{
v_1,v_2,\ldots,v_l\}$ be the set of points to the right of~$\disk(s)$, again
sorted in order of increasing $y$-coordinate. Define $P^-_{\leq i} := \{
u_1,\ldots,u_i\}$, and define $P^-_{> i}$, and $P^+_{\leq i}$ and $P^+_{> i}$
analogously. Our algorithm is based on the following observation: There is a
bidirectional solution if and only if there are indices $i,j$ and points $p,p'\in
\core(s)$ such that $\disk(p)$ covers $P^-_{\leq i}\cup P^+_{\leq j}$ and
$\disk(p')$ covers $P^-_{> i}\cup P^+_{> j}$; see Fig.~\ref{fig:strangesolution}.

Now for a point $p\in\core(s)$, define
$\Zleftsmall(p) := \max \{ i : P^-_{\leq i} \subset \disk(p) \}$ and
$\Zleftlarge(p) := \min \{ i : P^-_{> i} \subset \disk(p) \}$, and
$\Zrightsmall(p) := \max \{ i : P^+_{\leq i} \subset \disk(p) \}$, and
$\Zrightlarge(p) := \min \{ i : P^+_{> i} \subset \disk(p) \}$.
Then the observation above can be restated as:
\begin{quotation}
\noindent There is a bidirectional solution if
and only if there are points $p,p'\in \core(s)$ such that $\Zleftsmall(p)
\geq \Zleftlarge(p')$ and $\Zrightsmall(p) \geq \Zrightlarge(p')$.
\end{quotation}
It is easy to find such a pair---if it exists---in
$O(n\log n)$ time once we have computed the values $\Zleftsmall(p)$,
$\Zleftlarge(p)$, $\Zrightsmall(p)$, and $\Zrightlarge(p)$ for all points
$p\in\disk(s)$. It remains to show that these values can be computed in
$O(n\log n)$ time.

Consider the computation of~$\Zleftsmall(p)$; the other values can be
computed similarly. Let $\tree$ be a balanced binary tree whose leaves store
the points from $P^-$ in order of their $y$-coordinate. For a node $\nu$ in
$\tree$, let $F(\nu) := \{ \disk(u_i) : \mbox{ $u_i$ is stored in the subtree
rooted at $\nu$} \}$. We start by computing at each node~$\nu$ the
intersection of the disks in $F(\nu)$. More precisely, for each $\nu$ we
compute the region~$I(\nu) := \core(s) \cap \bigcap F(\nu)$. Notice that
$I(\nu)$ is $y$-monotone and convex, and each disk $\disk(u_i)$ contributes
at most one arc to~$\partial I(\nu)$. (Here $\partial I(\nu)$ refers to
the boundary of $I(\nu)$ that falls inside $\cS$.) Moreover,
$I(\nu) = I(\lchild(\nu)) \cap I(\rchild(\nu))$.
Hence, we can compute the regions $I(\nu)$ of all nodes $\nu$ in $\tree$ in
$O(n\log n)$ time in total, in a bottom-up manner. Using the tree~$\tree$ we
can now compute $\Zleftsmall(p)$ for any given~$p\in\core(s)$ by searching
in~$\tree$, as follows. Suppose we arrive at a node~$\nu$.
If $p\in I(\lchild(\nu))$, then descend to $\rchild(\nu)$,
otherwise descend to $\lchild(\nu)$. The search stops when we reach a leaf, storing a point~$u_i$. One easily verifies
that if $p\in \disk(u_i)$ then $\Zleftsmall(p) = i$, otherwise
$\Zleftsmall(p) = i-1$.

Since $I(\nu)$ is a convex region, we can check if $p\in I(\nu)$
in $O(1)$ time if we can locate the position of $p_y$ in the sorted list of
$y$-coordinates of the vertices of $\partial I(\nu)$. We can locate~$p_y$ in
this list in $O(\log n)$ time, leading to an overall query time of $O(\log^2
n)$. This can be improved to $O(\log n)$ using fractional
cascading~\cite{FracCas}. Note that the application of fractional cascading
does not increase the preprocessing time of the data structure. We conclude
that we can compute all values $\Zleftsmall(p)$ in $O(n\log n)$ time in
total. \qed
\end{proof}


In order to compute a minimum broadcast, we can first check for small and bidirectional solutions.  To find path-like solutions, we first compute the sets $Q^-$ and $Q^+$, and compute shortest paths starting from these sets back to the source disk. The path computation is very similar to the shortest path algorithm in UDGs by Cabello and Jej{\v c}i{\v c}~\cite{Cabello15}.

\begin{lemma}
\label{lem:union_intersection}
Let $P$ and $Q$ be two point sets in $\Reals^2$.
Then both $Q \cap \big( \bigcup_{p\in P} \disk(p) \big)$ and
$Q \cap \big( \bigcap_{p\in P} \disk(p) \big)$ can be computed in
$O((|P|+|Q|)\log |P|)$ time.
\end{lemma}
\begin{proof}
A point $q\in Q$ lies in $\bigcup_{p\in P} \disk(p)$ if and only if the
distance from $q$ to its nearest neighbor in~$P$ is at most~1. Hence we can
compute $Q \cap \big( \bigcup_{p\in P} \disk(p) \big)$ by computing the
Voronoi diagram of~$P$, preprocessing it for point location, and performing a
query with each~$q\in Q$. This can be done in $O((|P|+|Q|)\log |P|)$ time in
total~\cite{Berg08,optPL86}. To compute $Q \cap \big( \bigcup_{p\in P}
\disk(p) \big)$ we proceed similarly, except that we use the farthest-point
Voronoi diagram~\cite{Berg08}. \qed
\end{proof}

\begin{lemma}
\label{lem:computeQ}
We can compute the sets $Q^+$ and $Q^-$ in $O(n\log n)$ time.
\end{lemma}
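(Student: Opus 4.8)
The plan is to reduce the computation of $Q^+$ (the case of $Q^-$ is symmetric) to a small number of applications of Lemma~\ref{lem:union_intersection}. Recall that $p\in Q^+$ iff $p'p\in E$ for every $p'\in P'=P\setminus\disk(s)$ with $x(p')>x(p)$, i.e.\ iff $p$ lies in $\disk(p')$ for every such $p'$. Equivalently, $p\in\disk(p')$ for all $p'\in P'$ with $x(p')>x(p)$. The obstacle to applying Lemma~\ref{lem:union_intersection} directly is that the set of "constraining" points $\{p'\in P' : x(p')>x(p)\}$ depends on $p$ itself, so we cannot simply intersect one fixed family of disks.

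First I would sort $P$ by $x$-coordinate; write $P=\{r_1,\dots,r_n\}$ with $x(r_1)\le\cdots\le x(r_n)$, and let $P'=P\setminus\disk(s)$. For each $p=r_i$, the relevant constraint set is $C_i := \{p'\in P' : x(p')>x(r_i)\}$, a suffix of the sorted order (restricted to $P'$), and $r_i\in Q^+$ iff $r_i\in\bigcap_{p'\in C_i}\disk(p')$. The key idea is to exploit monotonicity: define $K_i := \bigcap_{p'\in C_i}\disk(p')$; then $K_1\subseteq K_2\subseteq\cdots$ since $C_1\supseteq C_2\supseteq\cdots$. However, computing all these nested intersections explicitly is too expensive. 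Instead I would use the same balanced-binary-tree-over-$y$-sorted-points device as in the proof of Lemma~\ref{lem:strange2center}: build a balanced binary tree $\tree$ whose leaves store the points of $P'$ in $x$-sorted order, and at each internal node $\nu$ store the convex region $I(\nu)=\cS\cap\bigcap\{\disk(p'):p'\text{ in subtree of }\nu\}$; this takes $O(n\log n)$ time bottom-up, each region being $y$-monotone and convex with one arc per disk. A suffix $C_i$ decomposes into $O(\log n)$ canonical subtrees, so to test $r_i\in Q^+$ I would walk down $\tree$ to the leaf-position of $x(r_i)$, collecting the $O(\log n)$ canonical nodes hanging off to the right, and check that $r_i$ lies in each of their stored regions. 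Each region test is a point location in a $y$-monotone convex region, $O(\log n)$ time, for $O(\log^2 n)$ per point and $O(n\log^2 n)$ overall — and, exactly as in Lemma~\ref{lem:strange2center}, fractional cascading across the regions along a root-to-leaf path removes one log, giving $O(n\log n)$.

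Alternatively — and this is cleaner, so I would present it this way — I would avoid the per-point query entirely by processing $r_1,r_2,\dots$ in $x$-order while maintaining the running intersection of the suffix of $\disk(p')$'s. Observe that $r_i\in Q^+$ iff $r_i\in\disk(r_j)$ for every $r_j\in P'$ with $j>i$; equivalently, letting $M_i=\bigcap\{\disk(r_j): j>i,\ r_j\in P'\}$, we need $r_i\in M_i$. Since membership in a convex region can be batched, I would instead split the test: $r_i\in\disk(r_j)$ for all such $j$ iff the farthest point of $\{r_j : j>i,\ r_j\in P'\}$ from $r_i$ is within distance $1$. Running a sweep from right to left and maintaining the farthest-point Voronoi diagram of the processed suffix of $P'$ would let each query be answered in $O(\log n)$, but incremental farthest-point Voronoi maintenance is itself delicate. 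The honest main obstacle is therefore purely a data-structuring one: turning the "intersection over an index-dependent suffix" into something supporting $O(\log n)$-per-point queries. I expect the tree-of-convex-regions plus fractional cascading route (mirroring Lemma~\ref{lem:strange2center} almost verbatim) to be the safest, so the write-up would: (1) define $P'$ and the $x$-sorted order; (2) state the suffix characterization of $Q^+$; (3) build $\tree$ with the regions $I(\nu)$ in $O(n\log n)$, citing the argument from Lemma~\ref{lem:strange2center}; (4) describe the $O(\log n)$-node canonical decomposition of each suffix and the fractional-cascading point-location query; (5) conclude $O(n\log n)$ total, and note $Q^-$ is symmetric. \qed
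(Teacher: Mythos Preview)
Your approach is correct but takes a substantially different route from the paper. The paper's proof exploits the narrow-strip structure much more aggressively: it observes that if $p^{*}$ is the rightmost input point, then (i) every point in $[x(p^{*})-\tfrac12,x(p^{*})]\times[0,w]$ is automatically in $Q^+$ because its core already reaches $x(p^{*})$; (ii) no point with $x$-coordinate below $x(p^{*})-1$ can be in $Q^+$ since it cannot even reach $p^{*}$; and (iii) for the remaining candidates in $\cR=[x(p^{*})-1,x(p^{*})-\tfrac12)\times[0,w]$, their cores already cover $\cR$ itself, so membership in $Q^+$ reduces to lying in $\bigcap_{q\in \cR'\cap P}\disk(q)$ where $\cR'=[x(p^{*})-\tfrac12,x(p^{*})]\times[0,w]$. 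That is a single invocation of Lemma~\ref{lem:union_intersection}, and the whole argument is a few lines.

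Your tree-of-intersections construction works and is essentially strip-agnostic, which is a virtue, but it is considerably heavier machinery for the same bound. One small caveat: the fractional cascading step is not \emph{exactly} as in Lemma~\ref{lem:strange2center}. There the query walks a single root-to-leaf path and does one lookup per node; here you need lookups at the $O(\log n)$ canonical right-siblings hanging off that path. Fractional cascading on a bounded-degree tree does support this, but you should say so explicitly rather than appeal to the earlier lemma verbatim. Even without that refinement, your $O(n\log^2 n)$ bound is fine and the idea is sound; it is just that the paper gets $O(n\log n)$ with almost no data-structural effort by first shrinking the candidate set for $Q^+$ to a width-$\tfrac12$ slab.
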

\begin{proof}
We show how we can compute $Q^+$, the algorithm for $Q^-$ is analogous. Let
$p$ be an input point with the highest $x$-coordinate. Notice that all input
points in $\civ{x(p)-\frac12,x(p)}\times \civ{0,w}$ belong to $Q^+$ since their
core contains all points with higher coordinates. Points in
$\roiv{x(p)-\frac32,x(p)-1}\times \civ{0,w}$ cannot belong to $Q^+$, since they
cannot cover $p$.
It remains to find the points inside the region $\cR = \roiv{x(p)-1, x(p) -
\frac12}\times \civ{0,w}$ that belong to $Q^+$. The core of a
point in $\cR$ covers $\cR$, so it is sufficient to check whether any given
point covers all points in $\cR' = \civ{x(p)-\frac12,x(p)} \times \civ{0,w}$.
Thus we need to find the set $(\cR\cap P) \cap \big(\bigcap_{p \in
\cR'\cap P} \disk(p)\big)$, which can be computed in $O(n\log n)$ time by
Lemma~\ref{lem:union_intersection}. \qed
\end{proof}

\begin{restatable}{theorem}{thmnarrowbroadcast}\label{thm:narrowbroadcast}
The broadcast problem inside a strip of width at
most~$\sqrt{3}/2$ can be solved in $O(n\log n)$ time.
\end{restatable}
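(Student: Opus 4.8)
The plan is to assemble the theorem from the structural and algorithmic lemmas established above. By Lemma~\ref{lem:narrowstructure}, any input with a feasible broadcast set admits a minimum broadcast set of one of three types: \emph{small} ($|D|\le 2$), \emph{path-like}, or \emph{bidirectional}. So it suffices to compute, for each type, the best solution of that type (or detect that none exists), all in $O(n\log n)$ time, and then return the smallest one found.

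First I would handle the small case: there are only $O(n)$ candidates for a one- or two-point broadcast set containing $s$ (namely $s$ alone, or $s$ together with one other point), and checking whether $\{s\}$ or $\{s,p\}$ dominates $P$ amounts to a covered-points query, which by Lemma~\ref{lem:union_intersection} (or directly) takes $O(n\log n)$ time per fixed candidate; but we must be slightly careful not to spend $O(n^2\log n)$ over all $p$. In fact $\{s,p\}$ is feasible only if $\disk(s)\cup\disk(p)\supseteq P$ and $p\in\disk(s)$, so $P\setminus\disk(s)$ must be contained in a single unit disk centered at a point of $\disk(s)$; this is exactly a ``stabbing by one disk'' condition that can be tested in $O(n\log n)$ time, e.g.\ by checking whether $\bigcap_{q\in P\setminus\disk(s)}\disk(q)$ meets $\disk(s)$ via Lemma~\ref{lem:union_intersection}. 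For the bidirectional case we invoke Lemma~\ref{lem:strange2center} directly, which already gives $O(n\log n)$.

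The path-like case is the most involved. By Lemma~\ref{lem:computeQ} we compute $Q^+$ and $Q^-$ in $O(n\log n)$. Then, following the outline before Lemma~\ref{lem:union_intersection}, we must compute a shortest path from $s$ to $Q^+$ and one from $s$ to $Q^-$ in the unit-disk graph $\graph$ restricted to the narrow strip. The point is that we never build $\graph$ explicitly (it may have $\Omega(n^2)$ edges); instead we do a BFS level by level, where to expand level $L_i^+$ to level $L_{i+1}^+$ we use Observation~\ref{obs:corecover}(ii) — neighbouring levels overlap by at most $\tfrac12$ in $x$-coordinate, so the points newly reached form a contiguous $x$-range — together with a union-of-disks covered-point query (Lemma~\ref{lem:union_intersection}) to identify exactly the not-yet-visited points covered by the current frontier. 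A direct charging argument shows each point is involved in $O(1)$ such queries across the whole BFS, giving $O(n\log n)$ total; this is precisely the Cabello–Jej\v{c}i\v{c} shortest-path-in-UDG technique referenced in the text, simplified by the strip structure. Once the two shortest paths $\mypath^+$ and $\mypath^-$ are found, we take $D=\mypath^+\cup\mypath^-$ (sharing $s$ and possibly their first vertex), which by Observation~\ref{obs:corecover}(i) covers all of $P$ and is connected, and its size is minimal among path-like solutions because each of $\mypath^\pm$ is a shortest path and they overlap in at most one vertex besides $s$. Finally, output the smallest of the small, bidirectional, and path-like solutions; correctness follows from Lemma~\ref{lem:narrowstructure}, and the total running time is $O(n\log n)$.

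\textbf{Main obstacle.} The one step needing genuine care is the implicit BFS for the path-like case: one must argue that the level-by-level expansion using covered-point queries both correctly identifies hop-distance levels (which is where Observation~\ref{obs:corecover}(ii) is essential, to bound the interaction between consecutive frontiers) and runs in near-linear time rather than naively re-scanning all points at every level. Everything else is either a direct citation of an earlier lemma or a routine $O(n\log n)$ geometric computation.
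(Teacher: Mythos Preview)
Your high-level structure is exactly the paper's: handle small, bidirectional, and path-like solutions separately via Lemma~\ref{lem:narrowstructure}, using Lemma~\ref{lem:strange2center} for the bidirectional case, Lemma~\ref{lem:computeQ} for $Q^\pm$, and a level-by-level BFS with Lemma~\ref{lem:union_intersection} queries for the paths. Two points need correction, one minor and one genuine.

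\emph{Minor slip (size-2 case).} Testing whether $\bigcap_{q\in P\setminus\disk(s)}\disk(q)$ meets the disk $\disk(s)$ is not the right condition: the second active point must be an \emph{input} point in $\disk(s)$. What you actually need is whether $\big(P\cap\disk(s)\big)\cap\bigcap_{q\in P\setminus\disk(s)}\disk(q)$ is nonempty, which is exactly what Lemma~\ref{lem:union_intersection} computes. (This is how the paper does it.)

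\emph{Genuine gap (path-like case).} Your sentence ``its size is minimal among path-like solutions because each of $\mypath^\pm$ is a shortest path and they overlap in at most one vertex besides $s$'' is incorrect: two \emph{arbitrarily chosen} shortest paths $\mypath^+$ and $\mypath^-$ may fail to share their first vertex after $s$ even when a sharing choice exists, and then your output is one larger than the optimal path-like solution. Forward BFS from $s$ gives you the hop-distances $d(s,\cdot)$ but not, for each $p\in L_1$, whether $p$ lies on \emph{some} shortest $s\to Q^+$ path. The paper addresses this by running the BFS \emph{backward} from $Q^+$ (resp.\ $Q^-$), computing sets $Q^+_i$ of points at hop-distance $i-1$ from $Q^+$; when the loop terminates at $Q^+_i\cap\disk(s)\neq\emptyset$, that intersection is precisely the set of valid second vertices of a shortest $s\to Q^+$ path, and one then simply checks whether $(Q^+_i\cap\disk(s))\cap(Q^-_j\cap\disk(s))\neq\emptyset$ to decide between size $i+j$ and $i+j+1$. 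Your forward BFS can be repaired by adding this backward pass (or a second BFS from each of $Q^+,Q^-$), but as written the argument does not deliver a minimum path-like solution. The $O(n\log n)$ timing argument is otherwise the same: the paper shows $T_i\subseteq Q^+_i\cup Q^+_{i+1}\cup Q^+_{i+2}$, the analogue of your charging claim.
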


\begin{proof}
The algorithm can be stated as follows. It is best to read this pseudocode  in parallel with the explanation and analysis below.\\[5mm]
\textsc{Broadcast-In-Narrow-Strip}$(s,P)$ \\[-1.5em]
\begin{enumerate}
\item \label{step:size-check} Check if there is a small or bidirectional solution.
    If yes, report the solution and terminate.
\item \label{step:init+} Compute $Q^+$ using Lemma~\ref{lem:computeQ}. Set $i:=1$, $Q^+_1 := Q^+$, and $P' := P\setminus Q^+_1$.
\item \label{step:compute-levels+} Repeat the following until~$Q^+_i \cap
\disk(s) \neq \emptyset$ or $Q^+_i = \emptyset$.
    \begin{enumerate}
        \item Set $i:=i+1$ and determine $T_i := \{ t\in P': x(t) \geq
              \min_{p\in Q^+_{i -1}} x(p)-1 \}$.
        \item Compute $Q^+_i := T_i \cap \big( \bigcup_{p\in Q^+_{i -1}}
              \disk(p) \big)$ using Lemma~\ref{lem:union_intersection}, and
              set $P' := P'\setminus Q^+_i$.
    \end{enumerate}
\item \label{step:infeasible+} If $Q^+_i = \emptyset$, return failure.
\item \label{step:init-} Compute $Q^-$ using Lemma~\ref{lem:computeQ}. Set $j:=1$, $Q^-_1 := Q^-$, and $P' := P\setminus Q^-_1$.
\item \label{step:compute-levels-} Repeat the following until~$Q^-_j \cap
\disk(s) \neq \emptyset$ or $Q^-_j = \emptyset$.
    \begin{enumerate}
        \item Set $j:=j +1$ and determine $T_j := \{ t\in P': x(t) \leq
        \max_{p\in Q^-_{j -1}} x(p)+1 \}$.
        \item Compute $Q^-_j := T_i \cap \big( \bigcup_{p\in Q^-_{j -1}}
              \disk(p) \big)$ using Lemma~\ref{lem:union_intersection}, and
              set $P' := P'\setminus Q^-_j$.
    \end{enumerate}
\item \label{step:infeasible-} If $Q^-_j = \emptyset$, return failure.
\item \label{step:final} If $Q^+_i \cap Q^-_j = \emptyset$ then
report a solution of size $i+j+1$, namely the points of a shortest path from
$s$ to $Q^+_i$ and a shortest path from $s$ to $Q^-_j$.
Otherwise report a solution of size $i+j$: take an arbitrary point~$p$ in
$Q^+_i \cap Q^-_j$, and report $s$ plus a shortest path from
$p$ to $Q^+_i$ and a shortest path from $p$ to $Q^-_j$.
\end{enumerate}

In order to execute step~\ref{step:size-check}, we first check whether there
is a minimum broadcast set of size one or two. This is very easy for size
one: we just need to check whether the source disk covers every point or not
in $O(n)$ time. For size two, we can compute the intersection of all disks
centered outside $\disk(s)$, and check whether any input point in $\disk(s)$
falls in this intersection. This requires $O(n\log n)$ time by
Lemma~\ref{lem:union_intersection}. Finally, we need to check whether there
is a feasible minimum broadcast with the bidirectional structure.
Lemma~\ref{lem:strange2center} shows that this is also possible in $O(n\log
n)$ time.

In steps~\ref{step:init+} and~\ref{step:compute-levels+}, we compute a
shortest $s \to Q^+$ path backwards. We start from $Q^+$, and put the points
into different sets $Q_i^+$ according to their hop distance to $Q^+$:
we put $p$ into $Q^+_i$ if and only if
the shortest path from $p$ to $Q^+$ contains $i-1$ hops. Notice that in
step~\ref{step:compute-levels+} it is indeed sufficient to consider points
from $T_i$, since a point from the level $Q^+_i$ must be at distance at most
$1$ from points of $Q^+_{i-1}$, so it has $x$- coordinate at least
$\min_{p\in Q^+_{i-1}} x(p)-1$.

If $Q^+_i=\emptyset$, then there is no path
from $Q^+$ to $s$---the graph is disconnected---so there is no feasible
broadcast set. Otherwise, after the loop in step~\ref{step:compute-levels+} terminates the
shortest $s\to Q^+$ path has length exactly equal to the loop variable, $i$.
Moreover, the set of possible second vertices on an $s\to Q^+$ path is
$\disk(s)\cap Q^+_i$. The same can be said for the next two steps: the
shortest $s \to Q^-$ path has length $j$, and the set of possible second
vertices is $\disk(s)\cap Q^+_i$. In the final step, we check if $Q^+_i \cap
Q^-_j$ is empty or not. If it is empty, then by our previous observation,
there are no shortest $s\to Q^+$ and $s\to Q^-$ paths that share their second
vertex, so the two paths can only share $s$, resulting in a minimum broadcast
set of size $i+j+1$; otherwise, any point in $Q^+_i \cap Q^-_j$ is suitable
as a shared second point, resulting in a minimum broadcast set of size $i+j$.

It remains to argue that steps~\ref{step:init+}--\ref{step:final} require
$O(n\log n)$ time. We know that a single iteration of the loop in
step~\ref{step:compute-levels+} takes $O\big((|Q^+_{i-1}|+|T_i|)\log
|Q^+_{i-1}|\big)$ time by Lemma~\ref{lem:union_intersection}. We claim that
$T_i \subseteq Q^+_{i} \cup Q^+_{i+1} \cup Q^+_{i+2}$, from which the bound
on the running time follows. To prove the claim, let $p\in Q^+_{i-1}$ be a
point with minimal $x$-coordinate (see Fig.~\ref{fig:Qlevels}). All points
$p'$ with $x(p') \geq x(p)-\frac12$ are in $Q^+_{\leq i}$. Thus any point
$p''\in Q^+_{i+1}$ has $x(p'') < x(p)-\frac12$. But then any point with
$x$-coordinate at least $x(p)-1$ also has $x$-coordinate at least $x(p'') -
\frac12$, which means it is in $Q^+_{\leq i+2}$. Thus both loops require
$O(n\log n)$ time. Finally, we note that we can easily maintain some extra
information in steps~\ref{step:init+}--\ref{step:infeasible-} so the shortest
paths we need in step~\ref{step:final} can be reported in linear time.
\qed \end{proof}

\begin{figure}
\begin{center}
\begin{tikzpicture}[x=1.3cm,y=1.3cm]
\clip (1.4,-0.65) rectangle (5,2.15);
\draw [very thick] (0,0) -- (8,0);
\draw [very thick] (0,1.73) -- (8,1.73);
\begin{large}
\tikzstyle{p3n}=[draw,circle,fill=ForestGreen,minimum size=5pt, inner
sep=0pt]
\tikzstyle{p2n}=[draw,circle,fill=red,minimum size=5pt, inner sep=0pt]
\tikzstyle{p1n}=[draw,circle,fill=blue,minimum size=5pt, inner sep=0pt]
\tikzstyle{en}=[draw,circle,fill=white,minimum size=5pt, inner sep=0pt]

\node [en,label=left:$p$] (p) at (4,0.8) {};
\draw [dashed] (4,0) -- (4,1.73);
\draw [->,very thick,dotted] (2,1.73) -- (2,-0.5) -- (2.7,-0.5);
\node at (2.2,-0.3) {$T_i$};

\node [en] at (4,0.8) {};
\node [en] at (4.5,1.3) {};
\node [en] at (4.7,0.2) {};
\node [en] at (4.2,1.6) {};
\node [p1n] at (3.8,1.1) {};
\node [p1n] at (3.7,0.4) {};
\node [p1n] at (3.92,1.5) {};
\node [p1n] at (4.1,0.1) {};
\node [p2n] at (3,0.13) {};
\node [p2n] at (2.5,0.3) {};
\node [p2n] at (2.2,0.7) {};
\node [p2n] at (2.4,0.2) {};
\node [p3n] at (1.8,1.3) {};
\node [p3n] at (2.1,1.65) {};
\node [p3n] at (1.6,0.6) {};

\node [black] at (4.7,2) {$Q^+_{i-1}$};
\node [blue] at (3.7,2) {$Q^+_{i}$};
\node [red] at (2.7,2) {$Q^+_{i+1}$};
\node [ForestGreen] at (1.7,2) {$Q^+_{i+2}$};
\end{large}
\end{tikzpicture}
\end{center}
\caption{The levels $Q^+_i$ computed by the algorithm.}
\label{fig:Qlevels}
\end{figure}
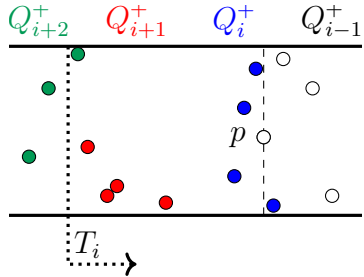

\begin{remark}
If we apply this algorithm to every disk as source, we get an $O(n^2\log n)$ algorithm for \cds in narrow strip UDGs. We can compare this to $O(mn)$, the running time that we get by applying the algorithm for co-comparability graphs~\cite{Breu96}. Note that in the most difficult case, when the size of the minimum connected dominating set is at most $3$, the unit disk graph has constant diameter, which implies that the graph is dense, i.e., the number of edges is $m=\Omega(n^2)$. Hence, we get an (almost) linear speedup for the worst-case running time.
\end{remark}

\section{Minimum-size \texorpdfstring{$h$}{h}-hop broadcast in a narrow strip}
\label{se:hop-bounded}
In the hop-bounded version of the problem we are given $P$ and a
parameter~$h$, and we want to compute a broadcast set~$D$ such that every
point~$p\in P$ can be reached in at most $h$~hops from~$s$. In other words,
for any $p\in P$, there must be a path in $\graph$ from $s$ to $p$ of
length at most~$h$, all of whose vertices, except possibly~$p$ itself, are
in~$D$. We start by investigating the structure of optimal solutions in this
setting, which can be very different from the non-hop-bounded setting.

As before, we partition $P$ into levels $L_i$ according to the hop distance
from $s$ in the graph~$\graph$, and we define $L_i^+$ and $L_i^-$ to be the subsets of
points at level~$i$ with positive and nonnegative $x$-coordinates,
respectively. Let $L_t$ be the highest non-empty level. If $t>h$ then clearly
there is no feasible solution.

If $t<h$ then we can safely use our solution for the non-hop-bounded case,
because the non-hop-bounded algorithm gives a solution which contains a path
with at most $t+1$ hops to any point in~$P$. This follows from the structure
of the solution; see Lemma~\ref{lem:narrowstructure}. (Note that it is
possible that the solution given by this algorithm requires $t+1$ hops to
some point, namely, if $Q^+\cup Q^- \subseteq L_t$.) With the $t<h$ case
handled by the non-hop-bounded algorithm, we are only concerned with the
case~$t=h$.

We deal with \emph{one-sided} inputs first, where the source is the leftmost
input point. Let $\graph^{*}$ be the directed graph obtained by deleting
edges connecting points inside the same level of $\graph$, and orienting all
remaining edges from lower to higher levels. A \emph{Steiner arborescence} of
$\graph^{*}$ for the terminal set~$L_h$ is a directed tree rooted at~$s$ that
contains a (directed) path $\mypath_p$ from $s$ to $p$ for each $p\in L_h$.
From now on, whenever we speak of \emph{arborescence} we refer to a Steiner
arborescence in~$\graph^*$ for terminal set~$L_h$. We define the \emph{size}
of an arborescence to be the number of internal nodes of the arborescence.
Note that the leaves in a minimum-size arborescence are exactly the points in
$L_h$: these points must be in the arborescence by definition, they must be
leaves since they have out-degree~zero in~$\graph^*$, and leaves that are not
in $L_h$ can be removed.

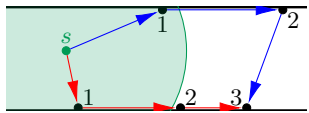
\begin{figure}
\begin{center}
\begin{tikzpicture}[x=0.8cm,y=0.8cm]
\clip (1,0) rectangle (6,1.73);
\draw [very thick] (0,0) -- (8,0);
\draw [very thick] (0,1.73) -- (8,1.73);
\tikzstyle{every node}=[draw,circle,fill=black,minimum size=3pt,
inner sep=0pt]
\node [ForestGreen,label={[ForestGreen]above:{$s$}}] (s) at (2,1) {};
\node [label=below:$1$](r1) at (3.6,1.68) {};
\node [label=above right:$1$](r1p) at (2.2,0.05) {};
\node [label=above right:$2$] (r2p) at (3.9,0.05) {};
\node [label=below right:$2$] (r2) at (5.6,1.68) {};
\node [label=above left:$3$] (r3) at (5,0.05) {};
\filldraw [ForestGreen, fill opacity=0.2] (s.center) circle (2);
\draw [blue,-{Latex[length=3mm, width=1.5mm]}] (s) -- (r1);
\draw [blue,-{Latex[length=3mm, width=1.5mm]}] (r1) -- (r2);
\draw [blue,-{Latex[length=3mm, width=1.5mm]}] (r2) -- (r3);
\draw [red,-{Latex[length=3mm, width=1.5mm]}] (s) -- (r1p);
\draw [red,-{Latex[length=3mm, width=1.5mm]}] (r1p) -- (r2p);
\draw [red,-{Latex[length=3mm, width=1.5mm]}] (r2p) -- (r3);
\end{tikzpicture}
\end{center}
\caption{Two different arborescences, with vertices labeled with their level.
The red arborescence does not define a feasible broadcast for $h=3$, since it would take four hops to reach the top right node.}
\label{fig:weirdarborescence}
\end{figure}

\skb{
\begin{remark}
In the minimum Steiner Set problem, we are given a graph $G$ and a vertex subset $T$ of terminals, and the goal is to find a minimum-size vertex subset $S$ such that $T\cup S$ induces a connected subgraph. This problem has a polynomial algorithm in co-comparability graphs~\cite{Breu96}, and therefore in narrow strip unit disk graphs. However, the broadcast set given by a solution does not fit our hop bound requirements. Hence, we have to work with a different graph (e.g.  the edges within each level $L_i$ have been removed), and this modified graph is not necessarily a co-comparability graph.
\end{remark}
}

Lemma~\ref{lem:arborescenceswork} below states that either we have a 
path-like solution---for the one-sided case a path-like solution is
a shortest $s \to Q^+$ path--- or any minimum-size arborescence
defines a minimum-size broadcast set. The latter solution
is obtained by activating all non-leaf nodes of the arborescence. We
denote the broadcast set obtained from an arborescence~$A$ by~$D_A$.

\begin{restatable}{lemma}{lemarborescenceswork}
\label{lem:arborescenceswork}
Any minimum-size Steiner arborescence for the terminal set $L_h$
defines a minimum broadcast set, or there is a path-like minimum broadcast
set.
\end{restatable}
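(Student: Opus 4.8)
The plan is to establish the dichotomy by a swap/exchange argument analogous to the one used in Lemma~\ref{lem:twopaths}, but now carried out on arborescences rather than on pairs of paths. First I would fix a minimum broadcast set $D$ (with $s\in D$), and observe that since we are in the one-sided case with $t=h$, every point of $L_h$ must be reached within exactly $h$ hops, so for each $p\in L_h$ the broadcast tree contains a path $\mypath_p$ from $s$ to $p$ whose $i$-th vertex lies in $L_i$; in particular the union of these paths, after deleting intra-level edges, is a Steiner arborescence $A$ of $\graph^*$ for terminal set $L_h$, and every internal node of $A$ lies in $D$. Hence $|D_A|\le |D|$, so a minimum-size arborescence gives a broadcast set of size at most $\opt$. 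The work is to show $D_A$ is \emph{feasible}: it already reaches every point of $L_h$ within $h$ hops by construction, so the only danger is that some point $q\notin L_h$ (necessarily $q\in L_i$ for some $i<h$) is covered by $D$ but not dominated by $D_A$.

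To rule this out I would use the narrow-strip covering machinery of Observation~\ref{obs:corecover}. In the one-sided case the levels are linearly ordered in $x$-coordinate with overlap at most $\tfrac12$, so $L_i$ occupies a contiguous $x$-range, and by part~(i) the disks of any $s\to p$ path with $p\in L_j$, $j\ge i$, cover the whole vertical slab $\civ{x(s)-\tfrac12,\, x(p)+\tfrac12}\times\civ{0,w}$, which contains all of $L_0\cup\dots\cup L_i$. Since $L_h\neq\emptyset$, pick any $p\in L_h$; the path $\mypath_p\subseteq A$ already covers every point in $L_0\cup\dots\cup L_{h-1}$, hence $D_A$ dominates all of $P$. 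This is the crucial point and also the one where I expect to have to be careful: I must make sure $\mypath_p$ genuinely realizes a path in the \emph{original} graph $\graph$ (the arborescence lives in $\graph^*$, but every edge of $\graph^*$ is an edge of $\graph$), and that the slab it covers really extends far enough to the right — which holds precisely because some level-$h$ point exists and $\core(p)\subseteq\disk(p)$ for $w\le\sqrt3/2$.

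It remains to handle the degenerate situation where the above argument does not directly produce the "arborescence" branch of the dichotomy, namely when $L_h=\emptyset$ under the hop bound — but we assumed $t=h$, so $L_h\neq\emptyset$ and the slab argument always applies; the only reason to invoke the path-like alternative is when a shortest $s\to Q^+$ path is already as small as, or smaller than, any arborescence, which is exactly the "small/path-like" regime of Lemma~\ref{lem:narrowstructure} restricted to one-sided inputs. Concretely, I would argue: if there is an internal node of $A$ lying in $Q^+$, then truncating $A$ at that node and replacing the rest by a single shortest path to $Q^+$ does not increase the size and still covers everything to the right by the definition of $Q^+$, yielding a path-like solution; otherwise every internal node of $A$ is outside $Q^+$, and then $A$ has no "wasteful" branches beyond $Q^+$, so $D_A$ with $|D_A|=\opt$ is the desired minimum broadcast set. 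Combining the two cases gives the stated alternative. The main obstacle, as noted, is the coverage bookkeeping in the one-sided narrow strip — once Observation~\ref{obs:corecover}(i)-(iii) is in hand it is essentially a matter of translating "arborescence reaches $L_h$" into "its disks cover the whole relevant slab."
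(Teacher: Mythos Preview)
Your coverage argument in the second paragraph has a genuine gap. You invoke Observation~\ref{obs:corecover}(iii) on the full path $\mypath_p$ from $s$ to a leaf $p\in L_h$ to conclude that $D_A$ dominates $L_0\cup\dots\cup L_{h-1}$. But $D_A$ contains only the \emph{internal} vertices of the arborescence; the leaf $p$ is not in $D_A$, so $\disk(p)$ contributes nothing to domination. Applied to the path that actually lies in $D_A$, namely $s\to p_{h-1}$ with $p_{h-1}\in L_{h-1}$ the parent of $p$, the observation only guarantees coverage of $L_0\cup\dots\cup L_{h-2}$. Points of $L_{h-1}$ lying to the right of every internal vertex of $A$ can remain undominated; Fig.~\ref{fig:weirdarborescence} exhibits exactly this phenomenon (the red arborescence is a minimum arborescence for its terminal set $L_3$, yet its internal vertices fail to cover the top-right level-$2$ point).

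Consequently the path-like alternative is not the side issue you describe in your last paragraph---it is precisely the escape hatch when $D_A$ is infeasible, and it is not triggered by ``an internal node of $A$ lying in $Q^+$''. The paper's argument runs as follows: take any minimum arborescence $A$; if $D_A$ is infeasible, pick an undominated point $p$ (necessarily $p\in L_{h-1}$ by the slab argument), and let $q$ be the rightmost input point. If $q\in L_{\le h-1}$ then a shortest $s\to q$ path already works. Otherwise $q\in L_h$ has a parent $p'\in D_A\cap L_{h-1}$, and since the $s\to p'$ path in $A$ covers the slab up to $x(p')+\tfrac12$ yet misses $p$, one gets $x(p)>x(p')+\tfrac12\ge x(q)-\tfrac12$; hence $p\in Q^+$ and a shortest $s\to p$ path is a feasible path-like broadcast of size $h\le|D_A|\le\opt$. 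Your first paragraph, extracting an arborescence from a minimum $h$-hop broadcast set to establish $|D_A|\le\opt$, is correct and is exactly why feasibility of $D_A$ immediately gives minimality; only the feasibility step breaks.
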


\begin{proof}
Let $A$ be a minimum Steiner arborescence for the terminal set $L_h$. Suppose
that the broadcast set~$D_A$ defined by the internal vertices
of~$A$ is not an $h$-hop broadcast set. (If it is, it must also be minimum and we are done.)
By the properties
of the arborescence every point in $D_A$ can be reached
in at most $h-1$ hops. Hence, if there is a point~$p\in P$ that cannot be reached
within~$h$ hops via~$D_A$ then $p$ cannot be reached at all via~$D_A$.
Let $i$ be such that $p\in L_i$. Since $L_h\subset A$, we know that $i<h$.
Take any path from $s$ to any point in $L_{h-1}$ inside the arborescence.
By Observation~\ref{obs:corecover}(iii), this path covers all lower levels.
Hence, $i\ge h-2$, which implies $p \in L_{h-1}$.

Without loss of generality, suppose that $p$ has the highest $x$-coordinate
among points not covered by $A$. Let $q$ be the point in $P$ with the
largest $x$-coordinate. If $q\in L_{\leq h-1}$, then a shortest $s\to q$ path
is a feasible broadcast set of size at most $|A|$ that is path-like.
Therefore, we only need to deal with the case $q\in L_h$. Let $p'\in A$ be an
internal vertex of the arborescence whose disk covers $q$. The arborescence
contains an $s\to p'$ path, which, by
Observation~\ref{obs:corecover}(i), covers everything with $x$-coordinate up to
$x(p')+\frac12$. Since $p \not\in \disk(p')$, we have $x(p)>x(p')+\frac12 \geq
x(q)-\frac12$. Since $q$ has the maximum $x$ coordinate,
Observation~\ref{obs:corecover}(i) shows that the disks of a shortest $s\to p$
path form a feasible broadcast set, which is a path-like solution.
\qed \end{proof}

Notice that a path-like solution also corresponds to an arborescence.
However, it can happen that there are minimum-size arborescences that do not
define a feasible broadcast; see Fig.~\ref{fig:weirdarborescence}.
Lemma~\ref{lem:arborescenceswork} implies that if this happens, then there
must be an optimal path-like solution. The lemma also implies that for
non-path-like solutions we can use the Dreyfus-Wagner dynamic-programming
algorithm to compute a minimum Steiner tree~\cite{Dreyfus71}, and obtain an
optimal solution from this tree.\footnote{The Dreyfus-Wagner algorithm
minimizes the number of edges in the arborescence. In our setting the number
of edges equals the number of internal nodes plus $|L_h|-1$, so this also
minimizes the number of internal nodes.} Unfortunately the running time is
exponential in the number of terminals, which is $|L_h|$ in our case.
However, our setup has some special properties that we can use to get a
polynomial algorithm.

We define an arborescence~$A$ to be \emph{nice} if the following holds. For
any two arcs $uu'$ and $vv'$ of~$A$ that go between the same two levels, with
$u\neq v$, we have: $y(u')<y(v') \Rightarrow y(u) < y(v)$. Intuitively, a
nice arborescence is one consisting of paths that can be ordered vertically
in a consistent manner, see the left of Fig.~\ref{fig:nicearborpred}. We define an
arborescence~$A$ to be \emph{compatible} with a broadcast set~$D$ if $D=D_A$.
Note that there can be multiple arborescences---that is, arborescences with
the same node set but different edge sets---compatible with a given broadcast
set~$D$.

\begin{observation}\label{obs:levelwidth}
In a minimum broadcast set on the strip, the difference in $x$-coordinates
between active points from a given level $L_i$ ($i\leq h-1$) is at most
$\frac{1}{2}$.
\end{observation}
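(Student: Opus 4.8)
The plan is to derive a contradiction from the arborescence description of optimal solutions. By Lemma~\ref{lem:arborescenceswork} we may assume the minimum broadcast set is either path-like or of the form $D_A$ for a minimum-size Steiner arborescence $A$ for the terminal set $L_h$. If it is path-like it is (the vertex set of) a shortest $s\to Q^+$ path, which visits each level at most once, so there is nothing to prove. So assume it is $D_A$, and suppose for contradiction that two internal nodes $p,q$ of $A$ lie on the same level $L_i$ with $i\le h-1$ and $x(q)>x(p)+\frac12$; I will show $A$ is not minimum.

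The key step is to prove that every child $c$ of $p$ in $A$ satisfies $c\in\disk(q)$. Since every arc of $\graph^*$ runs between two consecutive levels, $c\in L_{i+1}$, and from $c\in\disk(p)$ we get $x(s)\le x(c)\le x(p)+1<x(q)+\frac12$, where we use that the source is the leftmost input point, so no input point lies left of $x(s)$. Hence $c$ lies in $\civ{x(s)-\frac12, x(q)+\frac12}\times\civ{0,w}$, which by Observation~\ref{obs:corecover}(i) is covered by the disks of the $s\to q$ path $s=b_0,b_1,\dots,b_i=q$ inside $A$. Thus $c\in\disk(b_j)$ for some $j\le i$; but $j\le i-1$ would give $d_\graph(s,c)\le j+1\le i$, contradicting $c\in L_{i+1}$, so $j=i$ and $c\in\disk(q)$. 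In particular $qc$ is an arc of $\graph^*$.

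To finish I would re-parent every child of $p$, together with its entire subtree, to $q$. This keeps $A$ a valid arborescence: $q$ lies on level $i$ and hence is not in the subtree of $p$, so no cycle is created, and every terminal of $L_h$ keeps its depth (it now reaches $q$ at depth $i$ instead of $p$ at depth $i$). After this $p$ has no children, and since $i\le h-1$ we have $p\notin L_h$, so $p$ is a childless non-terminal and can be deleted; we then repeatedly delete every ancestor that has become a childless non-terminal. This process never deletes $s$, since $s$ still carries the non-empty subtree containing $q$. The result is a Steiner arborescence for $L_h$ with strictly fewer internal nodes than $A$, contradicting minimality of $A$. The only delicate point, I expect, is the second paragraph---verifying that the children of $p$ land precisely inside $\disk(q)$---which is exactly where one-sidedness and the consecutive-level structure of $\graph^*$ are used; the remaining tree surgery is routine.
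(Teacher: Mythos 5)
Your core geometric step is correct and nicely executed: for internal nodes $p,q\in L_i$ of $A$ with $x(q)>x(p)+\frac12$, every child $c$ of $p$ satisfies $x(c)\le x(p)+1<x(q)+\frac12$, hence lies in the region covered by the $s\to q$ path of $A$ by Observation~\ref{obs:corecover}(i), and it can only lie in the disk of the level-$i$ vertex of that path, since membership in an earlier disk would force $c$ into level $\le i$; the re-parenting surgery then strictly decreases the number of internal nodes. The problem is your very first step. The observation asserts the property for \emph{every} minimum $h$-hop broadcast set, and it is used with exactly that strength: Lemma~\ref{lem:nicearbor} applies it to an arbitrary optimal broadcast set $D$ in order to define the $\mypred$-function and build a compatible arborescence \emph{for that $D$}. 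Lemma~\ref{lem:arborescenceswork} only guarantees that \emph{some} minimum broadcast set is path-like or equals $D_A$ for a minimum-size arborescence $A$; it does not say that every minimum broadcast set has one of these two forms, so ``we may assume'' is not licensed. In particular, a minimum broadcast set need not be the internal-node set of any minimum-size arborescence: minimum arborescences may fail to define feasible broadcasts at all (Fig.~\ref{fig:weirdarborescence}), and an optimal $D$ may contain a point whose role is to dominate, or to shorten the hop distance of, a non-terminal in $L_{h-1}$ rather than to lie on an $s\to L_h$ path. Your argument says nothing about such $D$, so what you prove is strictly weaker than what is needed downstream.

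The paper avoids this by arguing directly on the broadcast set: assuming active $p,q\in L_i$ with $x(p)>x(q)+\frac12$, it deletes $\disk(q)$ and checks that feasibility is preserved --- domination survives because the active $s\to p$ path already covers everything with $x$-coordinate up to $x(p)+\frac12>x(q)+1$ by Observation~\ref{obs:corecover}(i), and the hop bound survives because any $v\in L_{i+1}\cap(\disk(q)\setminus\disk(p))$ lies to the left of $p$ and is still reached within $i+1$ hops --- contradicting minimality of $|D|$. If you want to keep the arborescence-surgery idea, you would first have to show that every minimum broadcast set carries a compatible arborescence to operate on, which is essentially Lemma~\ref{lem:nicearbor}, i.e.\ the statement whose proof depends on this observation; so the route is not repairable without reordering the lemmas. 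Either switch to a direct deletion argument on $D$, or restate the observation so that it quantifies only over the special solutions you handle and check that this weaker form suffices where it is invoked.
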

\begin{proof}
Let $p$ and $q$ be active points from $L_i$, and suppose for contradiction
that $x(p)>x(q)+\frac{1}{2}$. By Observation~\ref{obs:corecover}(i),
all points to the left of $p$ are covered by the active points, so we only
need to show that there are no points in $L_{i+1}$ whose hop distance becomes
longer by removing $\disk(q)$ from the solution. Indeed, consider a point
$v\in L_{i+1} \cap (\disk(q) \setminus \disk(p))$. Since $\disk(q)\setminus
\disk(p)$ lies to the left of $p$, $x(v)<x(p)$. So $v$ has a path of at most
$i+1$ hops. Hence we still have a feasible solution after removing
$\disk(q)$, which contradicts the optimality of the original solution.
\qed \end{proof}

\begin{lemma}
\label{lem:length_i_path}
Let $p\in L_i$ be a point in an optimal broadcast set $D$. Then there
is a path of length $i$ from $s$ to $p$ in $\graph[D]$, the graph induced by~$D$.
\end{lemma}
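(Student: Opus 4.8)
The plan is to argue by strong induction on $i$, reducing the inductive step to the single claim that $p$ has a neighbour $p^\star\in D\cap L_{i-1}$ (in $\graph$, hence also in $\graph[D]$ since both endpoints lie in $D$). The base cases $i=0$ (where $p=s$, empty path) and $i=1$ (where $sp\in\graph$) are immediate. For $i\ge 2$, given such a $p^\star$, the induction hypothesis provides a path of length $i-1$ from $s$ to $p^\star$ in $\graph[D]$, and appending the edge $p^\star p$ yields a path of length $i$; so everything reduces to the claim.

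For the claim I would first record two easy facts. (a) $D\cap L_{i-1}\neq\emptyset$: since $L_i\neq\emptyset$ also $L_{i-1}\neq\emptyset$, and an edge of $\graph$ joins points whose levels differ by at most $1$, so any path in the connected graph $\graph[D]$ from $s$ to a point of level $\ge i$ must pass through a point of level $i-1$. (b) By Observation~\ref{obs:levelwidth}, the points of $D\cap L_{i-1}$ lie in a vertical slab of $x$-width at most $\frac12$; since the strip has height at most $\frac{\sqrt3}{2}$, any two points whose $x$-coordinates differ by at most $\frac12$ are adjacent. Hence it suffices to exhibit a point of $D\cap L_{i-1}$ whose $x$-coordinate is within $\frac12$ of $x(p)$.

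Suppose no such point exists. Then, by the width bound in (b), $D\cap L_{i-1}$ lies entirely to the right of $x(p)+\frac12$ or entirely to the left of $x(p)-\frac12$; after reflecting the instance in the $y$-axis (the source lies on it) we may assume $p\in L^+_i$. In the ``entirely to the right'' case, every $q\in D\cap L_{i-1}$ satisfies $x(q)>x(p)+\frac12>0$, so $q\in L^+_{i-1}$, and Observation~\ref{obs:corecover}(ii) gives
\[
 x(p)\ \ge\ \min\{x(p')\mid p'\in L^+_i\}\ \ge\ \max\{x(p')\mid p'\in L^+_{i-1}\}-\tfrac12\ \ge\ x(q)-\tfrac12\ >\ x(p),
\]
a contradiction. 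The ``entirely to the left'' case is the crux: here I would take a $\graph$-neighbour $r\in L_{i-1}$ of $p$ (necessarily $r\notin D$, since $D\cap L_{i-1}$ is far from $p$) together with a shortest $s\to p$ path in $\graph[D]$, use Observation~\ref{obs:corecover}(i) to see that this path covers the strip up to $x(p)+\frac12$, and combine this with the clique structure of $D\cap L_i$ (Observation~\ref{obs:levelwidth} again) and a swap argument in the spirit of Lemma~\ref{lem:twopaths}: since $D\cap L_{i-1}$ sits strictly to the left of $p$ while $p\in D$, the portion of $\disk(p)$ that the solution actually needs $p$ for lies far to the right of $p$, so $p$ could be replaced by a point still further right without losing coverage, contradicting the minimality of $D$ (and when $i=h-1$ the contradiction is already with feasibility, since the $L_h$-point forcing $p$ into $D$ would then only be reachable in $h+1$ hops).

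The main obstacle is exactly this last case: it is the only place where one genuinely needs $D$ to be a minimum, not merely feasible, $h$-hop broadcast set, and it requires carefully relating how a shortest $s\to p$ path covers the strip to the narrow slabs occupied by $D\cap L_{i-1}$ and $D\cap L_i$. Two secondary technicalities are the top level $i=h$, where Observation~\ref{obs:levelwidth} is not stated and one needs a separate argument bounding the spread of the active points of $L_h$ permitted by the hop constraint, and the situation where $x(p)$ is close to $0$, where the shortest path and the slabs may straddle the $y$-axis, so that the one-sided overlap bound of Observation~\ref{obs:corecover}(ii) must be applied with a little care.
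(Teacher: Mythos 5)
Your reduction of the lemma to the claim ``$p$ has a neighbour in $D\cap L_{i-1}$'', plus induction on $i$, is a sound framing (the second-to-last vertex of a length-$i$ path must lie in $D\cap L_{i-1}$, and conversely). The gap is in how you try to prove that claim. You replace adjacency by the \emph{sufficient} condition ``$x$-coordinates differ by at most $\tfrac12$'' and then treat its negation as the case to refute. But two points whose $x$-coordinates differ by any amount up to $1$ can still be adjacent (e.g.\ two points on the same bounding line of the strip at horizontal distance $0.8$). So in your ``entirely to the left'' case it is entirely possible that $p$ \emph{does} have a predecessor in $D\cap L_{i-1}$ at $x$-distance in $(\tfrac12,1]$; such a configuration is perfectly consistent with optimality of $D$, so no contradiction can be derived from that case as you have delimited it. The parenthetical ``necessarily $r\notin D$, since $D\cap L_{i-1}$ is far from $p$'' fails for the same reason. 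The case split would have to be on Euclidean distance $\le 1$ versus $>1$ to all of $D\cap L_{i-1}$, and only the latter sub-case is genuinely bad.

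Even in that genuinely bad sub-case, the sketched resolution does not close the argument. A ``swap'' that replaces $p$ by another point leaves $|D|$ unchanged, so it cannot contradict minimality; and since the lemma asserts a property of the \emph{given} optimal $D$, transforming $D$ into a different optimum would not prove it anyway. One really has to exhibit a strictly smaller feasible set, and this is where the paper works hard: it takes a \emph{highest-level} bad vertex, so that every active point at a higher level already has a genuine shortest path in $\graph[D]$ avoiding $p$; it then argues that the only point that can force $p$ into $D$ lies in $L_{h-1}$, hence $p\in L_{h-2}$, deduces $D\cap L_{h-1}\subseteq \disk(p)$, and finally replaces \emph{all} of $D\cap L_{\le h-2}$ by a single shortest $s\to p$ path, which is strictly smaller than what it replaces. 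Your bottom-up induction supplies exactly the opposite information (all \emph{lower} levels are good), which is of no use for locating the point that depends on $p$; so the inductive step cannot be completed by the local argument you describe, and the crux of the lemma remains unproved.
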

\begin{proof}
We say that a vertex $p\in L_i \cap D$ is \emph{bad} if the shortest path
in $\graph[D]$ has more than $i$ hops. Let $p$ be a bad vertex of highest
level among the bad vertices. If $i=h$, then the broadcast set is infeasible, thus $i \leq h-1$. If
$p\in L_{h-1}$, then the shortest $s \to p$ path in $\graph[D]$ must have length $h$,
consequently, $p$ cannot be used in an $h$-hop path to any other point.
Therefore, $p$ can be deactivated. (Note that $p$ itself remains covered
since it was reachable in the first place.)

If $p$ is on a lower level, then let $\pi_q$ be a shortest path in $\graph[D]$ going to the
last level, and let $q \in \pi_q \cap L_{h-1}$. Let $\pi_p$ be the shortest
$s\to p$ path in $\graph[D]$. Note that $\pi_q$ covers all lower levels $L_{\leq
h-2}$ using at most $h$ hops. Since $i$ is the highest level with a bad
point, all points $v \in D \cap L_{\geq i+1}$ have a shortest path in $\graph[D]$,
and such a path cannot pass through $p$.

Since $p$ is a necessary point in this broadcast, and it is already
covered by the disks of $\pi_q$ in at most $h$ hops, there must be a point
$p'$ to which all covering paths of length at most $h$ pass through $p$.
Since all points of $L_h$ are covered by $D\cap L_{h-1}$ and $L_{\leq h-2}$
is covered by $\pi_q$, the level of $p'$ has to be $h-1$. A covering path to
$p'$ has only bad vertices after $p$, so its point in $L_{h-2}$ is bad. By
the choice of $p$, we have $p\in L_{h-2}$, and since $p'$ is reached in
exactly $h$ hops, it also follows that $p' \in \disk(p)$.

Note that $p'$ cannot be to the left of $\disk(q)$, since
then $\pi_q$ would cover it in at most $h$ hops; therefore, $x(p') >
x(q)+\frac12$. It follows that $x(p) \geq x(q)-\frac12$, so $\disk(p)$ covers
$q$. Since $q$ is an arbitrary point in $D\cap L_{h-1}$, we have  $D\cap
L_{h-1} \subseteq \disk(p)$. Let $D'$ be the broadcast obtained by replacing $D
\cap L_{\leq h-2}$ with a shortest $s \to p$ path $\pi'_p$. We claim that $D'
= \pi'_p \cup (D\cap L_{h-1})$ is a feasible broadcast: it covers $L_h$ since
points of $L_h$ could only be covered by $D\cap L_{h-1}$, and it is easy to
check that all points are covered in at most $h$ hops. We arrived at a
contradiction since $D'$ is smaller than $\pi_p \cup (D\cap L_{h-1})
\subseteq D$.
\qed \end{proof}

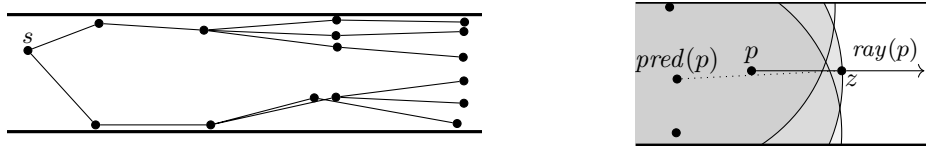
\begin{figure}
\begin{center}
\begin{tikzpicture}[x=0.9cm,y=0.9 cm,
	strip/.style={black, line width=1pt}
]
\clip (3,-0.2) rectangle (10,1.93);
\draw [very thick] (0,0) -- (14,0);
\draw [very thick] (0,1.73) -- (14,1.73);
\tikzstyle{every node}=[draw,circle,fill=black,minimum size=3pt,
inner sep=0pt]
\node [label={above:{$s$}}] (r1) at (3.3,1.2) {};
\node (r2) at (4.3,0.1) {};
\node (r3) at (4.35,1.6) {};
\node (r4) at (6,0.1) {};
\node (r5) at (5.9,1.5) {};
\node (r6) at (7.86,1.65) {};
\node (r7) at (7.85,1.42) {};
\node (r8) at (7.87,1.25) {};
\node (r9) at (7.53,0.5) {};
\node (r10) at (7.85,0.51) {};
\node (r11) at (9.75,1.62) {};
\node (r12) at (9.74,1.48) {};
\node (r13) at (9.73,1.1) {};
\node (r15) at (9.74,0.75) {};
\node (r16) at (9.74,0.42) {};
\node (r17) at (9.64,0.12) {};
\draw (r1) --(r2);
\draw (r2) --(r4);
\draw (r1) -- (r3);
\draw (r3) -- (r5) -- (r8) -- (r13);
\draw (r17) -- (r9) -- (r4) -- (r10);
\draw (r11) -- (r6) -- (r5) -- (r7) -- (r12);
\draw (r16) -- (r10) -- (r15);
\end{tikzpicture}
\hfill
\begin{tikzpicture}[x=1.1 cm,y=1.1 cm]
\clip (0,0) rectangle (3.5,1.73);
\draw [very thick] (0,0) -- (8,0);
\draw [very thick] (0,1.73) -- (8,1.73);
\begin{normalsize}
\tikzstyle{every node}=[draw,circle,fill=black,minimum size=3pt,
inner sep=0pt]
\node [label={[rectangle] $\mypred(p)$}] (lamq) at (0.5,0.8) {};
\node [label=above:$p$] (q) at (1.4,0.9) {};
\node [label=below right:$z$] (qpr) at (2.49,0.9) {};
\node (up) at (0.41,1.67) {};
\node (down) at (0.49,0.15) {};
\end{normalsize}
\draw [dotted] (lamq) -- (qpr);
\draw [->] (q) -- (qpr) -- node [above] {$\myray(p)$}  ++(1,0);
\filldraw [black, fill opacity=0.1] (lamq) circle (2);
\filldraw [black, fill opacity=0.05] (up) circle (2);
\filldraw [black, fill opacity=0.05] (down) circle (2);
\end{tikzpicture}
\end{center}
\caption{Left: A nice Steiner arborescence. Note that arc crossings are possible. Right: Defining the $\mypred$ function.}\label{fig:nicearborpred}
\end{figure}

\begin{restatable}{lemma}{lemnicearbor}
\label{lem:nicearbor}
Every optimal broadcast set~$D$ has a nice compatible arborescence.
\end{restatable}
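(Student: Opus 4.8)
The plan is to start from an arbitrary arborescence $A$ compatible with the optimal broadcast set $D$ (one exists because $D$ itself is feasible, so we can build a BFS-like arborescence in $\graph^*$ for terminal set $L_h$ whose internal nodes are exactly $D\setminus(\text{leaves in }L_h)$; by Lemma~\ref{lem:length_i_path} every $p\in L_i\cap D$ has an $s\to p$ path of length $i$ in $\graph[D]$, so such an arborescence exists). The goal is to transform $A$ into a nice arborescence with the same internal node set, i.e.\ without changing $D_A$. Niceness only constrains pairs of arcs $uu'$, $vv'$ that go between the same two consecutive levels $L_{i-1}\to L_i$; the idea is to fix the arcs level by level, or rather to re-wire, for each level $i$, the choice of parent in $L_{i-1}$ for each node in $L_i\cap A$.

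The key structural fact I would exploit is Observation~\ref{obs:levelwidth}: the active points in any level $L_{i-1}$ ($i-1\le h-1$) lie within an $x$-interval of length $\tfrac12$, hence (since the strip has width $\le\sqrt3/2$) any such point's core covers the whole vertical slab of width $1$ around $L_{i-1}$, and moreover any two active points in $L_{i-1}$ have overlapping disks restricted to the relevant region. Concretely I would argue: if $v'\in L_i\cap A$ currently has parent $v\in L_{i-1}$, and $u\in L_{i-1}\cap A$ is another active point, then whenever $u'\in L_i\cap A$ has parent $u$ with $y(u')<y(v')$ but $y(u)>y(v)$ (a "crossing" violating niceness), we can swap parents: give $v'$ the parent $u$ and $u'$ the parent $v$. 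For this swap to be legal we need $v'\in\disk(u)$ and $u'\in\disk(v)$; I would derive this from the width bound — points within the same level are $x$-close, so $\disk(u)$ and $\disk(v)$ cover essentially the same portion of level $L_i$, and in particular both $u'$ and $v'$ lie in $\disk(u)\cap\disk(v)$. (A short case analysis on $y$-coordinates, using $w\le\sqrt3/2$, gives this; it is the kind of elementary disk-geometry computation the paper does repeatedly, e.g.\ in the proof of Lemma~\ref{lem:twopaths}.) Such a swap strictly decreases the number of crossing pairs among arcs between $L_{i-1}$ and $L_i$ (a standard uncrossing argument: sort the children by $y$, sort the used parents by $y$, and match them monotonically), so after finitely many swaps all arcs between $L_{i-1}$ and $L_i$ are non-crossing. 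Doing this for every level $i$ independently — the swaps at level $i$ only touch arcs between $L_{i-1}$ and $L_i$, so they do not interfere — yields a nice arborescence with node set unchanged, hence $D_A=D$, proving the lemma.

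The main obstacle I anticipate is the geometric legality of the swap: one must be careful that $v'\in\disk(u)$ and $u'\in\disk(v)$ really do follow from the level-width bound, because a priori $u'$ or $v'$ could sit near the far edge of its parent's disk. The safe way is to note that niceness is only required between consecutive levels, and that a node $u'\in L_i$ adjacent to $u\in L_{i-1}$ satisfies $x(u')\le x(u)+1$ and, by Observation~\ref{obs:corecover}(ii), $x(u')\ge x(\text{something in }L_{i-1})-\tfrac12$; combined with Observation~\ref{obs:levelwidth} bounding $|x(u)-x(v)|\le\tfrac12$, one gets $|x(u')-x(v)|\le 1$, and since the strip width is $\le\sqrt3/2<1$, the vertical distance is also under control, so indeed $u'\in\disk(v)$; symmetrically $v'\in\disk(u)$. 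The other point needing a little care is that we may have $u=v$ or $u'=v'$ (arcs sharing an endpoint) — these pairs are excluded from the niceness condition by the hypothesis $u\ne v$, so they cause no trouble. Everything else is the standard monotone-matching uncrossing argument.
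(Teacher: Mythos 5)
There is a genuine gap at the central step of your argument: the claim that a crossing pair of arcs $uu'$, $vv'$ between $L_{i-1}$ and $L_i$ can always be uncrossed by swapping parents, i.e.\ that $u'\in\disk(v)$ and $v'\in\disk(u)$. This does not follow from the observations you cite, and it is false in general. From $u'\in\disk(u)$ you get $x(u')\le x(u)+1$, and Observation~\ref{obs:levelwidth} gives $x(u)\le x(v)+\frac12$, so the bounds only yield $x(u')-x(v)\le\frac32$, not $|x(u')-x(v)|\le 1$ as you assert; and even $|x(u')-x(v)|\le 1$ would not imply $u'\in\disk(v)$ once the $y$-gap can be as large as $\sqrt3/2$ (one needs $|x(u')-x(v)|\le\frac12$ for the core argument to apply). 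A concrete counterexample in a very thin strip: $v=(0,0.1)$, $u=(0.5,0.2)$, $v'=(0.9,0.15)$, $u'=(1.4,0.05)$. Then $u'\in\disk(u)$, $v'\in\disk(v)$, the pair crosses ($y(u')<y(v')$ but $y(u)>y(v)$), yet $|u'v|>1$, so the swap is illegal. Since the uncrossing step is the entire content of the proof, the argument does not go through as written. (Note also that in such a configuration the right repair is not a swap but a reassignment of $v'$ to the parent $u$ as well; a purely local pairwise-swap procedure cannot discover this, and your ``sort parents by $y$, sort children by $y$, match monotonically'' fallback needs exactly the containment statement that is in question.)

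For contrast, the paper avoids local surgery altogether: it builds one canonical arborescence directly from $D$ by sending each $p$ at level $i$ along the horizontal rightward ray $\myray(p)$ to the point $z$ where it exits the union of the level-$(i-1)$ active disks, and defining $\mypred(p)$ as the center of the disk whose boundary contains $z$ (this disk is guaranteed to contain $p$ because, by Observation~\ref{obs:corecover}(iii), the ray cannot enter any new level-$(i-1)$ disk). Niceness then follows globally: the breakpoints of that union boundary cut the strip into horizontal sub-strips, each assigned to one predecessor, and a perpendicular-bisector argument shows these sub-strips are ordered consistently with the $y$-coordinates of the predecessors. Your level-by-level decomposition and the observation that niceness is a per-level-pair condition are sound, but to complete a proof along your lines you would have to replace the pairwise swap with something equivalent to this global, geometry-driven reassignment of all parents within a level.
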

\begin{proofsketch}
To find a nice compatible arborescence we will associate a unique
arborescence with~$D$. To this end,
we define for each $p\in (D \cup L_h) \setminus \{s\}$ a unique predecessor
$\mypred(p)$, as follows. Let $\partial^*_i$ be the boundary of $\bigcup
\left\{\disk(p) | p\in L_{i} \cap D \right\}$. It follows from
Observation~\ref{obs:levelwidth} that the two lines bounding
the strip~$\cS$ cut $\partial^*_i$ into four parts: a top and a bottom part
that lie outside the strip, and a left and a right part that lie inside the
strip. Let $\partial_i$ be the part on the right inside the strip.
We then define the function $\mypred: (D \cup L_h) \setminus \{s\}
\rightarrow D$ the following way. Consider a point $p \in (D \cup L_h)
\setminus \{s\}$ and let $i$ be its level. Let $\myray(q)$ be the horizontal
ray emanating from $q$ to the right; see the right of Fig.~\ref{fig:nicearborpred}. It follows
from Observation~\ref{obs:corecover}(iii) that $\myray(q)$ cannot enter any
disk from level~$i-1$. We can prove that any point $p\in D\cap L_h$ is
contained in a disk from $p$'s previous level, so $\mypred(p)$ is well
defined for these points. The edges $\mypred(p) p$ for $p\in D\cap L_h$ thus
define an arborescence. We can prove that it is nice by showing that the
$y$-order of the points in a level $L_i$ corresponds to the vertical order in
which the boundaries of their disks appear on $\bigcup \{ \disk(p) : p\in
L_i\cap D
\}$.
\end{proofsketch}
\begin{proof}
Recall that $\mypred(p)$, for $p\in L_i\cap D$, is the center of the level $i-1$ disk which has $z =
\myray(p) \cap \partial_{i-1}$ on its boundary. If there are multiple such
disks, we can break ties by choosing $\mypred(p)$ to be the point with the
highest $y$-coordinate in $L_i\cap D$ whose disk passes through $z$.

Let $A$ be the directed graph defined by the edges $\mypred(p)p$ for each $p\in
(D \cup L_h) \setminus \{s\}$. We show that $A$ is a nice arborescence.
By definition of the $\mypred$-function, each edge is between points
at distance at most~1 that are in subsequent levels. Hence, the edges we add
define an arborescence~$A$ on $\graph^{*}$ with terminal set~$L_h$. It
remains to prove that $A$ is nice.

Consider the edges of $A$ going between points in~$L_{i-1}$ and points
in~$L_i$. By drawing horizontal lines through each of the breakpoints of
$\partial_{i-1}$, the strip~$\cS$ is partitioned into horizontal sub-strips,
such that two points from $L_i$ are assigned the same predecessor iff they
lie in the same sub-strip. Number the sub-strips $\cS_1,\cS_2,\ldots$ in
vertical order, with $\cS_1$ being the bottommost sub-strip. Let $u_j\in D
\cap L_{i-1}$ be the point that is the predecessor of the points in the sub-
strip~$\cS_j$. To show that $A$ is nice, it is sufficient to demonstrate that
the sequence $u_1,u_2,\ldots$  is ordered by the $y$-coordinates of the
points.

Suppose for a contradiction that this is not the case. Then there are points
$u_j$ and $u_{j+1}$ such that $y(u_j) > y(u_{j+1})$. Let $z$ be the
breakpoint on $\partial_{i-1}$ between the arcs defined by $\disk(u_j)$ and
$\disk(u_{j+1})$. Since $z$ is in the right half circle of both
$\disk(u_j)$ and $\disk(u_{j+1})$, we have
$\max\{x(u_j),x(u_{j+1})\}<x(z)$. Since $|u_j z| = |u_{j+1} z| = 1$, the
point~$z$ lies on the perpendicular bisector of $u_j u_{j+1}$ to the right of
$u_j$ and $u_{j+1}$. Since $y(u_j) > y(u_{j+1})$, the outer circle below the
bisector is $\disk(u_{j+1})$ and the outer circle above the bisector is
$\disk(u_j)$. This contradicts the ordering of the sub-strips.
\qed \end{proof}

Let $q_1,q_2,\dots,q_m$ be the points of $L_h$ in increasing $y$-order. The
crucial property of a nice arborescence is that the descendant leaves of a
point $p$ in the arborescence form an interval of $q_1,q_2,\dots,q_m$. Using the above lemmas, we can adapt the Dreyfus-Wagner algorithm and get the following theorem.
\begin{restatable}{theorem}{thmonesidedhbr}
\label{thm:onesided_hbr}
The one-sided $h$-hop broadcast problem inside a strip of width at
most~$\sqrt{3}/2$ can be solved in $O(n^4)$ time.
\end{restatable}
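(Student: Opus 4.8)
The plan is to turn the structural results (Lemmas~\ref{lem:arborescenceswork}, \ref{lem:nicearbor}, and Observation~\ref{obs:levelwidth}) into a polynomial-time dynamic program that mimics Dreyfus--Wagner but exploits that we only ever need to build a \emph{nice} arborescence. First I would handle the easy escape route: in $O(n\log n)$ time (Theorem~\ref{thm:narrowbroadcast}, or even simpler since the input is one-sided) compute a shortest $s\to Q^+$ path and keep it as a candidate path-like solution; by Lemma~\ref{lem:arborescenceswork} we only need to beat this with an arborescence-based solution, so from now on we may assume an optimal solution is $D_A$ for some minimum-size Steiner arborescence $A$, and by Lemma~\ref{lem:nicearbor} we may assume $A$ is nice.

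The core of the proof is the DP. Fix the ordering $q_1,\dots,q_m$ of $L_h$ by $y$-coordinate. The key property stated just before the theorem is that in a nice arborescence the descendant leaves of any node form a contiguous interval $q_a,\dots,q_b$. So the subproblems are indexed by a point $p\in P$ (with $p$ at some level $i$) together with an interval $[a,b]\subseteq[1,m]$: let $T(p,a,b)$ be the minimum number of internal nodes of a nice arborescence rooted at $p$ inside $\graph^*$ restricted to levels $\ge i$, whose leaf set is exactly $\{q_a,\dots,q_b\}$, and which additionally reaches every $q_c$, $a\le c\le b$, within $h-i$ hops. (The hop bound is automatically satisfied for a nice structure once the leaves are all at level $h$, but I would carry it explicitly to be safe, using Lemma~\ref{lem:length_i_path} to argue that in an optimal solution a point at level $i$ is genuinely reached in $i$ hops, so ``levels $\ge i$, hop budget $h-i$'' is the right recursion.) The recurrence has two moves, exactly as in Dreyfus--Wagner: (1) \emph{merge}, $T(p,a,b)=\min_{a\le c<b}\big(T(p,a,c)+T(p,c+1,b)-[\text{don't double count }p]\big)$, splitting the leaf interval at $c$; and (2) \emph{extend}, $T(p,a,b)=\min_{p'}\big(1+T(p',a,b)\big)$ where $p'$ ranges over points in the next level with $p\in\disk(p')$ wait --- orient it correctly: $p'$ is a point at level $\mathrm{level}(p)+1$ with $p'\in\disk(p)$, i.e.\ we prepend an edge $p\to p'$; the $+1$ counts $p$ becoming internal. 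Base case: $T(q_c,c,c)=0$. The answer for the one-sided instance is $\min$ over the root moves down from $s$, i.e.\ $1+T(p_1,1,m)$ over points $p_1\in L_1$ with $p_1\in\disk(s)$ (or $T(s,1,m)$ directly), compared against the path-like candidate.

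For correctness I would prove two directions. \emph{Soundness}: any value computed corresponds to an actual arborescence, by induction on the recurrence --- merges glue two arborescences at a common root, extends prepend an edge; niceness is preserved because the leaf intervals of the two glued subtrees are consecutive. \emph{Completeness}: given a nice minimum arborescence $A$ and a node $p$ with descendant-leaf interval $[a,b]$, either $p$ has one child (an extend step, recurse on the child) or $p$ has $\ge 2$ children, in which case --- and here is where niceness is essential --- the children's leaf intervals are consecutive sub-intervals of $[a,b]$, so we can peel off the lowest child's interval $[a,c]$ and recurse, matching a merge step; this is the standard Dreyfus--Wagner argument restricted to the laminar interval structure. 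I would also remark that we don't need a separate ``which level'' coordinate: the level of $p$ is determined by $p$ itself, so the state is genuinely $(p,a,b)$.

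Finally the running time. There are $O(n m^2)=O(n^3)$ states. The merge step costs $O(m)=O(n)$ per state, giving $O(n^4)$; the extend step costs $O(n)$ per state (ranging over at most $n$ candidate predecessors in the adjacent level), also $O(n^4)$ total. Adding the $O(n\log n)$ preprocessing (levels via BFS, the path-like candidate, $Q^+$, adjacency within consecutive levels) the total is $O(n^4)$. The main obstacle I expect is not the DP mechanics but pinning down the \emph{exact} subproblem definition so that the hop bound and the ``edges within a level removed'' restriction in $\graph^*$ interact cleanly with the interval decomposition --- in particular making sure that restricting to levels $\ge i$ with budget $h-i$ loses no optimal solution, which is precisely what Lemmas~\ref{lem:arborescenceswork}, \ref{lem:length_i_path} and Observation~\ref{obs:corecover}(iii) are there to guarantee, and that the path-like fallback really is necessary in exactly the cases where a minimum nice arborescence fails to be a feasible broadcast (Fig.~\ref{fig:weirdarborescence}).
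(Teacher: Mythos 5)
Your proposal is correct and follows essentially the same route as the paper: the same path-like fallback justified by Lemma~\ref{lem:arborescenceswork}, the same nice-arborescence dynamic program indexed by a point and a $y$-contiguous interval of $L_h$, the same merge/extend recurrence with the same completeness argument via the laminar interval structure of a nice arborescence, and the same $O(n^3)\times O(n)$ running-time count. The only (immaterial) differences are bookkeeping: the paper seeds the base case with $d_{\graph^{*}}(p,q_i)-1$ instead of $T(q_c,c,c)=0$, and it does not carry an explicit hop budget, relying as you note on the fact that all leaves sit in $L_h$ and $\graph^{*}$ only moves forward one level per edge.
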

\begin{proof}
By our lemmas, we know that our solution can be categorized as path-like or
as arborescence-based. We compute the best path-like solution by
invoking the second part of our narrow strip broadcast algorithm, which runs
in $O(n\log n)$ time. The output of this algorithm is a path with $t$ or
$t+1$ hops (where $t$ is the number of levels); thus, it is a minimum $h$-hop
broadcast set if $t<h$, or if $t=h$ and the path has length $h$. Otherwise
there is no path-like $h$-hop broadcast set, so an arborescence
defines a minimum $h$-hop broadcast set by Lemma~\ref{lem:arborescenceswork}.
By Lemma~\ref{lem:nicearbor}, it is sufficient to look for a nice Steiner
arborescence, and take the broadcast set defined by it.

The algorithm to find a nice Steiner arborescence is based on dynamic
programming. A subproblem is defined by a point $p\in P$ and an interval of
the last level (that is, an interval of the sequence $q_1, q_2, \dots, q_m$,
the points of $L_h$ ordered by $y$-coordinates). The solution of the
subproblem $M(p, \civ{i,j})$, for $1\leq i \leq j \leq m$, is the minimum
number of internal vertices in a nice arborescence which is rooted at $p$ and
contains $q_i,q_{i+1},\dots,q_j$ as leaves. Recall that $d_{\graph^{*}}(p,q)$
denotes the hop distance function in $\graph^{*}$, where
$d_{\graph^{*}}(p,q)=\infty$ if there is no path from $p$ to $q$. We claim
that the following recursion holds:

\begin{align}\label{eq:strip_h_1side}
M&(p,\civ{i,j}) =\nonumber \\
&
\begin{dcases}
d_{\graph^{*}}(p,q_i) -1\\[-0.5em]
 \hspace{9cm} \text{ if $i=j$,}\\[1em]
\, \min \Big( \! \min_{i \leq t \leq j-1} \!\!
                    \big(M(p,\civ{i,t}) + M(p,\civ{t+1,j})  \big),\; 1 +
                    \hspace{-3mm} \min_{\substack{p'\in P \cap \disk(p) \\
                    p' \neq p}} \hspace{-3mm} M(p',\civ{i,j}) \Big)\\[-0.5em]
\hspace{9cm} \text{ if $i<j$.}
\end{dcases}
\end{align}

The number of subproblems is $O(n^3)$, each of them requires computing the
minimum of at most $O(n)$ values. This results in an algorithm that runs in
$O(n^4)$ time. The minimum broadcast set size is $M(s,\civ{1,m})$; if we
keep track of a representing arborescence for each subproblem, we can also
return a minimum broadcast set without any extra runtime cost.

To prove correctness, we need to show that Equation~\eqref{eq:strip_h_1side}
is correct. The base case, $i=j$, is obviously correct, so now assume~$i<j$.
It is easily checked that $M(p,\civ{i,i})$ is at most the right-hand side of
the equation.
For the reverse direction, consider a nice optimal Steiner arborescence $A$
for $M\left(p,\civ{i,j}\right)$. If $p$ has exactly one outgoing arc in~$A$,
that arc must end in a point $p'\in P \cap \disk(p) \setminus \{p\}$. Then
$A\setminus \{p\}$ is an arborescence rooted at $p'$ that spans $\civ{i,j}$,
so it has at least $M(p',\civ{i,j})$ internal vertices. If $p$ has at least
two outgoing internal vertices, then let $p'$ be the child of $p$ with the
lowest $y$-coordinate. Since the arborescence is nice, the descendant leaves
of $p'$ in $A$ form a sub- interval of $\civ{i,j}$ that starts at~$i$. Let
$q_t$ be the leaf with the highest $y$-coordinate among the descendants of
$p'$. If $A$ had strictly less internal vertices than $M(p,\civ{i,t}) + M(p,
\civ{t+1,j})$, then it would need to include a nice sub-arborescence with
less internal vertices for at least one of the subproblems $M(p,\civ{i,t})$
or $M(p,\civ{t+1,j})$, but that would contradict the optimality in the
definition of the subproblems.
\qed \end{proof}

In the general (two-sided) case, we can have path-like solutions and 
arborescence-based solutions on both sides, and the two side solutions may or may not share points in $L_1$. We also need to handle
``small'' solutions---now these are 2-hop solutions---separately.
\begin{restatable}{theorem}{thmtwosidedhbr}
\label{thm:twosided_hbr}
The $h$-hop broadcast problem inside a strip of width at
most~$\sqrt{3}/2$ can be solved in $O(n^6)$ time.
\end{restatable}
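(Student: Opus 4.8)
The plan is to reduce the two-sided problem to a combination of the one-sided machinery developed above, handling the interaction at the source disk explicitly. First I would dispose of the easy cases exactly as in the non-hop-bounded algorithm: if the highest non-empty level $L_t$ satisfies $t>h$, report infeasibility; if $t<h$, run \textsc{Broadcast-In-Narrow-Strip} and output its solution, since by Lemma~\ref{lem:narrowstructure} that solution reaches every point within $t+1\le h$ hops. Next I would check separately for a feasible $2$-hop solution (the analogue of ``small'' solutions): this is exactly the $2$-hop broadcast problem, which by the result deferred to Appendix~\ref{sec:planaralgs} can be solved in $O(n^4)$ time, well within the budget. Also check for path-like solutions by invoking the second half of the narrow-strip broadcast algorithm ($O(n\log n)$ time) on both sides and verifying the hop lengths. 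So from now on we may assume $t=h$ and that no path-like or $2$-hop solution exists.

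The core of the argument is that in this remaining case an optimal solution decomposes as a left part and a right part that are each arborescence-like, possibly sharing one vertex in $L_1$. I would first establish, by the same swap-style reasoning as in Lemma~\ref{lem:twopaths} and Lemma~\ref{lem:arborescenceswork}, that an optimal solution $D$ restricted to $\cS_{\ge 0}$ (resp.\ $\cS_{\le 0}$) together with the induced terminal set $L_h^+$ (resp.\ $L_h^-$) and the level structure behaves like a one-sided instance rooted at $s$: the key point, already used repeatedly, is that because $\core(s)$ has horizontal width~$1$, no active disk outside $\disk(s)$ on the right can cover a point strictly left of $\disk(s)$, so the two sides are nearly independent. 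The only coupling is the choice of the first vertex (or vertices) after $s$: the left and right arborescences may share their unique $L_1$ vertex, or use distinct ones. To capture this I would set up a dynamic program that is the two-sided mirror of Equation~\eqref{eq:strip_h_1side}: subproblems $M^+(p,\civ{i,j})$ over the right terminals $q_1^+,\dots,q_{m^+}^+$ and $M^-(p,\civ{i,j})$ over the left terminals, each computed in $O(n^4)$ time by Theorem~\ref{thm:onesided_hbr}. Then the optimum among non-small, non-path-like solutions is
\[
\min\Big(\; M^+(s,\civ{1,m^+}) + M^-(s,\civ{1,m^-}) - 1,\;\;
\min_{p\in L_1}\big(1 + \widehat M^+(p) + \widehat M^-(p)\big)\;\Big),
\]
where the first term corresponds to the two sides sharing only $s$ (we subtract $1$ because $s$ is counted on both sides), and $\widehat M^\pm(p)$ denotes the cost of a nice one-sided arborescence whose root is forced to be the $L_1$-vertex $p$ — this is just $M^\pm(p,\civ{1,m^\pm})$ restricted to arborescences that do not re-use $p$ as an internal vertex on a further level, which the one-sided recursion already computes. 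Iterating over the $O(n)$ candidate shared vertices $p\in L_1$ and running the one-sided $O(n^4)$ DP for each gives $O(n^5)$; being slightly more careful (compute $M^\pm(p,\cdot)$ once for all $p$, which is what the one-sided table already stores) brings this to $O(n^4)$, and the dominant cost $O(n^6)$ comes from combining the two sides naively or, more honestly, from the fact that for a shared $L_1$ vertex the right and left terminal intervals must be chosen consistently, costing an extra $O(n^2)$ over the $O(n^4)$ one-sided tables.

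Correctness then follows by the same exchange argument as in Theorem~\ref{thm:onesided_hbr}, applied independently to each side: given an optimal $D$ with $t=h$ that is neither small nor path-like, Lemma~\ref{lem:length_i_path} guarantees every active point at level $i$ is reached in exactly $i$ hops, Lemma~\ref{lem:nicearbor} gives a nice compatible arborescence, and restricting that arborescence to each side yields the two sub-arborescences counted by the DP; conversely any pair of nice one-sided arborescences glued at $s$ (or at a common $L_1$ vertex) yields a feasible $h$-hop broadcast, because each side covers all lower levels on its side by Observation~\ref{obs:corecover}(iii) and the terminals of $L_h$ on that side by construction. The main obstacle I anticipate is making the side-decomposition rigorous in the presence of bidirectional disks centered in $\core(s)$: an optimal solution could in principle use a disk in $\core(s)$ that covers points on both sides, which is not cleanly ``left'' or ``right.'' I would handle this exactly as the bidirectional case was handled for the non-hop-bounded problem — observe that such a disk forces a constant-size ($2$-hop) structure around $s$, which is already covered by the separate small-solution / bidirectional check, so it does not interfere with the arborescence decomposition in the remaining case.
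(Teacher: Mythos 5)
Your high-level plan matches the paper's (dispose of $t\neq h$, check $2$-hop and path-like solutions separately, then use one-sided nice-arborescence DPs glued at the source), but there are two genuine gaps in the remaining case.

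First, you are missing the \emph{mixed} structure. After ruling out a fully path-like solution, you assume the optimum decomposes into an arborescence on each side. But it can happen that $\mypath^+$ needs $h+1$ hops while $\mypath^-$ needs at most $h$: then the right side must be arborescence-based, yet the optimal left part is a shortest $s\to Q^-$ path, and the minimum arborescence for $L_h^-$ may be strictly more expensive than that path --- or may not even define a feasible broadcast at all (this is exactly the failure mode of Lemma~\ref{lem:arborescenceswork} and Fig.~\ref{fig:weirdarborescence}, where a minimum arborescence misses a point of $L_{h-1}$). So $M^-(s,\civ{1,m^-})$ is the wrong quantity for that side. The paper treats this as a separate solution type (shortest path on one side, nice arborescence on the other, possibly sharing the $L_1$-vertex) and gives an $O(n^5)$ procedure for it using the identity in Equation~\eqref{eq:claim} to detect which $L_1$-vertices can serve as the second vertex of some minimum arborescence.

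Second, your combination formula at the source is too coarse. It allows only two configurations: the two one-sided arborescences share nothing but $s$, or $s$ has a single child $p\in L_1$ rooting both sides. In general $s$ can have several children in $L_1$, each of which is an internal vertex of \emph{both} the left and the right arborescence, serving an interval of left terminals and an interval of right terminals simultaneously; the partition of $\civ{1,m^-}$ and the partition of $\civ{1,m^+}$ among these children are coupled and must be optimized jointly. Your first term $M^+(s,\civ{1,m^+})+M^-(s,\civ{1,m^-})-1$ also over-counts whenever the two independently optimized arborescences happen to reuse the same $L_1$ vertices, so it is only an upper bound. This coupling is precisely why the paper introduces the two-interval subproblems $A\bigl(s,\civ{i,j},\civ{k,\ell}\bigr)$ with the recursion over $\sep(i,j,k,\ell)$, and it is where the $O(n^4)$ table of size times $O(n^2)$ minimization --- hence the $O(n^6)$ bound --- actually comes from. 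You gesture at ``choosing the intervals consistently'' as the source of the extra $O(n^2)$, but the formula you wrote down does not implement it, and without the joint interval-pair DP the algorithm can return a suboptimal answer.
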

\begin{proof}
We first analyze the possible structures of an optimal solution.
\begin{quotation}
\noindent \emph{Claim.}
For any input $P$ inside small strip that has a feasible $h$-hop broadcast set,
there is a minimum $h$-hop broadcast set $D$ that has one of the following structures:
\begin{itemize}
\item \emph{2-hop}: A solution $D$ that does not not contain any active points from $L_2$.
   (Note that such a solution might be optimal even if $h>2$.)
\item \emph{Path-like}: A solution $D$ that consists of two shortest paths,
    one from $s$ to $Q^+$ and one from $s$ to $Q^-$, possibly sharing their first vertex
    after~$s$.
\item \emph{Mixed}: A shortest path on one side, and a nice arborescence
    on the other side, where the shortest path may share its $L_1$-vertex with
    the arborescence.
\item \emph{Arborescence-based}: A single arborescence for $L_h$, which is nice on both sides.
\end{itemize}
\emph{Proof of claim.}
Suppose that there is no optimal 2-hop solution for $P$. Thus any optimal
solution has active points on $L_{\ge 2}$.
Let $\pi^+$ and $\pi^-$ be shortest paths to $Q^+$ and $Q^-$, respectively.
If both $\mypath^+$ and $\mypath^-$ have at most $h-1$ edges then
everything can be reached in $h$ hops. Hence, this is an optimal path-like
solution (since it is minimal even for the non-hop-bounded version).

If $\mypath^+$ has $h+1$ hops and $\mypath^-$ has at most $h$ hops, then
there is no path-like $h$-hop broadcast for the right side of the input, that is,
for the set $P^* := \{P \cap (\disk(s) \cup \cS_{\geq 0})\}$. 
Let $T$ be a minimum-size nice arborescence for~$P^*$. By Lemma~\ref{lem:arborescenceswork} 
and Lemma~\ref{lem:nicearbor}, $T$ gives a minimum $h$-broadcast
set for~$P^*$. Either there is a shortest $s\to Q^-$ path
whose $L_1$-vertex is also in a minimum-size arborescence, or there isn't.
In both cases, the resulting mixed solution must be optimal.
Thus, if exactly one of $\mypath^+$ and
$\mypath^-$ has $h+1$ hops and the other has fewer hops, then there is a mixed optimal solution.

Now suppose both paths have $h+1$ hops. We now now consider an optimal solution~$D$
and extend the definition of the $\mypred$ function (as described below) to conclude
that $D$ defines a nice arborescence. Let $\mypred^+$ be the previously defined function in
$L_{\leq 1} \cup L^+_i$, and let $\mypred^-$ be the same function for the left
side $L_{\leq 1} \cup L^-_i$. Note that points in $L_1$ belong to both sides,
but for a point $p\in L_1$ we have $\mypred^-(p)=\mypred^+(p)=s$, so this is
not an issue. The arborescence defined by this function is nice on both sides
by Lemma~\ref{lem:nicearbor}. In addition, since there is no path-like
$h$-hop broadcast set on either side, the active points corresponding to this
arborescence form a minimum $h$-hop broadcast set: by applying
Lemma~\ref{lem:arborescenceswork} on both sides, we see that the broadcast
set corresponding to this arborescence covers all points.
\hfill {\footnotesize $\Box$}
\end{quotation}
%
%
The best 2-hop solution can be found using our planar 2-hop broadcast algorithm from Theorem~\ref{thm:n4}. The best path-like solution can be found by invoking the
narrow-strip broadcast algorithm from Theorem~\ref{thm:narrowbroadcast},
and checking if it satisfies the hop-bound. It remains to describe how to find
the best mixed and arborescence-based solutions.
\begin{quotation}
\noindent \emph{Claim.} The best mixed solution can be found in $O(n^5)$ time.
\\[2mm]
\emph{Proof of claim.}
Suppose that $Q^-$ can be reached in $t\leq h$ hops. Recall from the one-sided case 
that $Q^-_i$ is the set of points~$p$ such that the shortest path from $p$ to $Q^-$ 
has $i-1$ hops. Thus the set $B^-$ of potential second points of a shortest
$s\to Q^-$ path is equal to $B^- := \disk(s) \cap Q^-_t$.
(This set can be computed using our algorithm from Theorem~\ref{thm:narrowbroadcast}.)
We need to be able to find the potential
second points of a nice arborescence. First, we run the one-sided
dynamic programming algorithm on the set $P^* := \{P \cap (\disk(s) \cup \cS_{\geq 0})\}$, 
which takes $O(n^4)$ time. Let $M(\cdot,[\cdot,\cdot])$ be the resulting
dynamic-programming table. We claim that $p\in L_1$ is a potential second point
if and only if there is an interval $\civ{i,j}$  such that
\begin{equation}\label{eq:claim}
M(s,\civ{1,m^+}) = M(s,\civ{1,i-1}) + M(p,\civ{i,j}) + M(s, \civ{j+1,m^+}) -1,
\end{equation}
where $m^+=|L^+_h|$. 

To prove the claim, first assume that Equality~\eqref{eq:claim} holds.
Then the arborescences corresponding to each $M$-value on the right side are nice
minimum arborescences rooted at $s$, $p$ and $s$ respectively---the fact
that $s$ is counted twice explains the -1 term---and so
their union together with the edge $sp$ is a minimum arborescence
that uses $p$ as as second point. 
On the other hand, if there is a minimum arborescence
using $p$, then there is a nice one and the set of ancestors of $p$ is an
subsequence $q_i,\ldots,q_j$ of~$L^+_h$. The points $q_1,\ldots,q_{i-1}$ 
and $q_{j+1},\ldots,q_{m^+}$ are covered by two nice arborescences rooted at~$s$, 
and the niceness implies that these subtrees only share $s$. 
Thus, Equality~\eqref{eq:claim} holds.

Hence, after filling in all entries in the table $M(\cdot,[\cdot,\cdot])$, we
can find all potential second points in $O(n^3)$ time by
checking all values $i,j$ for each point~$p\in L_1$. If there is such a point $p$ in
$B^-$, then the best mixed solution has size $A(s,\civ{1,m^+}) + t -1$, otherwise it has size
$A(s,\civ{1,m^+}) + t$. With standard techniques, an $h$-hop broadcast set
realizing this optimum can be computed within the same time bound.
\hfill {\footnotesize $\Box$}
\end{quotation}
%
%
\begin{quotation}
\noindent \emph{Claim.} The best arborescence-based solution can be found in $O(n^6)$ time.
\\[2mm]
\emph{Proof of claim.}
In order to find the best arborescence-based solution, we modify the one-sided
algorithm the following way. For all $p \in L_1 \cup \big( \bigcup_{i=2}^h
L^+_i\big)$ we define the subproblems $A^+ \left(p, \civ{i,j}\right)$ as
previously, where $\civ{i,j}$ refers to an interval in the last right side
level $L^+_h$. Similarly, we define an ordering on the last left level based
on $y$-coordinates, and define for all $p \in L_1 \cup \big( \bigcup_{i=2}^h
L^-_i\big)$ the subproblems $A^-\left(p, \civ{i,j}\right)$. We can compute
these values using the one-sided algorithm on both sides.

It will be convenient to generalize the definitions above as follows.
First of all, we extend the definition of $A^+ \left(p, \civ{i,j}\right)$
to include all points $p\in P$---not only the points 
in $L_1 \cup \big( \bigcup_{i=2}^h L^+_i)$---by setting  $A^+\left(p,\civ{i,j}\right):=\infty$
for $p\in L^-_{\geq 2}$. The definition of $A^- \left(p, \civ{i,j}\right)$
is extended similarly. Finally, we define 
$A^+\left(p,\civ{i,j}\right):= 0$ and $A^-\left(p,\civ{i,j}\right):=0$ for $j=i-1$.

We also need a third kind of subproblem. Define $A(p,\civ{i,j},\civ{k,\ell})$
as the number of internal vertices in an optimum arborescence rooted at
$p$ that has leaves $q^-_i,\ldots,q^-_j$ in the last left level and from
$q^-_k,\ldots,q^-_{\ell}$ on the last right level. If $p \neq s$, this can be easily
expressed:
\begin{equation}\label{eq:ADoubleGen}
A\big(p,\civ{i,j},\civ{k,\ell}\big)=A^-\big(p,\civ{i,j}\big) +
A^+\big(p,\civ{k,\ell}\big) -1.
\end{equation}
Note that on the right side of this formula, at least one of the summands is
$\infty$ if $p\in L_{\geq 2}$, and possibly for some points in $L_1$ as
well. Since the formula is so simple, we do not need to compute these values
explicitly. The only computation for this kind of subproblem is required at
the source, for which we require a new notation. Let
\begin{align*} 
\sep(&i,j,k,\ell) := \\ &\big\{(t,u)\; : \;i-1\leq t \leq j \mbox{ and } k-1\leq u \leq
\ell\big\}\setminus \big\{(i-1,k-1),(j,\ell)\big\}.
\end{align*}
The set $\sep(i,j,k,l)$ is a shorthand for the set of pairs $(t,u)$ that
separate the interval pair $\civ{i,j},\civ{k,l}$ into proper
sub-interval-pairs $\civ{i,t},\civ{k,u}$ and $\civ{t+1,j},\civ{u+1,l}$.
Our formula for the source is the following:
\begin{align*}
&A\big(s,\civ{i,j},\civ{k,\ell}\big) =\\
&\min
\begin{dcases}
\min_{(t,u)\in \sep(i,j,k,\ell)}
\Big(A\big(s,\civ{i,t},\
\civ{k,u}\big)+A\big(s,\civ{t+1,j},\civ{u+1,\ell}\big) - 1\Big)\\[-0.5em]
 \hspace{7cm} \text{ if branching at $s$,}\\[1em]
\, \min_{p \in L_1 } \Big(A^-\big(p,\civ{i,j}\big) +
A^+\big(p,\civ{k,\ell}\big)\Big)\\[-0.5em]
\hspace{7cm} \text{ otherwise.}
\end{dcases}
\end{align*}
The initialization of the values is straightforward:
\begin{align*}
A\big(s,\civ{i,i-1},\civ{k,k-1}\big)&=0\\
A\big(s,\civ{i,i},\civ{k,k-1}\big)&=d_{\graph^{*}}(s,q^-_i)\\
A\big(s,\civ{i,i-1},\civ{k,k}\big)&=d_{\graph^{*}}(s,q^+_k)
\end{align*}

Once the one-sided subproblem values are computed, the above dynamic program
can be initialized and computed in increasing order of $(j-i)+(\ell-k)$. The
number of subproblems that we need to compute is $O(n^4)$, each of which
require taking the minimum of $O(n^2)$ values. This enables a running time of
$O(n^6)$. To prove the correctness of the algorithm, we only need to show
that our formulas for $L_1$ and the source are correct. Again, the inequality
$A\big(s,\civ{i,j},\civ{k,\ell}\big) \leq \dots$ is trivial, so we only need
to show that there is an optimal solution which has the desired structure.

We start with an optimal arborescence that is nice when restricted to both
$L_1 \cup \big(\bigcup_{i=2}^h L^-_i\big)$ and $L_1 \cup \big(
\bigcup_{i=2}^h L^+_i \big)$. For a point $p \in L_1$, if the subproblem has
a non-empty interval on both sides, then there is a branching at $p$. The
arborescence can be partitioned into a left and right sub- arborescence, so
equation~\eqref{eq:ADoubleGen} holds.

At the source, we only need to explain the case when there is a branching at
$s$, the other case is trivial. Let $p\in L_1$ be the child of $s$ that has
the smallest $y$-coordinate. Since the left and right sub-arborescences are
nice, the descendant leaves of $p$ on the left form a starting slice
$\civ{i,t}$ of the last level on the left, and the descendant leaves on the
right form a starting slice $\civ{k,u}$ of the last level on the right. The
rest of the intervals are descendants of the other branches. This
demonstrates that the cost of the optimal arborescence can be written as
\[\Big(A\big
(s,\civ{i,t},\civ{k,u}\big)+A\big(s,\civ{t+1,j},\civ{u+1,\ell}\big)\Big)-1.
\tag*{\footnotesize $\Box$}\]
\end{quotation}
%
%
The overall algorithm computes the best feasible broadcast set of each
type, if it exists: 2-hop, path-like, mixed (for both sides), and arborescence-based
Since the minimum broadcast set must have one of these types, the minimum
among these is a minimum $h$-hop broadcast set. The overall running time is~$O(n^6)$.
\qed \end{proof}

\begin{figure}
\begin{center}
\begin{tikzpicture}[x=1.5cm,y=1.5cm,
smallcirc/.style={draw=black,fill=white,circle,inner sep=1pt}]

\draw[very thick, black,<->] (0.25,0.85)--(6.95,0.85);
\node at (3.8,1) {$h$ hops};

\foreach \y in {0.0,0.1,0.4,0.5,...,0.8}
   \draw (0.25,0.35) -- (1,\y);
\foreach \x in {1,2,...,5}
{
   \foreach \y in {0.0,0.1,0.4,0.5,...,0.8}
         \draw (\x,\y) -- (\x+1,\y);
}
\foreach \x in {1,2,...,5}
{
  \draw [red] (\x,0.2) -- (\x+1,0.2);
  \draw [red] (\x,0.3) -- (\x+1,0.3);
}
\draw [red] (0.25,0.35) -- (1,0.2);
\draw [red] (0.25,0.35) -- (1,0.3);
\draw [red] (6,0.2) -- (6.95,0.25);
\draw [red] (6,0.3) -- (6.95,0.25);

\foreach \y in {0.05,0.45,0.65}
{
   \draw (6,\y-0.05) -- (6.95,\y);
   \draw (6,\y+0.05) -- (6.95,\y);
}
\node[smallcirc,label=above:{$s$}] at (0.25,0.35) {};
\foreach \x in {1,2,...,6}
{
   \foreach \y in {0.0,0.1,...,0.8}
      \node[smallcirc] at (\x,\y) {};
}
\foreach \y in {0.05,0.25,0.45,0.65}
{
   \pgfmathtruncatemacro\i{round((\y+0.15)*5)}
   \node[smallcirc,label=right:{$x_\i$}] at (6.95,\y) {};
}
\end{tikzpicture}
\end{center}
\caption{The gadget representing the variables. The red paths form the $x_2$-string.}
\label{fig:bundle}
\end{figure}
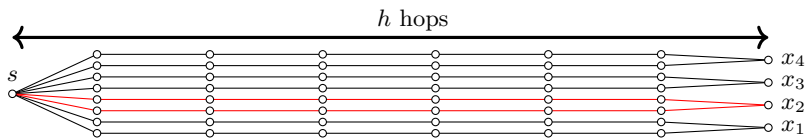

\section{A parameterized look at CDS-UDG}\label{sec:AppHardness}

In this section we prove that \cdsudg is \Wone-hard parameterized by the solution size; our proof heavily relies on the proof of the \Wone-hardness of \dsudg by Marx~\cite{Marx06}.

\mypara{The construction by Marx for DS-UDG.}
Marx uses a reduction from \textsc{Grid Tiling}
\cite{ParamAlg15} (although he does not explicitly state it this way). In a
grid-tiling problem we are given an integer $k$, an integer $n$, and a
collection $\cS$ of $k^2$ non-empty sets $U_{a,b} \subseteq \civ{n} \times
\civ{n}$ for $1 \le a,b \le k$. The goal is to select an element $u_{a,b}\in
U_{a,b}$ for each $1 \le a,b \le k$ such that
\begin{itemize}
\setlength\itemsep{0em}
\item If $u_{a,b}=(x,y)$ and $u_{a+1,b}=(x',y')$, then $x=x'$.
\item If $u_{a,b}=(x,y)$ and $u_{a,b+1}=(x',x')$, then $y=y'$.
\end{itemize}
One can picture these sets in a $k\times k$ matrix: in each cell $(a,b)$, we
need to select a representative from the set $U_{a,b}$ so that the
representatives selected from horizontally neighboring cells agree in the
first coordinate, and representatives from vertically neighboring sets agree
in the second coordinate.

Marx's reduction places $k^2$ gadgets, one for each $U_{a,b}$. A gadget contains
16 blocks of disks, labeled $X_1,Y_1,X_2,Y_2,\dots,X_8,Y_8$, that are
arranged along the edges of a square---see Fig.~\ref{fig:W1-hardness}(i).
Initially, each block $X_{\ell}$  contains $n^2$ disks, denoted by
$X_\ell(1), \dots, X_\ell(n^2)$ and each block $Y_{\ell}$ contains $n^2+1$
disks denoted by $Y_\ell(0), \dots, Y_\ell(n^2)$. The argument $j$ of
$X_\ell(j)$ can be thought of as a pair $(x,y)$ with $1\le x,y\le n$ for
which $f(x,y):=(x-1)n+y=j$. Let $f^{-1}(j)=\left(\iota_1(j),\iota_2(j)\right)
= \left(1+\lfloor j/n\rfloor,1+(j \mod n)\right)$.
\begin{figure}[bt]
    \begin{center}
    \includegraphics[scale=0.9]{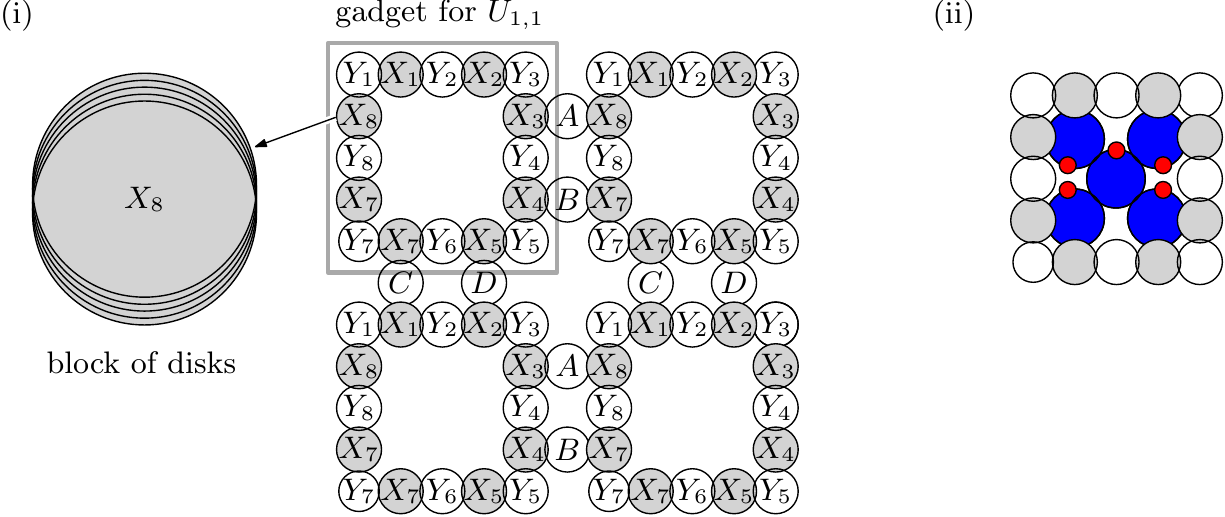}
    \end{center}
    \caption{(i) The construction by Marx. (ii) The idea behind our
    construction.}
    \label{fig:W1-hardness}
\end{figure}
For the final construction, in each gadget at position $(a,b)$, delete all
disks $X_\ell (j)$ for each $\ell=1,\dots,8$ and
$\left(\iota_1(j),\iota_2(j)\right) \not\in U_{a,b}$. This deletion ensures
that the gadgets represent the corresponding set $U_{a,b}$. The construction
is such that a minimum dominating set uses only disks in the $X$-blocks, and
that for each gadget $(a,b)$ the same disk $X_{\ell}(j)$ is chosen for each
$1\leq \ell\leq 8$. This choice signifies a specific choice $u_{a,b}=(x,y)$.
To ensure that the choice for $u_{a,b}$ in the same row and column agrees on
their first and second coordinate, respectively, there are special connector
blocks between neighboring gadgets. The connector blocks are denoted by
$A,B,C$ and $D$ in Fig.~\ref{fig:W1-hardness}(i), and they each contain $n+1$
disks---see Section~\ref{sec:AppHardness} for further details.

\mypara{Our construction for CDS-UDG.}
To extend the construction to \cdsudg, we have to make sure there is a
minimum-size dominating set that is connected. This requires two things.
First, we must add new disks inside the gadgets---that is, in the empty space
surrounded by the $X$- and $Y$-blocks---to guarantee a connection between all
chosen $X_\ell(j)$ disks without interfering with the disks in the
$Y$-blocks. Second, we need to connect all the different gadgets. This time,
in addition to avoiding the $Y$-blocks, we also need to avoid interfering
with the connector blocks.

The idea is as follows. Inside each gadget we add several pairs of disks,
consisting of a \emph{parent} disk and a \emph{leaf} disk. The parent disks
are placed such that, for any choice of one disk from each of the $X$-blocks,
the parent disks together with the eight chosen disks from the $X$-blocks
form a connected set. Moreover, the parent disks do not intersect any disk in
a $Y$-block. See Fig.~\ref{fig:W1-hardness}(ii) for an illustration; the
parent disks are blue in the figure. For each parent disk we add a leaf disk
---the red disks in the figure---that only intersects its parent disk. This
ensures there is a minimum dominating set containing all the parent disks,
which in turn implies that any minimum dominating set for the gadget is
connected.

In Fig.\ref{fig:W1-hardness}(ii) we used disks of different sizes.
Unfortunately this is not allowed, which makes the construction significantly
more tricky. To be able to place the pairs in a suitable way, we need to
create more space inside the gadget. To this end we use a gadget consisting
of 16 (instead of eight) $X$- and $Y$-blocks. This will also give us
sufficient space to put parent-leaf pairs in between the gadgets, so the
dominating sets from adjacent gadgets are connected through the parent disks;
see Section~\ref{sec:AppHardness} for details. Thus the size of a
minimum connected dominating set in the new construction is equal to the size
of a minimum dominating set in the old construction plus the number of parent
disks. Hence, we can decide if the \textsc{Grid Tiling} instance has a
solution by checking the size of the minimum connected dominating set in
our construction. Thus \cdsudg is \Wone-hard. Moreover, if the Exponential
Time Hypothesis (ETH) holds, then there is no algorithm for \textsc{Grid
Tiling} that runs in time $f(k)n^{o(k)}$ \cite{ParamAlg15}. We thus obtain
the following result.
\begin{theorem}\label{thm:W1-hardness}
The broadcast problem and \cdsudg are \Wone-hard when parameterized by the
solution size. Moreover, there is no $f(k)n^{o(\sqrt{k})}$ algorithm for
these problems, where $n$ is the number of input disks and $k$ is the size of
the solution, unless ETH fails.
\end{theorem}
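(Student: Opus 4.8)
\emph{Proof proposal.} The plan is to take Marx's \Wone-hardness reduction for \dsudg from \textsc{Grid Tiling} and augment it so that some minimum dominating set of the constructed unit-disk graph is automatically connected, while the optimum size changes only by an additive term that depends on~$k$ alone. We use the following properties of Marx's construction as a black box (see the construction description above): it places $k^2$ gadgets plus connector blocks between horizontally and vertically adjacent gadgets; every minimum dominating set picks exactly one disk from each $X$-block, picks the \emph{same} index $X_\ell(j)$ in all $X$-blocks of a given gadget, and picks no connector-block disk; and the connector blocks force the indices chosen in a row (resp.\ column) to agree in their first (resp.\ second) coordinate. Hence \textsc{Grid Tiling} has a solution iff this graph has a dominating set of size $D_{\mathrm{old}}(k)$, a value depending only on~$k$.

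First I would enlarge each gadget from eight to sixteen $X$/$Y$-blocks arranged along the edges of a square, exactly as in Fig.~\ref{fig:W1-hardness}(i); this preserves all of the above properties---the chosen index is merely propagated around a longer cycle $X_1\to Y_1\to X_2\to\cdots$---but opens up enough empty area inside each gadget and between gadgets to insert \emph{parent--leaf pairs} of congruent disks. Each parent disk is placed so that, no matter which representative is selected from each $X$-block, the parent disks of a gadget together with the sixteen selected disks induce a connected subgraph; crucially, each parent disk is placed so as to be disjoint from every $Y$-block disk and every connector-block disk, so including it dominates nothing new. Its leaf disk is placed to intersect only its parent. Between adjacent gadgets I would add further parent--leaf pairs whose parents link the solutions of neighbouring gadgets, again avoiding the $Y$-blocks and connectors. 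The leaves force every parent disk into every minimum dominating set of the new graph (a leaf is dominated only by itself or its parent, and the parent dominates a superset of what the leaf does), so a minimum dominating set of the new graph has size exactly $D_{\mathrm{old}}(k)$ plus the fixed, $k$-dependent number $N(k)$ of parent disks, and is connected; conversely any minimum dominating set can be turned into a connected one of the same size by keeping the parents. Thus the new graph has a connected dominating set of size $D_{\mathrm{new}}(k):=D_{\mathrm{old}}(k)+N(k)$ iff the \textsc{Grid Tiling} instance is solvable, which gives \Wone-hardness of \cdsudg parameterized by solution size. For the broadcast problem, observe that every parent disk lies in \emph{every} minimum connected dominating set; designating an arbitrary parent disk as the source~$s$ therefore yields an equivalent broadcast instance, since a minimum connected dominating set containing~$s$ has the same size as an unconstrained one. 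Hence the broadcast problem is \Wone-hard as well.

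For the ETH bound, note that the solution size in the constructed instance is $k_{\mathrm{sol}}=\Theta(k^2)$: there are $\Theta(k^2)$ blocks and $\Theta(k^2)$ parent disks, each contributing $O(1)$ to the optimum. Since \textsc{Grid Tiling} has no $f(k)\,n^{o(k)}$ algorithm unless ETH fails, an $f(k_{\mathrm{sol}})\,n^{o(\sqrt{k_{\mathrm{sol}}})}$ algorithm for \cdsudg or for the broadcast problem would yield an $f'(k)\,n^{o(k)}$ algorithm for \textsc{Grid Tiling}, a contradiction; the square root is precisely the quadratic blow-up of the parameter. The main obstacle is entirely geometric: all disks must be congruent, so the parent--leaf pairs cannot be the convenient small disks sketched in Fig.~\ref{fig:W1-hardness}(ii). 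One has to exhibit explicit positions inside the sixteen-block layout at which a unit disk simultaneously (i)~intersects all relevant $X$-block disks for \emph{every} admissible choice of representatives, (ii)~is disjoint from all $Y$-block and connector-block disks, and (iii)~admits a private leaf disk intersecting nothing else; and likewise for the inter-gadget corridors. Verifying these incidence and non-incidence constraints for every gadget and corridor is the technical heart of the argument, and is where the detailed coordinate bookkeeping is needed.
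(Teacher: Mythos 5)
Your proposal follows essentially the same route as the paper: extend Marx's \textsc{Grid Tiling} reduction by enlarging each gadget to sixteen $X$/$Y$-blocks, insert parent--leaf pairs (inside gadgets and in the inter-gadget corridors) whose leaves force all parents into any minimum dominating set and whose parents connect the canonical choices while avoiding the $Y$- and connector blocks, designate a parent disk as the broadcast source, and derive the $n^{o(\sqrt{k})}$ bound from the quadratic parameter blow-up. The coordinate-level verification you correctly flag as the technical heart (placing congruent parent disks so they meet every admissible $X$-block representative, via the fact that all disks of a block pass through small squares near its boundary) is exactly what the paper's detailed construction supplies.
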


\noindent \emph{Remark.}
Using a modified version of an algorithm by Marx and Pilipczuk~\cite{Marx15},
it is possible to construct an algorithm for \cdsudg with running time
$n^{O(\sqrt{k})}$.

\mypara{Some details of the construction in \cite{Marx06}.} In every block,
the place of each disk center is defined with regard to the midpoint of the
block, $(x(z),y(z))$. The center of each circle is of the form
$(x(z)+\alpha\epsilon,y(z)+\beta\epsilon)$ where $x(z),y(z), \alpha$ and
$\beta$ are integers, and $\epsilon>0$ a small constant. We say that the
\emph{offset} of the disk centered at $(x(z)+\alpha\epsilon,
y(z)+\beta\epsilon)$ is $(\alpha,\beta)$. Note that $|\alpha|,|\beta|\le n$,
and $\epsilon < n^{-2}$, so the disks in a block all intersect each other.
The offsets of $X$ and $Y$-blocks are defined as follows.

\begin{center}
\begin{tabular}{ll}
offset$(X_1(j))=( j,-\iota_2(j))$ & offset$(Y_1(j))=(j+0.5,j+0.5)$ \\

offset$(X_2(j))=( j, \iota_2(j))$ & offset$(Y_2(j))=(j+0.5,-n)$ \\

offset$(X_3(j))=(-\iota_1(j),-j)$ & offset$(Y_3(j))=(j+0.5,-j-0.5)$ \\

offset$(X_4(j))=( \iota_1(j),-j)$ & offset$(Y_4(j))=(-n,-j-0.5)$ \\

offset$(X_5(j))=(-j, \iota_2(j))$ & offset$(Y_5(j))=(-j-0.5,-j-0.5)$ \\

offset$(X_6(j))=(-j,-\iota_2(j))$ & offset$(Y_6(j))=(-j-0.5,n)$ \\

offset$(X_7(j))=( \iota_1(j), j)$ & offset$(Y_7(j))=(-j-0.5,j+0.5)$ \\

offset$(X_8(j))=(-\iota_1(j), j)$ & offset$(Y_8(j))=(n,j+0.5)$ \\
\end{tabular}
\end{center}

We remark some important properties. First, two disks can intersect only if
they are in the same or in neighboring blocks. Consequently, one needs at
least eight disks to dominate a gadget. The second important property is that
disk $X_\ell(j)$ dominates exactly $Y_{\ell}(j),\dots,Y_{\ell}(n^2)$ from the
``previous'' block $Y_\ell$, and $Y_{\ell+1}(0),\dots,Y_{\ell+1}(j-1)$ from
the ``next'' block $Y_{\ell+1}$. This property can be used to prove the
following key lemma.

\begin{lemma}[Lemma 1 of~\cite{Marx06}]\label{lem:Mgadget}
Assume that a gadget is part of an instance such that none of the blocks
$Y_i$ are intersected by disks outside the gadget. If there is a dominating
set $\Delta$ of the instance that contains exactly $8k^2$ disks, then there
is a \emph{canonical} dominating set $\Delta'$ with $|\Delta'| = |\Delta|$,
such that for each gadget $\graph$, there is an integer $1 \le j^G \le n$
such that $\Delta'$ contains exactly the disks $X_1(j^G), \dots, X_8(j^G)$
from $\graph$.
\end{lemma}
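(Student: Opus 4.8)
The plan is to prove the lemma in four stages: a counting bound that pins down exactly which blocks the $8k^2$ disks occupy; a \emph{uniformity} argument showing that each gadget commits entirely to its $X$-blocks or entirely to its $Y$-blocks; a cyclic monotonicity argument forcing a single index inside each gadget; and finally a swap turning any $Y$-committed gadget into a canonical $X$-committed one.

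First I would establish the counting bound. The cyclic order of the blocks around a gadget is $Y_1,X_1,Y_2,X_2,\dots,Y_8,X_8$, so $X_\ell$ sits between $Y_\ell$ (the ``previous'' block) and $Y_{\ell+1}$ (the ``next'' block), and two disks intersect only if they lie in the same or in neighbouring blocks. For each $\ell$ consider the disk $Y_\ell(n^2)$. By the domination property (the second property noted above), $X_\ell(j)$ dominates $Y_\ell(j),\dots,Y_\ell(n^2)$, whereas $X_{\ell-1}(j)$ dominates only $Y_\ell(0),\dots,Y_\ell(j-1)$; hence $Y_\ell(n^2)$ can be dominated only by a disk of $X_\ell\cup Y_\ell$ (a disk outside the gadget is excluded, since the hypothesis forbids external disks from meeting any $Y$-block). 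The eight sets $X_\ell\cup Y_\ell$ are pairwise disjoint within a gadget, and distinct gadgets are mutually disjoint, so the $k^2$ gadgets force $|\Delta|\ge 8k^2$. As $|\Delta|=8k^2$, each set $X_\ell\cup Y_\ell$ receives exactly one disk $d_\ell$, and no disk of $\Delta$ lies in a connector block.

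Next I would prove uniformity and the common index. Call $\ell$ of type $X$ if $d_\ell\in X_\ell$, and of type $Y$ if $d_\ell\in Y_\ell$. By the domination property no disk of $X_\ell$ dominates $Y_\ell(0)$, so this disk must be dominated by $d_\ell\in Y_\ell$ or by $d_{\ell-1}\in X_{\ell-1}$; thus the pattern $(\mathrm{type}(\ell-1),\mathrm{type}(\ell))=(Y,X)$ is forbidden. Since the blocks form a cycle, there can be no $Y\to X$ transition at all, so every gadget is \emph{uniform}: either all eight $d_\ell$ lie in $X$-blocks or all eight lie in $Y$-blocks. In an $X$-uniform gadget, writing $d_\ell=X_\ell(j_\ell)$, full coverage of $Y_\ell$ requires the prefix $Y_\ell(0),\dots,Y_\ell(j_\ell-1)$ to be supplied by $X_{\ell-1}(j_{\ell-1})$, forcing $j_{\ell-1}\ge j_\ell$; chaining this inequality around the cycle forces $j_1=\dots=j_8$, a single value I call $j^G$. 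The identical monotonicity argument, now applied to coverage of the $X$-blocks, forces a common index in a $Y$-uniform gadget.

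Finally I would canonicalize by turning any $Y$-uniform gadget (with common index $m$) into an $X$-uniform one: replace each $Y_\ell(m)$ by $X_\ell(m)$. The domination property shows the new set has the same size and still dominates all sixteen blocks of the gadget, since each $X_\ell$ is then dominated from within its own block and each $Y_\ell$ is covered by $X_{\ell-1}(m)$ together with $X_\ell(m)$. After this swap every gadget is $X$-uniform with a single index $j^G$, which is exactly the asserted canonical set $\Delta'$. I expect the main obstacle to be this last stage: one must verify, from the offset table and the asymmetric prefix/suffix behaviour of the domination property, that the swap preserves domination of the disks lying \emph{outside} the gadget, and that the target disks $X_\ell(m)$ are present. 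Both points hinge on the connector blocks: because the counting bound leaves no disk of $\Delta$ inside any connector, the connectors must be dominated by the $X$-disks of the adjacent gadgets, which both rules out genuinely $Y$-uniform gadgets in a dominating set of the whole instance and guarantees that $X_\ell(j^G)$ survives the deletion. Carefully discharging this bookkeeping is the delicate part of the argument.
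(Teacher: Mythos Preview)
The paper does not prove this lemma; it quotes it as Lemma~1 of Marx~\cite{Marx06} and uses it as a black box. Your proposal is therefore a reconstruction of Marx's argument from the two block-intersection properties the paper summarises, and there is no in-paper proof to compare against.

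Stages~1--3 are the standard argument and are correct: the witness $Y_\ell(n^2)$ pins exactly one disk of $\Delta$ to each set $X_\ell\cup Y_\ell$; the witness $Y_\ell(0)$ forbids the cyclic transition $(\text{type }Y)\to(\text{type }X)$, forcing uniformity; and the prefix/suffix coverage of $Y_\ell$ gives the cyclic chain $j_{\ell-1}\ge j_\ell$ in the $X$-uniform case, hence a common index.

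Stage~4 has a fixable but genuine slip. Your swap $Y_\ell(m)\mapsto X_\ell(m)$ need not land on an existing disk, because the construction deletes every $X_\ell(j)$ with $f^{-1}(j)\notin U_{a,b}$, and the common $Y$-index $m$ has no reason to satisfy $f^{-1}(m)\in U_{a,b}$. Your proposed rescue---that connector blocks force every gadget to be $X$-uniform---is morally right for a gadget with at least one neighbour, but the lemma is stated for any instance satisfying the $Y$-block isolation hypothesis and makes no reference to connectors; in particular it must also cover the degenerate case $k=1$. The clean repair is to swap a $Y$-uniform gadget to $X_1(j^\star),\dots,X_8(j^\star)$ for \emph{any} fixed $j^\star$ with $f^{-1}(j^\star)\in U_{a,b}$ (such an index exists since $U_{a,b}\neq\emptyset$): inside the gadget these eight disks cover every $X_\ell$ (same block) and every $Y_\ell$ (the common-index prefix/suffix property), and outside the gadget nothing is lost because the original $Y$-disks, by hypothesis, dominated nothing there. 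With this correction your outline goes through; the $Y$-uniform monotonicity you assert in Stage~3 is then never needed and can be dropped.
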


In the gadget $G_{a,b}$, the value $j$ defined in the above lemma represents
the choice of $s_{a,b}=\left(\iota_1(j),\iota_2(j)\right)$ in the grid tiling
problem. Our deletion of certain disks in $X$-blocks ensures that
$\left(\iota_1(j),\iota_2(j)\right) \in U_{a,b}$. Finally, in order to get a
feasible grid tiling, gadgets in the same row must agree on the first
coordinate, and gadgets in the same column must agree on the second
coordinate. These blocks have $n+1$ disks each, with indices $0,1,\dots,n$.
We define the offsets in the connector gadgets the following way.

\begin{center}
\begin{tabular}{ll}
offset$(A_j)=(-j-0.5,-n^2-1)$ & offset$(B_j)=(j+0.5,n^2+1)$ \\

offset$(C_j)=(n^2+1,-\iota_2(j))$ & offset$(D_j)=(-n^2-1,\iota_2(j))$ \\
\end{tabular}
\end{center}

Using this definition, it is easy to prove the following lemma.

\begin{lemma}\label{lem:Mconnect}
Let $\Delta$ be a canonical dominating set. For horizontally neighboring
gadgets $\graph$ and $H$ representing $j_G$ and $j_H$, the disks of the
connector block $A$ are dominated if and only if $\iota_1(j_G)\le
\iota_1(j_H)$; the disks of $B$ are dominated if and only if $\iota_1(j_G)\ge
\iota_1(j_H)$. Similarly, for vertically neighboring blocks $G'$ and $H'$,
the disks of block $C$ are dominated if and only if $\iota_2(j_{G'})\le
\iota_2(j_{H'})$; the disks of $D$ are dominated if and only if
$\iota_2(j_{G'})\ge \iota_2(j_{H'})$.
\end{lemma}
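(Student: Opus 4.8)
The plan is to combine Lemma~\ref{lem:Mgadget} with a short distance computation of exactly the same flavour as the one behind the ``$X_\ell(j)$ dominates $Y_\ell(j),\dots,Y_\ell(n^2)$ and $Y_{\ell+1}(0),\dots,Y_{\ell+1}(j-1)$'' property. Fix a canonical dominating set $\Delta$. By Lemma~\ref{lem:Mgadget}, $\Delta$ consists exactly of the disks $X_1(j_G),\dots,X_8(j_G)$ over all gadgets $G$; in particular it contains no disk of any connector block. So to see whether the disks of a connector block are dominated we only need to know which $\Delta$-disks can reach them.

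First I would prove a locality claim. Since two disks of the construction lie within unit distance only when they are in the same block or in neighbouring blocks, a disk of the connector block $A$ between the horizontally neighbouring gadgets $G$ and $H$ can only be dominated by disks of $A$ itself, of some $Y$-blocks and further connector blocks adjacent to $A$, or of the two $X$-blocks of $G$ and $H$ that are adjacent to $A$ (reading Fig.~\ref{fig:W1-hardness}(i), these are two of $X_3,X_4,X_7,X_8$, the blocks whose offset encodes the first coordinate $\iota_1(\cdot)$). Of all these, only the two $X$-blocks contain disks of the canonical set $\Delta$. Hence all disks of $A$ are dominated if and only if the two disks $X_{\ell_G}(j_G)$ and $X_{\ell_H}(j_H)$ together dominate $A_0,\dots,A_n$.

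The remaining step is the distance computation. Substituting $\mathrm{offset}(A_i)=(-i-0.5,\,-n^2-1)$ and the offsets of the relevant $X$-blocks into $|c_1-c_2|\le 1$, and using that $\epsilon$ is small enough that the inequality degenerates to a comparison of offsets (just as in \cite{Marx06}), one gets that one of the two $X$-disks dominates exactly a suffix $A_{\iota_1(j_G)},\dots,A_n$ of the block and the other dominates exactly a prefix $A_0,\dots,A_{\iota_1(j_H)-1}$ (which of $G,H$ gives the prefix is fixed by which side of the gadget pair the connector lies on, and this is precisely the distinction between the connector blocks $A$ and $B$). The two index ranges cover $\{0,1,\dots,n\}$ exactly when $\iota_1(j_G)\le \iota_1(j_H)$, which is the claim for $A$; for $B$, whose offset $(i+0.5,\,n^2+1)$ is the mirror image, prefix and suffix swap and one gets $\iota_1(j_G)\ge \iota_1(j_H)$. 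Blocks $C$ and $D$ lie between vertically neighbouring gadgets, their offsets $(n^2+1,-\iota_2(i))$ and $(-n^2-1,\iota_2(i))$ involve the second coordinate, and they are adjacent to the $X$-blocks $X_1,X_2,X_5,X_6$ whose offsets encode $\iota_2(\cdot)$; repeating the argument verbatim yields the stated conditions $\iota_2(j_{G'})\le\iota_2(j_{H'})$ and $\iota_2(j_{G'})\ge\iota_2(j_{H'})$.

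The only part that demands care is the locality claim: one must read off from the geometry of Marx's construction exactly which blocks are adjacent to each connector block and check that no other disk of a canonical $\Delta$ can reach it. Once that is settled the rest is the same offset bookkeeping already carried out for the interaction between $X$- and $Y$-blocks, which is why the statement is ``easy to prove''.
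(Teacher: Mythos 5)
Your proof is correct, and it is exactly the argument the paper has in mind: the paper omits the proof entirely (stating only that the lemma ``is easy to prove'' from the offsets), and your combination of the locality observation (a canonical set contains no connector or $Y$-block disks, so only the two adjacent $X$-blocks matter) with the standard prefix/suffix offset computation is the intended justification. The suffix $A_{\iota_1(j_G)},\dots,A_n$ / prefix $A_0,\dots,A_{\iota_1(j_H)-1}$ split you derive matches the construction, so nothing is missing.
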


With the above lemmas, the correctness of the reduction follows. A feasible
grid tiling defines a dominating set of size $8k^2$: in gadget $G_{a,b}$, the
dominating disks are $X_\ell\left(f(s_{a,b})\right), \;\ell=1,\dots,8$. On
the other hand, if there is a dominating set of size $8k^2$, then there is a
canonical dominating set of the same size  that defines a feasible grid
tiling.

\mypara{Details of the CDS-UDG construction.} To extend the construction to
\cdsudg, we want to make sure that minimum-size dominating set is connected.
This requires two things. First, we must add new disks ``inside'' the gadgets
--- that is, in the empty space surrounded by the $X$ and $Y$-blocks --- such
that a canonical minimum dominating set includes some new disks that connect
the chosen $X_\ell(j)$ disks without interfering with disks in the
$Y$-blocks. Second, we need to connect all the different gadgets. This time
in addition to avoiding the $Y$-blocks, we also need to avoid interfering
with the connector blocks.

In order to have enough space, our gadgets contain 16 $X$-blocks and
16 $Y$-blocks instead of eight. The offsets of disks inside the blocks
are not modified: we use the same building blocks.
Fig.~\ref{fig:connectblocks} shows how we arrange these blocks, and depicts
the connector block placement.

The analogue of Lemma~\ref{lem:Mgadget} and Lemma~\ref{lem:Mconnect} are true
here; we have a construction that could be used to prove the \Wone-hardness of
\dsudg, with canonical sets of size $16k^2$, that contain one disk from each
$X$-block and $X'$-block. We extend this construction with parent-leaf pairs
so that we have canonical dominating sets that span a connected subgraph.

The most important property of the blocks that we use is that for a small
enough value $\epsilon$, the boundaries of the disks in a block all lie
inside a small width annulus - for this reason, the blocks in our pictures
are depicted with thick boundary disks. In order for a parent disk $p$ to
intersect every disk in a block it is sufficient if the boundary of $p$
crosses this annulus.

We are going to add 72 extra disks to every gadget, and 4 ``connector'' disks
between every pair of horizontally or vertically neighboring gadgets,
resulting in canonical dominating sets of size  $16k^2 + 36k^2 +
4k(k-1)=56k^2 - 4k$ (Note that only the parent disks are included in the
canonical set). In other words, the new construction has a connected
dominating set of size $56k^2 - 4k$ if and only if there is a feasible grid
tiling.

Inside any of the blocks, all offsets are in the rectangle with bottom left
$(-n^2-1,-n^2-1)$ and top right $(n^2+1,n^2+1)$. Consequently, every circle
in the block with center $(r,s)$ passes through the square with bottom left
$\big((-n^2-1)\epsilon, 1-(n^2+1)\epsilon\big)$ and top right
$\big((n^2+1)\epsilon, 1+(n^2+1)\epsilon\big)$. There are three similar
squares that also have this property, which we can get by rotating the square
around the midpoint of the block by $90$, $180$ and $270$ degrees.
Consequently, a unit disk that contains such a square intersects all the
disks in the given block. For an example with $n=3$ and $\epsilon = 0.02$ for
the block $X_2$, see Fig.~\ref{fig:smallsquares}.

\begin{figure}
\begin{center}
\begin{tikzpicture}[x=2.2 cm, y=2.2 cm]
\clip (-1.31,-1.31) rectangle (1.31,1.31);
\draw [step=1,dotted] (-1,-1) grid (1,1);
\pgfmathsetmacro{\k}{0.02}
\fill [draw=black, fill=black, fill opacity = 0.1] (\k,\k) circle (1);
\fill [draw=black, fill=black, fill opacity = 0.1] (2*\k,2*\k) circle (1);
\fill [draw=black, fill=black, fill opacity = 0.1] (3*\k,3*\k) circle (1);
\fill [draw=black, fill=black, fill opacity = 0.1] (4*\k,1*\k) circle (1);
\fill [draw=black, fill=black, fill opacity = 0.1] (5*\k,2*\k) circle (1);
\fill [draw=black, fill=black, fill opacity = 0.1] (6*\k,3*\k) circle (1);
\fill [draw=black, fill=black, fill opacity = 0.1] (7*\k,1*\k) circle (1);
\fill [draw=black, fill=black, fill opacity = 0.1] (8*\k,2*\k) circle (1);
\fill [draw=black, fill=black, fill opacity = 0.1] (9*\k,3*\k) circle (1);
\draw [very thick, red] (-10*\k,1-10*\k) rectangle (10*\k,1+10*\k);
\draw [very thick, red] (-10*\k,-1-10*\k) rectangle (10*\k,-1+10*\k);
\draw [very thick, red] (1-10*\k,-10*\k) rectangle (1+10*\k,10*\k);
\draw [very thick, red] (-1-10*\k,-10*\k) rectangle (-1+10*\k,10*\k);
\end{tikzpicture}
\end{center}
\caption{Circles in a block. The squares intersect every disk in the
block.}\label{fig:smallsquares}
\end{figure}
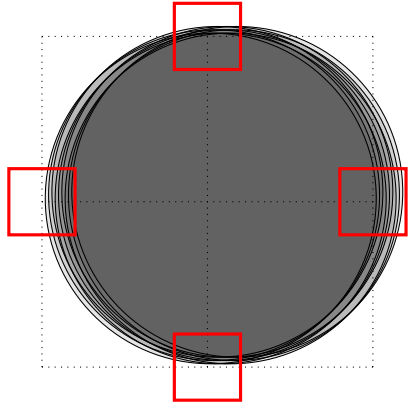

\medskip
\noindent\textit{Connecting neighboring gadgets.} For a pair of horizontally
neighboring gadgets, we add two pairs of disks that connect $X'_3$ from the
left gadget to $X'_8$ in the right gadget. This arrangement is depicted on
the left of Fig.~\ref{fig:connectside}. The parent disk with center $T_1$
intersects every disk in the block $X'_3$ of the left gadget, and the other
parent intersects every disk in the block $X'_8$. The two leaf disks (red
disks in the figure) only intersect their parent. Let the origin be the
center of the block $X'_3$ in the left gadget. The coordinates for the disk
centers are:

\begin{align*}
T_1 =& (1.3,0.4) \;\;\;\; U_1 = (2,1.55) \\
T_2 =& (2.7,-0.4) \;\;\;\; U_2 = (2,-1.55)
\end{align*}

We use a rotated version of these four disks for vertical connections, where
the parents connect $X'_5$ from the upper gadget and $X'_2$ from the lower
gadget.

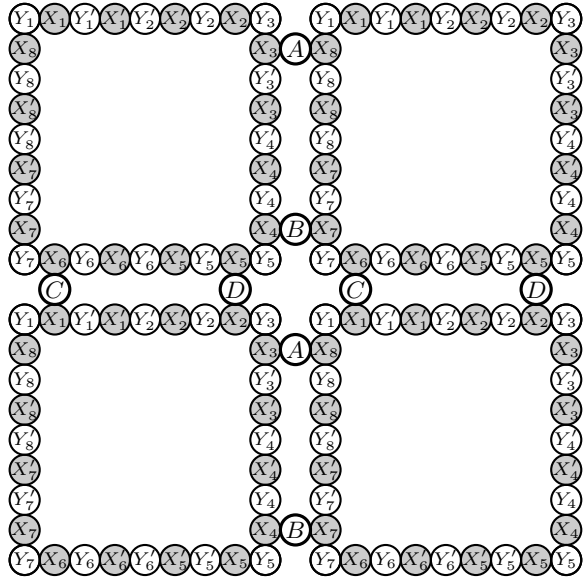
\begin{figure}
\begin{center}
\begin{tikzpicture}[x=0.2 cm,y=0.2 cm]
\begin{scriptsize}
\blocks
\begin{scope}[shift={(20,0)}]
\blocks
\end{scope}
\begin{scope}[shift={(20,20)}]
\blocks
\end{scope}
\begin{scope}[shift={(0,20)}]
\blocks
\end{scope}
\end{scriptsize}
\foreach \x in {0,4,8,12,16,20,24,28,32,36}
   \foreach \y in {0,16,20,36}
      {\draw [ybi] (\x,\y) circle (1);
      \draw [ybi] (\y,\x) circle (1);}
\foreach \x in {2,6,10,14,22,26,30,34}
   \foreach \y in {0,16,20,36}
      {\draw [xbi] (\x,\y) circle (1);
      \draw [xbi] (\y,\x) circle (1);}
\begin{footnotesize}
\draw [draw=black,fill=none,very thick] (18,2) circle (1);
\node [black] at (18,2) {$B$};
\draw [draw=black,fill=none,very thick] (18,14) circle (1);
\node [black] at (18,14) {$A$};
\draw [draw=black,fill=none,very thick] (18,22) circle (1);
\node [black] at (18,22) {$B$};
\draw [draw=black,fill=none,very thick] (18,34) circle (1);
\node [black] at (18,34) {$A$};

\draw [draw=black,fill=none,very thick] (2,18) circle (1);
\node [black] at (2,18) {$C$};
\draw [draw=black,fill=none,very thick] (14,18) circle (1);
\node [black] at (14,18) {$D$};
\draw [draw=black,fill=none,very thick] (22,18) circle (1);
\node [black] at (22,18) {$C$};
\draw [draw=black,fill=none,very thick] (34,18) circle (1);
\node [black] at (34,18) {$D$};
\end{footnotesize}
\end{tikzpicture}
\end{center}
\caption{Connecting neighboring gadgets}\label{fig:connectblocks}
\end{figure}

\begin{figure}
\begin{center}

\begin{tikzpicture}[x=0.8cm,y=0.8cm]
\pgfmathsetmacro{\de}{0.12}
\ccc{6,6}{6-\de,6}
\ccc{8,6}{8,6+4*\de}
\ccc{10,6}{10,6-\de}
\ccc{10,8}{10-4*\de,8}
\ccc{10,10}{10+\de,10}
\ccc{8,10}{8,10-4*\de}
\ccc{6,10}{6,10+\de}
\ccc{6,8}{6+4*\de,8}
\end{tikzpicture}
\hspace{1cm}
\begin{tikzpicture}[x=0.28cm, y=0.28cm]
\clip (-1.1,-1.1) rectangle (17.1,17.1);
\draw [shift={(1,1)},step=2,dotted] (-3,-3) grid (16,16);
\pgfmathsetmacro{\de}{0.12}
\pgfmathsetmacro{\shift}{2-\de*\de-\de*\de*\de*\de}
\pgfmathsetmacro{\base}{2-\de}
\ccc{6,6}{6-\de,6}
\ccc{8,6}{8,6+4*\de}
\ccc{10,6}{10,6-\de}
\ccc{10,8}{10-4*\de,8}
\ccc{10,10}{10,10+\de}
\ccc{8,10}{8,10-4*\de}
\ccc{6,10}{6,10+\de}
\ccc{6,8}{6+4*\de,8}
\foreach \x in {0,4,8,12,16}
   \foreach \y in {0,16}
   {
      \draw [yb,line width=2pt] (\x,\y) circle (1);
      \draw [yb,line width=2pt] (\y,\x) circle (1);
   }
\foreach \x in {2,6,10,14}
   \foreach \y in {0,16}
   {
      \draw [xb,line width=2pt] (\x,\y) circle (1);
      \draw [xb,line width=2pt] (\y,\x) circle (1);
   }
\end{tikzpicture}
\caption{Left: the circles in the center of every gadget; Right: placement
inside a gadget}\label{fig:center}
\end{center}
\end{figure}
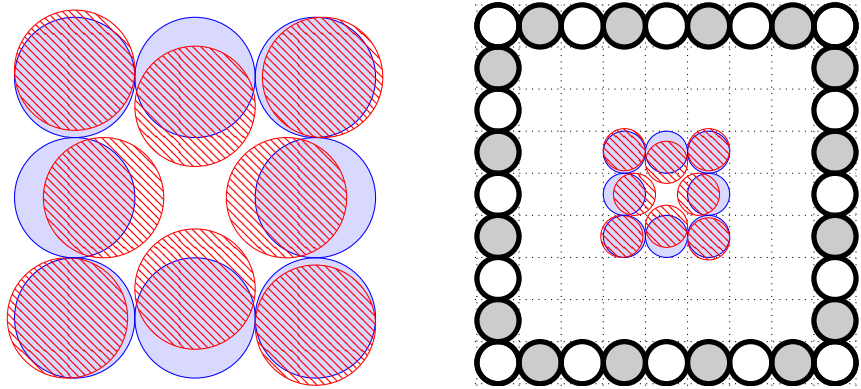

\begin{figure*}
\begin{center}
\begin{tikzpicture}[x=0.65cm,y=0.65cm]
\clip (-0.5,-3.2) rectangle (4.5,3.5);
\draw [shift={(1,1)},step=2,dotted] (-5,-5) grid (5,5);
\begin{Large}

\node at (0,-2) {$Y'_4$};
\node at (0,0) {$X'_3$};
\node at (0,2) {$Y'_3$};

\node at (4,2) {$Y_8$};
\node at (4,0) {$X'_8$};
\node at (4,-2) {$Y'_8$};

\node at (2,3.2) {$A$};
\end{Large}
\pgfmathsetmacro{\de}{0.05}
\draw [xb] (0,0) circle (1);
\draw [xb] (4,0) circle (1);
\draw [xb] (0,4) circle (1);
\draw [xb] (4,4) circle (1);
\draw [xb] (0,-4) circle (1);
\draw [xb] (4,-4) circle (1);
\draw [yb] (2,4) circle (1);
\draw [yb] (0,2) circle (1);
\draw [yb] (0,-2) circle (1);
\draw [yb] (4,2) circle (1);
\draw [yb] (4,-2) circle (1);
\cc{1.3,0.4}{2,1.55}
\cc{2.7,-0.4}{2,-1.55}

\node[draw=black,circle,inner sep=1pt,label=below:{$T_1$}] at (1.3,0.4) {};
\node[draw=red,circle,inner sep=1pt,label={[color=red]above:$U_1$}] at
(2,1.55) {};

\node[draw=black,circle,inner sep=1pt,label={$T_2$}] at (2.7,-0.4) {};
\node[draw=red,circle,inner sep=1pt,label={[color=red]below:$U_2$}] at
(2,-1.55) {};
\end{tikzpicture}
\hfill
\begin{tikzpicture}[x=0.6cm,y=0.6cm]
\clip (-1.1,-1.1) rectangle (12.5,6.3);
\draw [shift={(1,1)},step=2,dotted] (-2,-2) grid (15,9);
\begin{Large}
\blocks
\end{Large}
\pgfmathsetmacro{\de}{0.05}
\pgfmathsetmacro{\shift}{2-\de*\de-\de*\de*\de*\de}
\pgfmathsetmacro{\base}{2-\de}
\cc{\base,2-\de}{\base,3-\de}
\cc{\base + \shift,2+\de}{\base + \shift,2+2*\de}
\cc{\base + 2*\shift,2-\de}{\base + 2*\shift,2+6*\de}
\cc{\base + 3*\shift,2+\de}{\base + 3*\shift,2+2*\de}
\cc{\base + 4*\shift,2-\de}{11,2-\de}
\cc{\base + 4*\shift,4-2*\de}{11,4}
\cc{8,4+3*\de}{7,4+3*\de}
\ccc{6,6}{6-\de,6}
\ccc{8,6}{8,6+4*\de}
\ccc{10,6}{10,6-\de}
\ccc{10,8}{10-4*\de,8}
\ccc{10,10}{10,10+\de}
\ccc{8,10}{8,10-4*\de}
\ccc{6,10}{6,10+\de}
\ccc{6,8}{6+4*\de,8}
\foreach \x in {0,4,8,12,16}
   \foreach \y in {0,16}
   {
      \draw [yb,line width=2pt] (\x,\y) circle (1);
      \draw [yb,line width=2pt] (\y,\x) circle (1);
   }
\foreach \x in {2,6,10,14}
   \foreach \y in {0,16}
   {
      \draw [xb,line width=2pt] (\x,\y) circle (1);
      \draw [xb,line width=2pt] (\y,\x) circle (1);
   }
\end{tikzpicture}
\end{center}
\caption{Left: connecting horizontally; right: connecting one side to the
middle.}\label{fig:connectside}
\end{figure*}
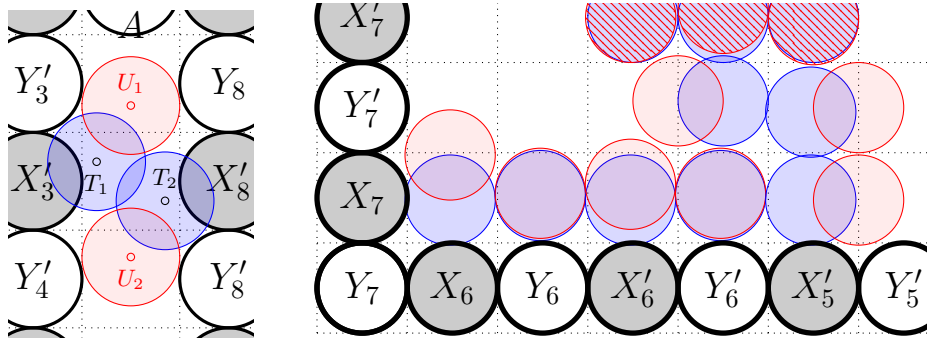

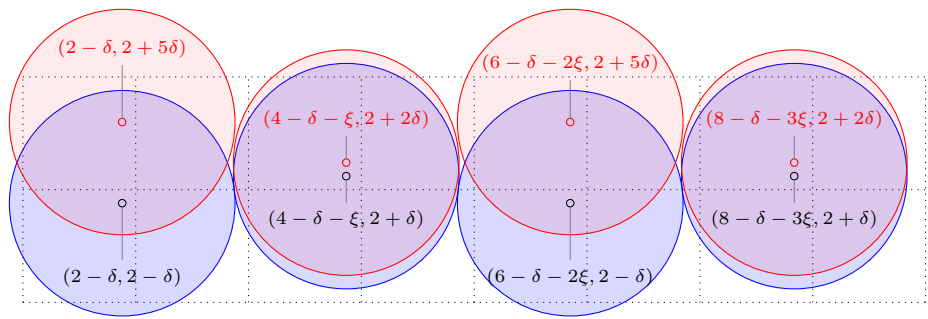
\begin{figure}
\begin{center}
\begin{tikzpicture}[x=1.5cm,y=1.5cm]
\draw [step=1,dotted] (1,1) grid (9,3);
\pgfmathsetmacro{\de}{0.12}
\pgfmathsetmacro{\shift}{2-\de*\de-\de*\de*\de*\de}
\pgfmathsetmacro{\base}{2-\de}
\cc{\base,2-\de}{\base,2+5*\de}
\cc{\base + \shift,2+\de}{\base + \shift  ,2+2*\de}
\cc{\base + 2*\shift,2-\de}{\base + 2*\shift,2+5*\de}
\cc{\base + 3*\shift,2+\de}{\base + 3*\shift,2+2*\de}
\begin{scriptsize}
\node[draw=black,circle,inner sep=1pt,pin={[pin
distance=7mm,black]-90:{$(2-\delta,2-\delta)$}}] at (\base,\base) {};
\node[draw=red,circle,inner sep=1pt,pin={[pin
distance=7mm,red]90:{$(2-\delta,2+5\delta)$}}] at (\base,\base+6*\de) {};

\node[draw=black,circle,inner sep=1pt,pin={[pin
distance=3mm,black]-90:{$(4-\delta-\xi,2+\delta)$}}] at (\base+\shift,2+\de)
{};
\node[draw=red,circle,inner sep=1pt,pin={[pin distance=3mm,
red]90:{$(4-\delta-\xi,2+2\delta)$}}] at (\base+\shift,2+2*\de) {};

\node[draw=black,circle,inner sep=1pt,pin={[pin
distance=7mm,black]-90:{$(6-\delta-2\xi,2-\delta)$}}] at
(\base+2*\shift,2-\de) {};
\node[draw=red,circle,inner sep=1pt,pin={[pin distance=5mm,
red]90:{$(6-\delta-2\xi,2+5\delta)$}}] at (\base+2*\shift,2+5*\de) {};

\node[draw=black,circle,inner sep=1pt,pin={[pin
distance=3mm,black]-90:{$(8-\delta-3\xi,2+\delta)$}}] at
(\base+3*\shift,2+\de) {};
\node[draw=red,circle,inner sep=1pt,pin={[pin distance=3mm,
red]90:{$(8-\delta-3\xi,2+2\delta)$}}] at (\base+3*\shift,2+2*\de) {};
\end{scriptsize}
\end{tikzpicture}
\end{center}
\caption{The zig-zag arrangement}\label{fig:zigzag}
\end{figure}

\noindent\textit{Disks inside gadgets.} We begin by adding eight disk pairs
to the center. The parents are arranged in a square, touching the neighbors,
and the leafs are placed so that it is possible to connect from the outside
on each side. See Fig.~\ref{fig:center} for a picture: the corresponding leaf
disks have parallel lines as a pattern.

Let $\delta>0$ be a small constant to be specified later. From now on, we fix
the origin in the center of the bottom left block, $Y_7$. The coordinates of
the disks centers are given below; in each pair we specify the coordinates of
a parent and its leaf.

\[\begin{matrix}
(6,6),(6-\delta,6) & (8,6),(8,6+4\delta) &
(10,6),(10,6-\delta) & (10,8),(10-4\delta,8) \\
(10,10),(10,10+\delta) & (8,10),(8,10-4\delta) &
(6,10),(6,10+\delta) & (6,8),(6+4\delta,8)\\
\end{matrix}\]

In order to connect the $X$-blocks, we need to connect the blocks of each
side to the central disks. For this purpose, we are going to use a zigzag
pattern of disks. The first parent disk intersects all disks in $X_6$ and
$X_7$ (i.e., it crosses the small squares of $X_6$ and $X_7$ that are facing
the inside of the gadget). The second parent is above the block $Y_6$, but it
is disjoint from it. The next with center $p_3$ intersects all disks in
$X'_6$ , and the disk around $p_4$ is disjoint from the disks in $Y'_6$.
Finally, the disk around $p_5$ intersects all disks in $X'_5$. See
Fig.~\ref{fig:zigzag} for an example. The leafs follow a more complicated
pattern. In our zigzag pattern, two neighboring parents touch each other. We
need them to have distance $2\delta$ along the $y$ axis, so the distance
along the $x$ axis is $\sqrt{4-4\delta^2}$. Let $\xi=2-\sqrt{4-4\delta^2}$.
Note that
\[2-\delta^2-\delta^4 < \sqrt{4-4\delta^2} < 2-\delta^2,\]
so $\delta^2 < \xi <\delta^2 + \delta^4$. We add two more disk pairs to this
pattern, and some modifications to the leafs. These seven disk pairs are
depicted on the right side of Fig.~\ref{fig:connectside}. We list the
coordinates of the disk centers below.

\begin{align*}
p_1 =& (2- \delta,2-\delta) \; & \; \ell_1 =& (2-\delta,3-\delta) \\
p_2 =& (4-\delta-\xi,2+\delta) \; & \; \ell_2 =& (4-\delta-\xi,2+2\delta) \\
p_3 =& (6-\delta-2\xi,2-\delta) \; & \; \ell_3 =& (6-\delta-2\xi,1+5\delta)\\
p_4 =& (8-\delta-3\xi,2+\delta) \; & \; \ell_4 =& (8-\delta-3\xi,2+2\delta)\\
p_5 =& (10-\delta-4\xi,2-\delta) \; & \; \ell_5 =& (11,2-\delta)\\
p_6 =& (10-\delta-4\xi,4-\delta) \; & \; \ell_6 =& (11,4)\\
p_7 =& (8,4+3\delta) \; & \; \ell_7 =& (7,4+3\delta)
\end{align*}

Our final gadget can be attained by rotating the above seven disk pairs
around the center $(8,8)$ by $90,180$ and $270$ degrees: see
Fig.~\ref{fig:gadget}. We added the spanned edges of a canonical dominating
set to this picture.

\begin{figure*}
\begin{center}
\begin{tikzpicture}[x=0.53cm,y=0.53cm]
\clip (-4.1,-4.1) rectangle (20.1,20.1);
\draw [shift={(1,1)},step=2,dotted] (-8,-8) grid (25,25);
\foreach \x in {-4,0,4,8,12,16,20}
   \foreach \y in {-4,0,16,20,24}
      {\draw [yb] (\x,\y) circle (1);
      \draw [yb] (\y,\x) circle (1);}
\foreach \x in {2,6,10,14}
   \foreach \y in {-4,0,16,20,24}
      {\draw [xb] (\x,\y) circle (1);
      \draw [xb] (\y,\x) circle (1);}
\pgfmathsetmacro{\de}{0.05}
\pgfmathsetmacro{\shift}{2-\de*\de-\de*\de*\de*\de}
\pgfmathsetmacro{\base}{2-\de}
\foreach \x in {0,90,180,270}
	{\sevencircles{\x}}
\ccc{6,6}{6-\de,6}
\ccc{8,6}{8,6+4*\de}
\ccc{10,6}{10,6-\de}
\ccc{10,8}{10-4*\de,8}
\ccc{10,10}{10,10+\de}
\ccc{8,10}{8,10-4*\de}
\ccc{6,10}{6,10+\de}
\ccc{6,8}{6+4*\de,8}

\draw [yb] (18,2) circle (1);
\draw [yb] (18,14) circle (1);
\draw [yb] (-2,2) circle (1);
\draw [yb] (-2,14) circle (1);
\draw [yb] (2,18) circle (1);
\draw [yb] (14,18) circle (1);
\draw [yb] (2,-2) circle (1);
\draw [yb] (14,-2) circle (1);
\begin{scope}
\tikzstyle{every node}=[draw,circle,fill=black,minimum size=4pt,inner
sep=0pt]
\foreach \x in {0,90,180,270}
	{\gadgetgraph{\x}}

\connectors{0,0}
\graphconnector{0,0}
\connectors{-20,0}
\graphconnector{-20,0}
\begin{scope}[rotate around={{-90}:(8,8)}]
   \connectors{0,0}
   \graphconnector{0,0}
   \connectors{-20,0}
   \graphconnector{-20,0}
\end{scope}
\end{scope}
\end{tikzpicture}
\end{center}
\caption{A gadget in the final construction. The dashed lines are spanned
edges of a canonical dominating set.}\label{fig:gadget}
\end{figure*}
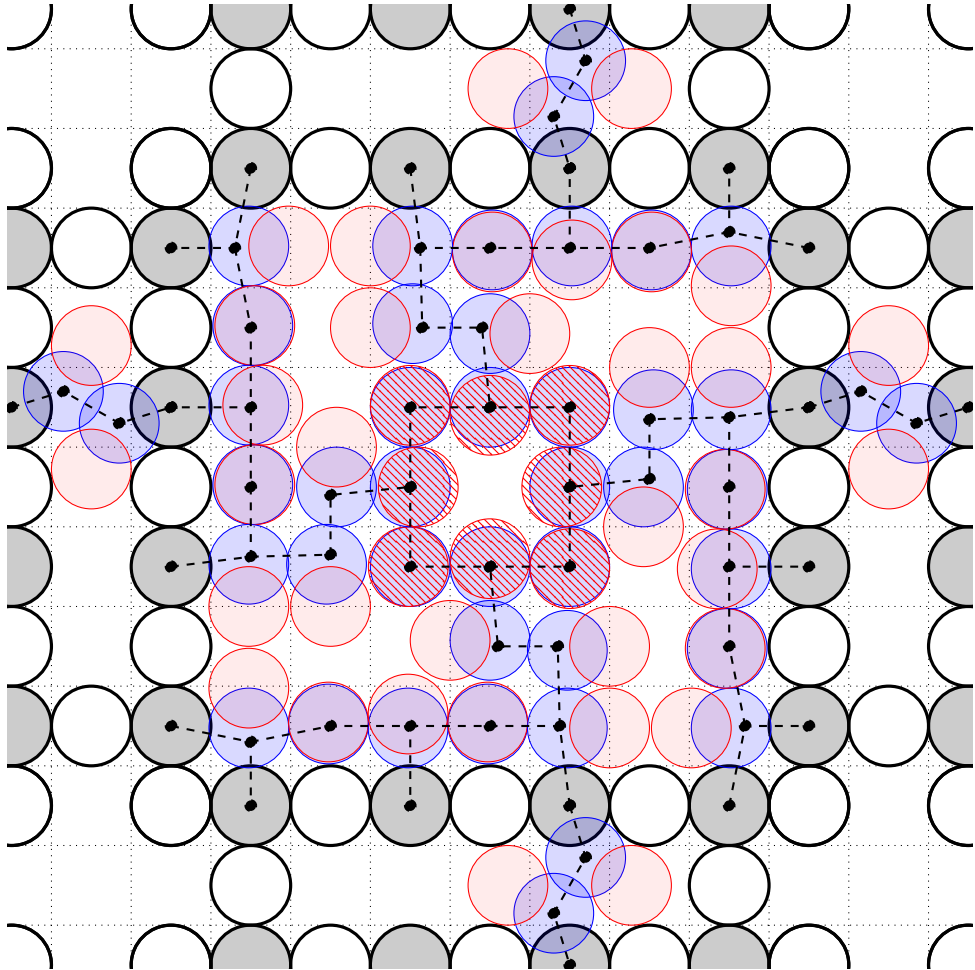

We can now turn to the proof of the following theorem.

\begin{theorem}\label{thm:cdsudg_w1h}
The \cdsudg problem is \Wone-hard.
\end{theorem}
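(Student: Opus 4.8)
The plan is to carry out in full the reduction from \textsc{Grid Tiling} sketched above, verifying its correctness through three claims that parallel the \dsudg proof of Marx. First I would establish a $16$-block analogue of Lemma~\ref{lem:Mgadget}: in the auxiliary instance consisting only of the sixteen $X$/$X'$-blocks, the sixteen $Y$/$Y'$-blocks, and the connector blocks $A,B,C,D$ (i.e.\ before any parent--leaf pairs are added), every dominating set of size $16k^2$ can be turned into a \emph{canonical} one of the same size that picks, for each gadget $G$, a single index $j^G$ and the disks $X_1(j^G),\dots,X'_8(j^G)$. This is immediate because all sixteen blocks are exact copies of Marx's building blocks with the same offsets, arranged so that only disks lying in the same or in consecutive blocks can intersect; hence at least $16$ disks are needed to dominate each gadget, and Marx's ``pushing'' argument applies verbatim around the square. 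The analogue of Lemma~\ref{lem:Mconnect} holds unchanged, since the connector blocks and their offsets are copied from \cite{Marx06}.

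Second, I would dispose of the parent--leaf pairs. By construction each of the $36k^2+4k(k-1)$ leaf disks intersects only its own parent disk, so every dominating set must contain, for each pair, the parent or the leaf, and replacing a leaf by its parent never destroys domination. Thus there is always a minimum dominating set $\Delta$ containing all parent disks, and these $P := 36k^2 + 4k(k-1)$ parents simply add $P$ to the size of the ``base'' dominating set. It then remains to verify two geometric facts about the explicit coordinates: (a) no parent disk meets any disk of a $Y$/$Y'$-block or of a connector block, so the parents neither help dominate those blocks (leaving the analogue of Lemma~\ref{lem:Mconnect} intact) nor interfere with it; and (b) for \emph{any} choice of one disk per $X$/$X'$-block, the union of the sixteen chosen disks together with the eight central parents and the zig-zag parents of that gadget is connected — consecutive zig-zag parents touching (distance exactly $2$, using the $\xi$-offset computation and the bounds $\delta^2<\xi<\delta^2+\delta^4$), each zig-zag parent crossing the appropriate inner ``small square'' of its $X$-block (Fig.~\ref{fig:smallsquares}, Fig.~\ref{fig:zigzag}, Fig.~\ref{fig:connectside}), and the four rotated copies meeting at the central square. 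Finally, the two parent disks between two neighbouring gadgets link $X'_3$ to $X'_8$ (resp.\ $X'_5$ to $X'_2$), which belong to the canonical set. Combining (a) and (b): a canonical dominating set together with all parents spans a connected subgraph, and it has size $16k^2 + P = 56k^2 - 4k$.

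Third, I would assemble the equivalence. A feasible \textsc{Grid Tiling} solution $\{s_{a,b}\}$ yields the connected dominating set that picks $X_\ell(f(s_{a,b}))$ in gadget $G_{a,b}$ for all $\ell$ plus all parent disks: it dominates everything by the $16$-block analogues of Lemmas~\ref{lem:Mgadget}--\ref{lem:Mconnect}, and is connected by fact (b); its size is $56k^2-4k$. Conversely, from any connected dominating set of size $56k^2-4k$ we may assume all $P$ parents are present; deleting them leaves a dominating set of size $16k^2$ of the block-only instance, which canonicalizes (first claim) to one $X$-disk per block per gadget, and the analogue of Lemma~\ref{lem:Mconnect} forces the row/column agreement of the chosen indices, producing a feasible \textsc{Grid Tiling}. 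Since \textsc{Grid Tiling} parameterized by $k$ is \Wone-hard, so is \cdsudg; this is exactly the argument that also underlies Theorem~\ref{thm:W1-hardness}. The main obstacle is fact (b): it is a finite but delicate case analysis over the zig-zag chain, the central square, its three rotations, and the inter-gadget connectors, relying on the ``small square'' property of a block and on a sufficiently small choice of $\delta$ to keep the accumulated horizontal drift $\,4\xi\,$ of the zig-zag negligible while the leaf disks remain isolated and the parents stay clear of all $Y$- and connector blocks.
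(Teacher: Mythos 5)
Your proposal is correct and follows essentially the same route as the paper: establish the $16$-block analogues of Marx's canonicalization and connector lemmas, force all parent disks into a minimum dominating set via the parent--leaf pairs, verify that parents avoid the $Y$- and connector blocks while linking the chosen $X$-block disks (via the ``small square'' property and the $\xi$-computation for the zig-zag), and count $16k^2+36k^2+4k(k-1)=56k^2-4k$. The paper's own proof devotes its space precisely to your ``fact (b)'' --- the explicit coordinate check that the worst-case parent disk $\disk(p_5)$ still contains the inner square of $X'_5$ for $\delta=1/\sqrt{n}$ --- so you have correctly identified where the real work lies.
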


\begin{proof}
A feasible grid tiling defines $16k^2$ disks: in gadget $(a,b)$, we include
the disks $X_\ell\left(f(s_{a,b})\right)$ and
$X'_\ell\left(f(s_{a,b})\right)$ for all $\;\ell=1,\dots,8$. We add all
parent disks of the construction, this results in a connected dominating set
of size $56k^2-4k$. In the other direction, if there is a connected
dominating set of size $56k^2-4k$, then there is a canonical dominating set
of the same size, whose disks inside $X$-blocks and $X'$-blocks define a
feasible grid tiling. Thus, it is sufficient to prove that the intersection
patterns are as described.

It can be verified using the coordinates that our final leaf disks only
intersect their parent disk, and also that the parent disks form a connected
subgraph both inside gadgets and at every connection. We need to show that
the parents inside a gadget connect all the $X$-blocks of a gadget, and that
the horizontal and vertical connectors intersect the two $X$-blocks that they
need to connect. In all of these cases, it is sufficient to show that the
parent disk contains one of the four squares that we associated with each
block. For connector disks, it is easy to see that the center of one of the
four squares is covered by the interior of the corresponding parent disk
(i.e., the square around $(1,0)$ is contained in the interior of
$\disk(T_1)$). By choosing a small enough value for $\epsilon$, the square
is contained in the parent disk.

For the inner connections of gadgets, it is sufficient to show that the inner
squares of $X_7, X_6, X'_6$ and $X'_5$ are contained in
$\disk(p_1),\disk(p_1),\disk(p_3)$ and $\disk(p_5)$ respectively: the
other sides have the same containments since the rotation around $(8,8)$ by
$90,180$ and $270$ degrees are automorphisms on the small squares. The
largest distance between parent disk and the corresponding small square is at
$\disk(p_5)$ and the inner small square of block $X'_5$. The farthest corner
of the square from $p_5$ is $\big(10+(n^2+1)\epsilon,1-(n^2+1)\epsilon\big)$.
Let $\epsilon<\frac{1}{2n^3}$ and $\delta<1$. The distance squared from $p_5$
has to be at most $1$:
\begin{align*}
 & \left(10-\delta-4\xi - (10+(n^2+1)\epsilon) \right)
  ^2+\left(2-\delta - (1-(n^2+1)\epsilon)   \right)^2\\
< &  \left(\delta+4\delta^2+ \delta^4+ \frac{1}{n} \right)^2
+\left(1-\delta + \frac{1}{n} \right)^2\\
= & 1 - 2\delta + \frac{4}{n} + O\left(\frac{\delta}{n}\right) + O(\delta^2)
\end{align*}

Let $\delta= \frac{1}{\sqrt{n}}$. For $n$ large enough,
\[1 - 2\delta + \frac{4}{n} + O\left(\frac{\delta}{n}\right) +
 O(\delta^2)\;=\;1 - \frac{2}{\sqrt{n}} + \frac{4}{n} +
 O\left(\frac{1}{n\sqrt{n}}\right) + O\left(\frac{1}{n}\right) < 1. \]

Note that the coordinates of each point can be represented with $O(\log
n)$ bits, since a precision of $c/n^4$ is sufficient for the construction.
\qed \end{proof}

We can let one of the blue parent disks be the source disk: in this way, the
minimum broadcast sets equal the minimum connected dominating
sets. We get the following corollary.

\begin{corollary}
The broadcast problem is \Wone-hard parameterized by the size of the broadcast
set.
\end{corollary}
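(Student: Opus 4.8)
The plan is to observe that the \cdsudg instance constructed in the proof of Theorem~\ref{thm:cdsudg_w1h} is already, essentially, a broadcast instance: the only thing a broadcast instance carries on top of a \cdsudg instance is a designated source that is forced into the solution, and our construction has a natural candidate for it. Concretely, I would start from an arbitrary \textsc{Grid Tiling} instance, build the unit-disk graph exactly as in Theorem~\ref{thm:cdsudg_w1h}, and then declare the source $s$ to be one fixed parent (blue) disk, say the parent disk with center $p_1$ in the gadget $G_{1,1}$. The budget for the broadcast problem is set to $56k^2-4k$, the same quantity as in Theorem~\ref{thm:cdsudg_w1h}.

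Next I would verify the two directions. For completeness, if the \textsc{Grid Tiling} instance is feasible, the canonical connected dominating set $\Delta$ of size $56k^2-4k$ exhibited in the proof of Theorem~\ref{thm:cdsudg_w1h} contains \emph{all} parent disks of the construction, in particular $s$; since $\Delta$ dominates every disk and induces a connected subgraph and $s\in\Delta$, it is a broadcast set of size $56k^2-4k$. For soundness, a broadcast set is by definition a connected dominating set that contains $s$, so a broadcast set of size at most $56k^2-4k$ is in particular a connected dominating set of size at most $56k^2-4k$; by the soundness argument of Theorem~\ref{thm:cdsudg_w1h} this forces the \textsc{Grid Tiling} instance to be feasible. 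Hence a feasible grid tiling exists if and only if the constructed instance admits a broadcast set of size at most $56k^2-4k$.

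Finally I would package this as a parameterized reduction: the construction is computed in polynomial time (all disk-center coordinates need only $O(\log n)$ bits, as noted in the proof of Theorem~\ref{thm:cdsudg_w1h}), and the new parameter $56k^2-4k$ depends only on~$k$. Since \textsc{Grid Tiling} is \Wone-hard parameterized by~$k$, the broadcast problem is \Wone-hard parameterized by the size of the broadcast set. (Moreover, the new parameter is $\Theta(k^2)$, so the ETH lower bound for \textsc{Grid Tiling} transfers and rules out an $f(k)n^{o(\sqrt k)}$ algorithm, matching the statement of Theorem~\ref{thm:W1-hardness}.)

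The only step needing care, and the reason $s$ must be a parent disk rather than an arbitrary disk, is that adding the source constraint must not change the optimum. This is immediate here: every canonical minimum connected dominating set of the construction contains all parent disks (the leaf of each parent--leaf pair can only be dominated by its parent), so requiring $s$ in the solution costs nothing, while conversely forcing $s$ in cannot make a solution \emph{cheaper}, since every broadcast set is a connected dominating set and Theorem~\ref{thm:cdsudg_w1h} already bounds those from below by $56k^2-4k$. \qed
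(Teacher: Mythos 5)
Your proposal is correct and follows essentially the same route as the paper: the paper also designates one of the blue parent disks as the source in the construction of Theorem~\ref{thm:cdsudg_w1h}, so that minimum broadcast sets coincide with minimum connected dominating sets and the \Wone-hardness transfers directly. Your write-up merely spells out in more detail the observation that the source constraint is free because the canonical optimum already contains every parent disk.
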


\section{Broadcasting in a wide strip}
We show that the broadcast problem remains polynomial in a strip of any
constant width, or more precisely, it is in \xp for the parameter $w$ (the
width of the strip).

\begin{restatable}{theorem}{widebroadcast}\label{thm:widebroadcast}
The broadcast problem and \cdsudg can be solved in $n^{O(w)}$ time on a strip
of width~$w$. Moreover, there is no algorithm for \cdsudg or the broadcast problem with
runtime $f(w)n^{o(w)}$ unless ETH fails.
\end{restatable}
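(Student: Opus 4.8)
The statement has two halves: an $n^{O(w)}$‑time algorithm for \cds (and hence for broadcast) on a strip of width $w$, and a matching $n^{\Omega(w)}$ lower bound under ETH. For the algorithm I would run a left‑to‑right sweep‑line dynamic program; for the lower bound I would reuse the \Wone‑hardness construction of Section~\ref{sec:AppHardness}, observing that it already lives in a strip of width linear in the \textsc{Grid Tiling} parameter.

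\textbf{The algorithm.} Sort the input points by $x$‑coordinate and process them one at a time, branching on whether each point is active. After the active/inactive status of the first $i$ points has been fixed, the only information about the past that can still influence the rest of the solution is: \emph{(a)} which points inside a window $[x(p_i)-R,\,x(p_i)]\times[0,w]$ of constant width $R$ (say $R=3$) behind the sweep line are active — any earlier active point is too far left to be adjacent to, or to dominate, a not‑yet‑processed point; \emph{(b)} the partition of those ``window‑active'' points into connected components of the active set chosen so far; and \emph{(c)} a constant number of extra bits (for the broadcast variant, whether $s$ has been placed and is active). Domination status of window points is a \emph{derived} quantity: a dominator of a window point again lies inside the window once $R\ge 2$, so it is read off from (a). The transition from $p_i$ to $p_{i+1}$ updates the active window set, merges components along the newly exposed edges, and discards a partial solution the moment a point slides out of the window still undominated, or a connected component loses all of its window members (such a component can never reconnect and the solution is doomed to be disconnected). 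At the end we accept iff the active set is connected and everything is dominated — and, for broadcast, contains $s$; \cdsudg itself needs no source, or can be obtained by trying each source, still within $n^{O(w)}$ time.

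\textbf{Correctness/running time, and the main obstacle.} The quantitative heart of the upper bound is a \emph{local sparsity} lemma: there is a minimum connected dominating set whose restriction to every vertical window of width $R$ has only $O(w)$ vertices. Restricting the dynamic program to states with at most $O(w)$ window‑active points is then harmless (the restriction of that one minimum solution to each prefix is always representable), and the number of states is $\binom{n}{O(w)}\cdot w^{O(w)}\cdot 2^{O(1)}=n^{O(w)}$, each transition costing $\mathrm{poly}(n)$, for a total of $n^{O(w)}$. I expect the sparsity lemma to be the main work: it should follow from a packing argument that separately bounds the active points needed for \emph{domination} — in each of the $O(w)$ cells of a $\tfrac12\times\tfrac12$ grid laid over the window, more than a constant number of active vertices cannot all have nonempty private (uniquely‑dominated) regions — and the active points needed purely for \emph{connectivity}, but making the geometric estimates clean requires care, and one must also check that a point sliding out of the window is already dominated, and that closing components is never forced.

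\textbf{The lower bound.} The plan is to note that the \cdsudg construction behind Theorem~\ref{thm:cdsudg_w1h} already fits in a strip of width $O(k)$, where $k$ is the side length of the \textsc{Grid Tiling} instance. Indeed, every disk offset there is bounded by $n^2+1$ times $\epsilon<\tfrac1{2n^3}$, so every block has diameter $o(1)$ and every gadget, together with its four connector blocks, sits in an $O(1)\times O(1)$ region; the $k\times k$ array of gadgets therefore occupies an $O(k)\times O(k)$ axis‑aligned box, which is a strip of width $w=O(k)$. This instance has $N=\Theta(k^2n^2)$ disks with $O(\log n)$‑bit coordinates, is built in $\mathrm{poly}(N)$ time, and has a connected dominating set of size $56k^2-4k$ iff the \textsc{Grid Tiling} instance is a yes‑instance (letting one parent disk be the source gives the identical statement for broadcast). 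Hence an $f(w)\,N^{o(w)}$ algorithm would decide \textsc{Grid Tiling} in $f(O(k))\cdot(k^2n^2)^{o(k)}+\mathrm{poly}(N)=f'(k)\,n^{o(k)}$ time, contradicting the ETH‑based lower bound for \textsc{Grid Tiling}~\cite{ParamAlg15}; in particular this rules out $n^{o(w)}$ and matches the algorithm up to the constant in the exponent. The only thing that really needs checking here is that nothing in the reduction breaks when we merely view the existing planar construction as a strip — that no spurious unit‑distance adjacencies arise between disks of different gadgets and that all inter‑gadget connector disks still reach — and this is immediate from the same bounded‑diameter property of the blocks that the reduction already relies on.
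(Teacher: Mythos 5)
Your proposal follows essentially the same route as the paper: a window-based dynamic program whose state records the active points in a constant-width vertical slab together with the partition of those points into connected components of the partial solution, with the state count controlled by a local sparsity bound, and an ETH lower bound obtained by observing that the \textsc{Grid Tiling} construction of Section~\ref{sec:AppHardness} fits in a strip of width $O(k)$. The one place you are thin is exactly the step you flag as the main work: the sparsity lemma. Your plan splits the bound into ``domination'' points (handled by a private-neighbor packing argument) and ``connectivity'' points, but for a minimum \emph{connected} dominating set a vertex need not have a private neighbor at all, so the entire burden falls on the connectivity half, for which you give no argument. The paper's Lemma~\ref{lem:sparseDomSet} avoids this split with a single replacement argument: cover the $1$-neighborhood of the window by $O(w)$ cells of diameter $1$ (so each cell's points form a clique), contract the cells, take a spanning tree of the contracted graph, and pick at most two representatives per tree edge; if the window contained more points than this, swapping them for the representatives yields a smaller connected dominating set. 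You would need something of this kind to make your lemma go through; once you have it, the rest of your DP (including the single left-to-right sweep, which is a mild simplification of the paper's two-sided formulation) and your lower-bound derivation $f(w)N^{o(w)}\Rightarrow f'(k)n^{o(k)}$ for \textsc{Grid Tiling} are correct --- indeed your write-up of the latter is more explicit than the paper's, which leaves it implicit in Theorem~\ref{thm:cdsudg_w1h}.
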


We begin by showing the following key lemma.
\begin{lemma}\label{lem:sparseDomSet}
Let $D$ be the disk centers of a minimum connected dominating set of a unit
disk graph on a strip of width $w$, and let $R$ be an axis parallel rectangle
of size $2\times w$. Then the number of points in $D\cap R$ is at most
$\frac{32w}{\sqrt{3}}+14$.
\end{lemma}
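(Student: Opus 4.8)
The plan is to tile $R$ by $\Theta(w)$ small cells, argue that a minimum connected dominating set can contain only a bounded number of its points in any one cell, and sum over the cells.

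First I would slice $R$ — an axis-parallel $2\times w$ rectangle which, lying inside the strip, spans the strip in the $w$-direction — into congruent cells of the form $\tfrac12\times\tfrac{\sqrt3}{2}$: four columns of width $\tfrac12$ along the short side of $R$, and $\lceil 2w/\sqrt3\rceil$ rows of height $\tfrac{\sqrt3}{2}$ across the strip, for $4\lceil 2w/\sqrt3\rceil$ cells in total. A $\tfrac12\times\tfrac{\sqrt3}{2}$ rectangle has diameter exactly $1$, so any two points of $D$ that lie in a common cell are at Euclidean distance at most $1$ and hence adjacent in $\graph$; equivalently, $D$ restricted to a single cell induces a clique of $\graph[D]$. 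It therefore suffices to bound, by an absolute constant, the number of vertices of a clique of $\graph[D]$ that fits inside an axis-parallel $\tfrac12\times\tfrac{\sqrt3}{2}$ box; the lemma then follows by multiplying this constant by the number of cells, after a careful treatment of the ceiling and of the two cells that touch the boundary of the strip (these can hold fewer active points, which is what brings the additive term down).

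For the clique bound I would exploit minimality. Let $C=\{v_1,\dots,v_t\}$ be such a clique with $t\ge 2$. Because $D$ is a minimum connected dominating set, for each $i$ the set $D\setminus\{v_i\}$ is not a connected dominating set, so $v_i$ is either (a) a cut vertex of $\graph[D]$, in which case — using that the clique $C$ stays connected when $v_i$ is removed, and that every component of $\graph[D]\setminus C$ attaching to $C\setminus\{v_i\}$ stays attached — some component $Z_i$ of $\graph[D]\setminus C$ attaches to $C$ through $v_i$ alone, or (b) has a private dominatee: a vertex of $\graph$ whose only neighbour in $D$ is $v_i$. In case (a) take $u_i\in Z_i$ with $u_iv_i\in E$; in case (b) let $u_i$ be the private dominatee. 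In both cases $|u_iv_i|\le 1$ and $|u_iv_j|>1$ for all $j\ne i$, and the points $u_i$ are pairwise distinct. Geometrically this says that $\disk(v_i)$ is not covered by $\bigcup_{j\ne i}\disk(v_j)$, that is, $v_i$ sits on the convex hull of $C$ and $u_i$ lies in the outward lune of $\disk(v_i)$. Since $C$ lies in a box only $\tfrac12$ wide, these outward directions are confined; moreover all the $u_i$ lie within the $1$-neighbourhood of the box, and if too many of them accumulate then two become adjacent, after which one can re-route the connected dominating set through that pair to obtain a strictly smaller one, contradicting minimality. Quantifying these two effects gives the desired constant.

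The hard part will be this last step, where the exact constants are really decided: one must handle the ``cut vertex'' and ``private dominatee'' cases in a single framework, carry out the re-routing so that it simultaneously keeps every private dominatee of $C$ dominated and re-attaches every component that used to hang off $C$, and extract the sharpest bound from the thin geometry of the cell. It is the interaction between the chosen aspect ratio of the cell and this per-cell bound that produces the coefficients $\tfrac{32}{\sqrt3}$ and $14$; a looser version of the argument already yields an $O(w)$ bound, which is all that is needed for the \xp-algorithm in Theorem~\ref{thm:widebroadcast}, but pinning down the constants requires the careful case analysis sketched above.
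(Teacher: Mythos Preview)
Your plan diverges from the paper's argument, and the part you yourself flag as ``the hard part'' is where the gap lies.

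The paper does \emph{not} bound the number of CDS points per cell. Instead it passes to the $1$-neighbourhood $R'$ of $R$ (a $4\times w$ rectangle), tiles $R'$ into fewer than $\tfrac{16w}{\sqrt 3}+8$ cells of side $\tfrac12\times\tfrac{\sqrt3}{2}$, and then performs a single \emph{global replacement}: contract each nonempty cell to a vertex, take a spanning tree $T$ of the resulting quotient graph, and for every edge of $T$ pick two original points realising that adjacency. This produces a set $H$ with $|H|\le 2\big((\tfrac{16w}{\sqrt 3}+8)-1\big)=\tfrac{32w}{\sqrt 3}+14$, and one checks that $D':=(D\setminus R)\cup H$ is still a connected dominating set (domination because $H$ hits every nonempty cell; connectivity because $T$ is a spanning tree and each cell is a clique, while points of $D\setminus R$ that had neighbours outside $R'$ lie in $R'\setminus R$ and are retained). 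The constants $\tfrac{32}{\sqrt 3}$ and $14$ simply read off as ``twice the number of edges of a spanning tree on at most $\tfrac{16w}{\sqrt 3}+8$ vertices.''

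Your per-cell route, by contrast, is genuinely incomplete. The witnesses $u_i$ you produce satisfy $|u_iv_i|\le 1<|u_iv_j|$ for $j\ne i$, but these constraints do not force the $u_i$ to be pairwise far apart, so there is no immediate packing bound on their number; and the ``re-route through an adjacent pair $u_i,u_j$'' step is asserted, not proved---you would have to exhibit which $v_k$ can always be dropped while every private dominatee stays covered and every pendant component of $\graph[D]\setminus C$ stays attached, and this is exactly where cut-vertex witnesses and private-dominatee witnesses interact badly. Even granting that some constant per-cell bound exists, your cell count is about $\tfrac{8w}{\sqrt 3}$, so matching $\tfrac{32w}{\sqrt 3}+14$ would require a per-cell bound of exactly~$4$ together with a bespoke treatment of boundary cells to land on the additive $14$; nothing in your sketch indicates why $4$ should be the answer. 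The paper's replacement argument sidesteps all of this by never asking how many CDS points a single cell can hold.
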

\begin{proof}
Let $R'$ be the $1$-neighborhood of $R$ inside the strip (so $R'$ is a
$4\times w$ rectangle). We subdivide $R'$ into cells of diameter $1$ by
introducing a rectangular grid with side lengths $1/2$ and $\sqrt{3}/2$.
Overall, we get $8\lceil\frac{w}{\sqrt{3}/2}\rceil<\frac{16w}{\sqrt{3}}+8$
cells in $R'$. Let $\graph$ be the unit disk graph spanned by the centers
that fall in $R'$. The points that fall into a grid cell form a clique in
$\graph$. Let $G'$ be the graph that we get if we contract the vertices of
$\graph$ in each cell. Let $T$ be a spanning tree of $G'$. We can represent
$T$ in the original graph in the following way. For each edge $uv\in E(T)$
select vertices $u',v'$ of distance at most $1$ from the cell of $u$ and $v$
respectively. We know that there are such points since otherwise $uv$ could
not be an edge in $G'$. Since $T$ has at most $(\frac{16w}{\sqrt{3}}+8)-1$
vertices, this selection gives us a point set $H$ of size at most
$2(\frac{16w}{\sqrt{3}}+7) = \frac{32w}{\sqrt{3}}+14$.

Suppose for contradiction that $R\cap D >\frac{32w}{\sqrt{3}}+14$. We argue
that $D' = (D\setminus R) \cup H$ defines a connected dominating set of
smaller cost. By our analysis above, we see that the cost is indeed smaller,
so we are left to argue that $D'$ is connected and dominating. Notice that
$D\cap R$ can only dominate vertices that are inside $R'$, so it is
sufficient to argue that all vertices of $\graph$ are dominated. This is easy
to see because $D'$ has at least one point in each non-empty cell, and the
points in each cell form cliques. It remains to argue that $D'$ is connected.
Notice that the set of points in $R'\cap D$ that had a neighbor in $D$ which
is outside $R'$ all lie in $R'\setminus R$, so these points are part of $D'$.
So it is sufficient to argue that $V(G) \cap D'$ is connected. This follows
from the fact that $T$ is connected and the points of each cell form a clique
in $\graph$.
\qed \end{proof}

\begin{proof}[Proof of Theorem \ref{thm:widebroadcast}]
For the sake of simplicity, we start with the one sided case. It is a dynamic
programming algorithm that has subproblems for certain $2\times w$
rectangles, and for each rectangle, all the possible dominating subsets with
various connectivity constraints will be considered. More specifically, let
$k \in \mathbb{N}$, let $U\subseteq P\cap \civ{k-1,k+1}\times \civ{0,w}$,
and let $\sim$ be a binary relation on $U$. The value of the subproblem
$A(k,U,\sim)$ is the minimum size of a set $D$ of active points inside
$\civ{0,k+1}\times \civ{0,w}$ for which
\begin{itemize}
\item $D\cap \civ{k-1,k+1}\times \civ{0,w}=U$
\item $D$ dominates $\civ{0,k}\times \civ{0,w}$
\item $u_1 \sim u_2$ if and only if they are connected in the graph spanned
by $D$
\item every equivalence class of $\sim$ has a representative in
$\civ{k,k+1}\times \civ{0,w}$
\end{itemize}

By Lemma~\ref{lem:sparseDomSet}, it is sufficient to consider subproblems
where $|U|\leq \frac{32w}{\sqrt{3}}+14$. Let $\mu =\big\lfloor
\frac{32w}{\sqrt{3}}+14 \big\rfloor$. For any value of $k$, there are at most
$\binom{n}{1}+\binom{n}{2}+\dots +\binom{n}{\mu}=O(n^{\mu +1})$ such subsets.
The relevant values of $k$ are integers between $0$ and $2n$. Finally, for
any subset $U$, the number of equivalence relations on $U$ is the number of
partitions of $U$, which is the Bell number $B_{|U|}$. This can be upper
bounded by $B_{\mu}<\mu^{\mu}=w^{O(w)}$. Thus, the total number of
subproblems is $O(n^{\mu+2}w^{O(w)})=n^{O(w)}$.

For all subsets $U$ of $P \cap \civ{0,1}\times \civ{0,w}$ with size at most
$\mu$, we can compute the equivalence relation $\sim_U$. For all such sets
$U$, we define $A(0,U,\sim_U)=|U|$. For higher values of $k$, we can compute
the subproblems the following way.

When computing $A(k,U,\sim)$ (for which there is a representative of each
equivalence class of $\sim$ in $\civ{k,k+1}\times \civ{0,w}$), we first need
to find the subproblems $A(k-1,U',\sim')$ for which $U' \cap
\civ{k-1,k}\times \civ{0,w} = U \cap \civ{k-1,k}\times \civ{0,w}$. We can
only extend this subproblem if $\sim'$ is compatible with $\sim$, i.e., is
$s_1,s_2 \in U\cap U'$, then $s_1\sim' s_2 \Rightarrow s_1 \sim s_2$. We can
find these potential subproblems by going through all subproblems
$A(k-1,.,.)$, and for each of these, we can decide in polynomial time whether
it is compatible with $A(k,U,\sim)$. Overall, computing the solution of a
single subproblem takes $n^{O(w)}$ time, so finding the optimal broadcast set
in the one sided case can be done in $n^{O(w)}$ time.

For the two sided case, we need to include in the subproblem description the
set of active points on both ends. Let $k \in \mathbb{N}$, let
$U^-\subseteq P\cap \civ{-k-1,-k+1}\times \civ{0,w}$, $U^+\subseteq P\cap
\civ{k-1,k+1}\times \civ{0,w}$, and let $\sim$ be a relation on $U^-\cup
U_+$. Let $B(k,U^-,U^+,\sim)$ be the minimum size of a set $D$ of active
points inside $\civ{-k-1,k+1}\times \civ{0,w}$ for which

\begin{itemize}
\item $D\cap \civ{-k-1,-k+1}\times \civ{0,w}=U_-$ and $D\cap
\civ{k-1,k+1}\times \civ{0,w}=U_+$
\item $D$ dominates $\civ{-k,k}\times \civ{0,w}$
\item $u_1 \sim u_2$ if and only if they are connected in the graph spanned
by $D$
\item every equivalence class of $\sim$ has a representative in $\big(
\civ{-k-1,-k} \cup \civ{k,k+1} \big) \times \civ{0,w}$.
\end{itemize}

The number of subproblems is still $n^{O(w)}$, so the running time is also
$n^{O(w)}$.
\qed \end{proof}

Surprisingly, the $h$-hop version has no $n^{O(w)}$ algorithm (unless \p=\np).

\section{The hardness of $h$-hop broadcast in wide strips}

The goal of this section is to prove the following theorem.
\begin{theorem}\label{thm:widestripNPC}
The $h$-hop broadcast problem is \np-complete in strips of width $40$.
\end{theorem}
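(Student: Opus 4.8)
The plan is to reduce from \textsc{3-SAT}; membership in \np{} is routine, since a candidate solution is an active set $D$ with $s\in D$, and feasibility --- that every $p\in P$ lies within $h$ hops of $s$ --- is checked by a single breadth-first search from $s$ in $\graph[D]$ together with a neighbourhood test for each $p\notin D$, all in polynomial time. The content is therefore entirely in the hardness reduction, and the ``bundle'' gadget already shown in Fig.~\ref{fig:bundle} (together with the ``turn'' and ``tape'' pieces) fits the picture of such a reduction unrolled lengthwise along the strip.

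The key ingredient is the rigidity of the hop bound: if a point $p$ is at graph-distance exactly $h$ from $s$, then every feasible $h$-hop broadcast set must contain an entire shortest $s\to p$ path, so there is no slack along any such path; every ``choice'' in the construction is thus encoded by \emph{which} shortest path (or which routing of a path) realises a distance-$h$ sentinel, not by whether points are active. For each variable $x_i$ I would build a variable gadget containing a bundle of parallel length-$h$ rails as in Fig.~\ref{fig:bundle}, two of which --- the ``true'' rail and the ``false'' rail --- end at a common sentinel $t_i$ placed at distance exactly $h$; reaching $t_i$ in time forces one of these two rails to be fully active, and the global target size $k$ is tuned so that a second active rail, or any detour, is unaffordable, so exactly one rail per variable is chosen and read as the truth value. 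For each clause $C_j=(\ell_{j,1}\vee\ell_{j,2}\vee\ell_{j,3})$ I would build a clause gadget with its own sentinel $c_j$ at distance exactly $h$, reachable by three feeder routes, one tapping the rail of each literal, designed so that a feeder has length at most $h$ iff the satisfying rail of that literal is active; then $c_j$ is reached within $h$ hops iff at least one literal of $C_j$ is satisfied. All gadgets descend from a common fan-out of the broadcast from $s$, the information flow between a variable and the clauses that use it is carried by further bundles of shortest paths, and, because the whole figure must fit in a strip of width independent of the instance, the paths are folded back and forth with the corners handled by small hop-distance-preserving turn gadgets --- this folding is what the constant $40$ pays for.

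For correctness, the forward direction is easy: a satisfying assignment selects one rail per bundle and one short feeder per clause, which together with the fixed structural points form an $h$-hop broadcast set of size exactly $k$. The reverse direction is delicate and follows the ``canonical solution'' template already used for the \Wone-hardness result (cf.\ Lemma~\ref{lem:Mgadget}): from an arbitrary $h$-hop broadcast set of size $k$, one applies a sequence of local exchange steps --- replacing detours by shortest-path segments, in the spirit of the swap operations in the proof of Lemma~\ref{lem:twopaths} --- to reach a canonical solution that uses one full rail per variable bundle and one short feeder per clause, without increasing the size; then the sentinels $t_i$ and $c_j$ being reached within $h$ hops forces the selected rails to encode a satisfying assignment.

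The main obstacle, and the reason the construction is intricate, is getting the gadgets right simultaneously in three respects. First, \emph{local rigidity}: one must rule out every off-by-one shortcut or reuse of an active point that would let one sub-budget serve two purposes, so the canonicalisation genuinely goes through. Second, \emph{exact hop accounting}: every sentinel must sit at graph-distance exactly $h$ and every ordinary point must be coverable within $h$ hops in the canonical solution, which pins down all inter-gadget distances and forces every carrying path to have the same length. Third, \emph{constant-width layout}: the variable bundles, clause gadgets, fan-out from $s$, turn gadgets, and carrying bundles must be packed into a strip of instance-independent width with no pair of unit disks inducing an unintended edge that would shorten some path; verifying these coordinate-level non-interference conditions --- analogous to the $\epsilon$-offset computations of Section~\ref{sec:AppHardness} --- is where most of the technical work lies.
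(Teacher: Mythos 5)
Your outline matches the paper's architecture: a reduction from 3-SAT in which each variable is a ``string'' of two parallel length-$h$ paths ending at a common distance-$h$ sentinel, the hop bound forces an entire shortest path to be active, the budget forces exactly one of the two paths per string, and clause gadgets are fed by the literals' strings, with the whole picture folded into a constant-width strip. The rigidity observation you lead with (a point at graph-distance exactly $h$ forces a full shortest path) is indeed the engine of the paper's argument.

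However, there is a genuine gap: the proposal contains no mechanism for getting a literal's truth value \emph{to} the clause gadget, and in a geometric strip this is the hard part. To route the $x_i$-, $x_j$-, $x_k$-strings of a clause to a common checking location, the carried signal must cross every intervening string of the bundle without creating unintended unit-disk adjacencies and without letting the crossing leak or flip the truth value; the paper spends most of its effort on exactly this, building explicit \emph{branching}, \emph{crossing} and \emph{tape} gadgets (reused from the \Wone-hardness construction of Section~\ref{sec:AppHardness}) together with a lockstep/level-correction mechanism so that returning strings stay synchronized with the bundle, and Lemma~\ref{lem:tape} is what guarantees a tape transmits (and cannot invert) a truth value at no extra cost. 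Saying that ``verifying these non-interference conditions is where most of the technical work lies'' acknowledges the problem but does not solve it; for a gadget reduction the gadgets are the proof. A second, smaller discrepancy: you encode clause satisfaction purely by hop-distance to a sentinel $c_j$, whereas the paper chooses $h$ large enough not to constrain the side strings and instead makes the clause check a \emph{domination/budget} argument (Lemma~\ref{lem:clausecheck}): the central clause disk must be dominated by one of three clause-check disks, each intersecting only the satisfying disk of its literal, so an unsatisfied clause costs one extra active disk. Your variant could plausibly be made to work, but as stated (``a feeder has length at most $h$ iff the satisfying rail is active'') it conflates graph distance, which is fixed, with distance through active points, and would need the same budget bookkeeping the paper does to rule out cheap repairs.
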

(The theorem of course refers to the decision version of the problem:
given a point set $P$, a hop bound~$h$, and a value~$K$,
does $P$ admit an $h$-hop broadcast set of size at most~$K$?)
Our reduction is from $3$-SAT. Let $x_1,x_2, \dots x_n$ be the
variables and $C_1,\dots, C_m$ be the clauses of a $3$-CNF.

\subsection{Proof overview}

Fig.~\ref{fig:bundle} shows the structural idea for representing the
variables, which we call the \emph{base bundle}. It consists of $(2h-1)n+1$
points arranged as shown in the figure, where $h$ is an appropriate value.
The distances between the points are chosen such that the graph~$\graph$,
which connects two points if they are within distance~1, consists of the
edges in the figure plus all edges between points in the same level. Thus
(except for the intra-level edges, which we can ignore) $\graph$ consists of
$n$ pairs of paths, one path pair for each variable~$x_i$. The $i$-th pair of
paths represents the variable~$x_i$, and we call it the \emph{$x_i$-string}.
By setting the target size, $K$, of the problem appropriately, we can ensure
the following for each~$x_i$: any feasible solution must use either the top
path of the $x_i$-string or the bottom path, but it cannot use points from
both paths. Thus we can use the top path of the $x_i$-path to represent a
\mytrue setting of the variable~$x_i$, and the bottom path to represent a
\myfalse setting. A group of consecutive strings is called a \emph{bundle}.
We denote the bundle containing all $x_t$-strings with $t=i,i+1,\ldots,j$
by~$\bundle(i,j)$.

\begin{figure}
\begin{center}
\begin{tikzpicture}
[
  x=0.43cm, y=0.43cm,
  string/.style={black, line width=0.5pt, rounded corners=1mm},
  sstring/.style={black, line width=1.5pt, rounded corners=1mm},
  tape/.style={black, line width=1.5pt, densely dashed},
  strip/.style={black, line width=3pt},
  branch/.style={black,pattern = crosshatch dots, pattern color=black},
  cross/.style={black,fill=black!30},
  check/.style={black,pattern=north west lines, pattern color=black},
  smallcirc/.style={draw=black,fill=white,circle,inner sep=0pt}
]
\foreach \y in {0.0,0.1,...,0.8}
   \draw (0.25,0.35) -- (1,\y);
\foreach \x in {1,2,...,25}
{
   \foreach \y in {0.0,0.1,...,0.8}
         \draw (\x,\y) -- (\x+1,\y);
}

\foreach \y in {0.05,0.25,0.45,0.65}
{
   \draw (26,\y-0.05) -- (26.95,\y);
   \draw (26,\y+0.05) -- (26.95,\y);
}
\node[smallcirc,label=above:{\large $s$}] at (0.25,0.35) {};

\draw (26.95,0.35) ellipse (0.3 and 0.6);
\node at (26.95,1.5) {\large $L_h$};

\draw[strip,dashed,Orange,fill=white](2,-0.7) rectangle (6,2.7);
\draw[strip,dashed,Orange,fill=white](7,-0.7) rectangle (11,2.7);
\draw[strip,dashed,Orange,fill=white](21,-0.7) rectangle (25,2.7);

\begin{normalsize}
\node [Orange] at (4,2) {Clause $1$};
\node [Orange] at (9,2) {Clause $2$};
\filldraw [Orange] (15,1.5) circle (0.1);
\filldraw [Orange] (15.5,1.5) circle (0.1);
\filldraw [Orange] (16,1.5) circle (0.1);
\node [Orange] at (23,2) {Clause $m$};
\end{normalsize}

\draw[strip] (0,-0.5) -- (27.2,-0.5);
\draw[strip] (0,2.5) -- (27.2,2.5);

\filldraw [Orange] (8.7,-0.9) -- (9.3,-0.9) -- (9.5,-1.5) --
++(0.5,0) -- ++(-1,-0.5) -- ++(-1,0.5) -- (8.5,-1.5) -- cycle;

\begin{scope}[shift={(0,-8)}]
\draw[strip,dashed,Orange](-0.2,-0.7) rectangle (23,5.7);
\draw[strip] (0,-0.5) -- (22.8,-0.5);
\draw[strip] (0,5.5) -- (22.8,5.5);
\draw[string] (0,-0.1) -- (22.8,-0.1);
\draw[string] (0,0) -- (22.8,0);
\draw[string] (0,0.1) -- (22.8,0.1);
\draw[string] (0,0.2) -- (22.8,0.2);
\draw[string,Green] (0,0.3) -- (12.8,0.3) -- (12.8,3.3) -- (13.5,3.3) --
(13.5,2.5);
\draw[string] (0,0.4) -- (12.5,0.4) -- (12.5,3.4) -- (15.5,3.4) -- (15.5,2.5);
\draw[string,blue] (0,0.5) -- (5.8,0.5) -- (5.8,3.5) -- (6.5,3.5) --
(6.5,2.5);
\draw[string,red] (0,0.6) -- (1,0.6) -- (1,3.6) -- (2,3.6) -- (2,2.5);
\draw[branch] (1.5,1.5) rectangle (2.5,2.5);
\draw[string,red] (2,1.5) -- (2,0.6) -- (5.5,0.6) -- (5.5,3.6) -- (8.5,3.6) --
(8.5,2.5);
\draw[tape,red] (2.5,1.7) -- (3.5,1.7);
\draw[tape,red] (2.5,2.3) -- (3.5,2.3);
\draw[string] (0,0.7) -- (0.7,0.7) -- (0.7,3.7) -- (4,3.7) -- (4,2.5);
\draw[cross] (3.5,1.5) rectangle (4.5,2.5);
\draw[string] (4,1.5) -- (4,0.7) -- (5.2,0.7) -- (5.2,3.7) -- (10.5,3.7) --
(10.5,2.5);
\draw[sstring,red] (4.5,2) -- (4.8,2) -- (4.8,5) -- (22,5);
\draw[branch] (6,1.5) rectangle (7,2.5);
\draw[tape,blue] (7,1.7) -- (8,1.7);
\draw[tape,blue] (7,2.3) -- (8,2.3);
\draw[cross] (8,1.5) rectangle (9,2.5);
\draw[tape,blue] (9,1.7) -- (10,1.7);
\draw[tape,blue] (9,2.3) -- (10,2.3);
\draw[cross] (10,1.5) rectangle (11,2.5);
\draw[sstring,blue] (11,2) -- (11.3,2) -- (11.3,4.4) -- (22.2,4.4) --
(22.2,4.8);
\draw[string,blue] (6.5,1.5) -- (6.5,0.5) --
                    (12.2,0.5) -- (12.2,3.5) -- (17.5,3.5) -- (17.5,2.5);
\draw[branch] (13,1.5) rectangle (14,2.5);
\foreach \x in {14,16,18,20}
{
   \draw[tape,Green] (\x,1.7) -- (\x+1,1.7);
   \draw[tape,Green] (\x,2.3) -- (\x+1,2.3);
   \draw[cross] (\x+1,1.5) rectangle (\x+2,2.5);
}
\draw[sstring,Green] (22,2) -- (22.8,2) -- (22.8,5) -- (22.4,5);
\draw[check] (22,4.8) rectangle (22.4,5.2);
\draw[string,Green] (13.5,1.5) -- (13.5,0.3) -- (22.8,0.3);
\draw[string,red] (8.5,1.5) -- (8.5,0.6) --
                    (12,0.6) -- (12,3.6) -- (19.5,3.6) -- (19.5,2.5);
\draw[string] (10.5,1.5) -- (10.5,0.7) --
                    (11.8,0.7) -- (11.8,3.7) -- (21.5,3.7) -- (21.5,2.5);
\draw[string] (15.5,1.5) -- (15.5,0.4) -- (22.8,0.4);
\draw[string,blue] (17.5,1.5) -- (17.5,0.5) -- (22.8,0.5);
\draw[string,red] (19.5,1.5) -- (19.5,0.6) -- (22.8,0.6);
\draw[string] (21.5,1.5) -- (21.5,0.7) -- (22.8,0.7);
\node[draw=black,thick,rounded corners=2pt,below left=2mm]
 at (28.8,6.3) { \scriptsize
\begin{tabular}{@{}r@{ }l@{}}
 \raisebox{2pt}{\tikz{\draw [string] (0,0) -- (3mm,0);}}&string\\
 \raisebox{2pt}{\tikz{\draw [tape] (0,0) -- (3mm,0);}}&tape\\
 \raisebox{2pt}{\tikz{\draw [sstring] (0,0) -- (3mm,0);}}&side string\\
 \raisebox{2pt}{\tikz{\draw [branch,rounded corners=0mm]
 (0,0) rectangle (3mm,3mm);}} & branching\\
 \raisebox{2pt}{\tikz{\draw [cross,rounded corners=0mm]
 (0,0) rectangle (3mm,3mm);}} & crossing\\
 \raisebox{2pt}{\tikz{\draw [check,rounded corners=0mm]
 (0,0) rectangle (3mm,3mm);}} & clause check
\end{tabular}};
\end{scope}
\end{tikzpicture}
\end{center}
\caption{The overall construction, and the way a single clause is checked.
Note that in this figure each string (which actually consists of two paths)
is shown as a single curve.}\label{fig:clausecheck}
\end{figure}
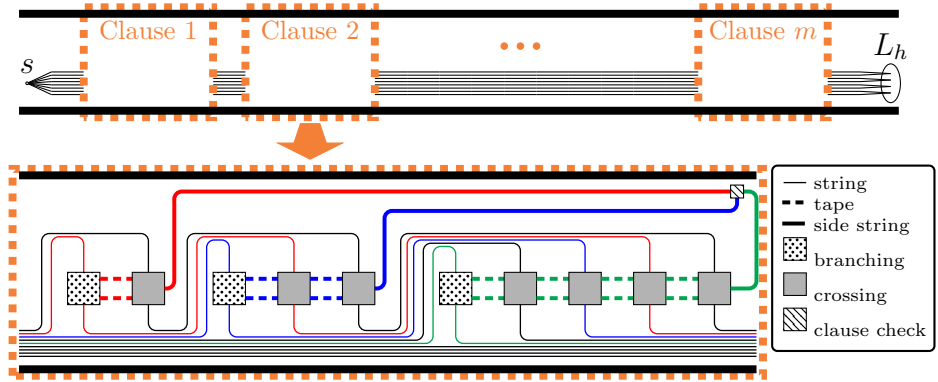

The clause gadgets all start and end in the base bundle, as shown in
Fig.~\ref{fig:clausecheck}. The gadget to check a clause involving variables
$x_i,x_j,x_k$, with $i<j<k$, roughly works as follows; see also the lower
part of Fig.~\ref{fig:clausecheck}, where the strings for $x_i$, $x_j$, and
$x_k$ are drawn in red, blue, and green respectively.

First we split off~$\bundle(1,i-1)$ from the base bundle, by letting the top
$i-1$ strings of the base bundle turn left. (In Fig.~\ref{fig:clausecheck}
this bundle consists of two strings.) We then separate the $x_i$-string from
the base bundle, and route the $x_i$-string into a \emph{branching gadget}.
The branching gadget creates a branch consisting of two \emph{tapes}---this
branch will eventually be routed to the \emph{clause-checking gadget}---and a
branch that returns to the base bundle. Before the tapes can be routed to the
clause-checking gadget, they have to cross each of the strings
in~$\bundle(1,i-1)$. For each string that must be crossed we introduce a
\emph{crossing gadget}. A crossing gadget lets the tapes continue to the
right, while the string being crossed can return to the base bundle. The
final crossing gadget turns the tapes into a \emph{side string} that can now
be routed to the clause-checking gadget. The construction guarantees that the
side string for $x_i$  still carries the truth value that was selected for
the $x_i$-string in the base bundle. Moreover, if the \mytrue path
(resp.~\myfalse path) of the $x_i$-string was selected to be part of the
broadcast set initially, then the \mytrue path (resp.~\myfalse path) of the
rest of the $x_i$-string that return to the base bundle must be in the
minimum broadcast set as well.

After we have created a side string for $x_i$, we create side strings for
$x_j$ and $x_k$ in a similar way. The three side strings are then fed into
the clause-checking gadget. The clause-checking gadget is a simple
construction of four points. Intuitively, if at least one side string
carries the correct truth value---\mytrue if the clause contains the positive
variable, \myfalse if it contains the negated variable---, then we
activate a single disk in the clause check gadget that corresponds to a true literal. Otherwise we need
to change truth value in at least one of the side strings, which requires an
extra disk.

The final construction contains~$\Theta(n^4m)$ points that all
fit into a strip of width~40.

In order to simplify our discussion and figures, we scale the input
such that $a$ can broadcast to $b$ if their unit disks intersect (or equivalently, if their
distance is at most $2$).

\subsection{Handling strings and bundles}
We start the initial bundle directly from the source, and end each string
with a disk that intersects the last true and false disk of the given
variable, as already seen in Fig.~\ref{fig:bundle}.
(A \mytrue disk is a disk on a \mytrue path, a \myfalse disk is a disk on a \myfalse path.)
A minimum-size solution
of this bundle for $h=7$ contains the source disk and true or false disks for
each of the 3 strings. In the final construction, once all the clause checks
are done and the strings have returned to the bottom bundle, we are going to
add some extra levels so that the $h$-hop restriction does not interfere with
the last side strings. (This can be done by for example doubling the maximum
distance from $s$.) The disks of a given level in a bundle lie on the same
vertical line, at distance $\frac{1}{2n}$ from each other, so for a bundle
containing all the variables, the disk centers on a given level fit on a
vertical segment of length $1$, and the whole bundle fits in width $3$.

\begin{figure}
\begin{center}
\begin{tikzpicture}[x=0.8cm,y=0.8cm]
\foreach \x in {0.0,1.95,...,7.0}
{
   \tf{\x,0.8}{\x,0}
}
\end{tikzpicture}
\end{center}
\caption{Disk pairs of a string.}\label{fig:string}
\end{figure}
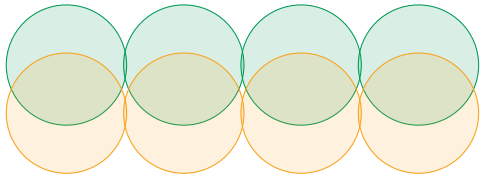

Bundled strings are in lockstep, i.e., a pair of intersecting disks in
the bundle that are not in the same string and truth value are on the same
level. We call this the \emph{lockstep condition}.

Next, we describe some important aspects of handling strings, bundles and
side strings. First, we show that we can do turns with strings in constant
horizontal space, and do turns in bundles in polynomial horizontal space. An
example of a string turn can be seen in Fig.~\ref{fig:stringturn}.

\begin{figure}
\begin{center}
\begin{tikzpicture}[x=0.8cm,y=0.8cm]
\turnleft{0,0}{0}
\end{tikzpicture}
\end{center}
\caption{A turn of $90\deg$ in a side string or string outside a bundle using
constant horizontal space.}\label{fig:stringturn}
\end{figure}
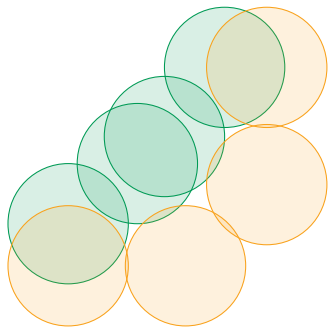

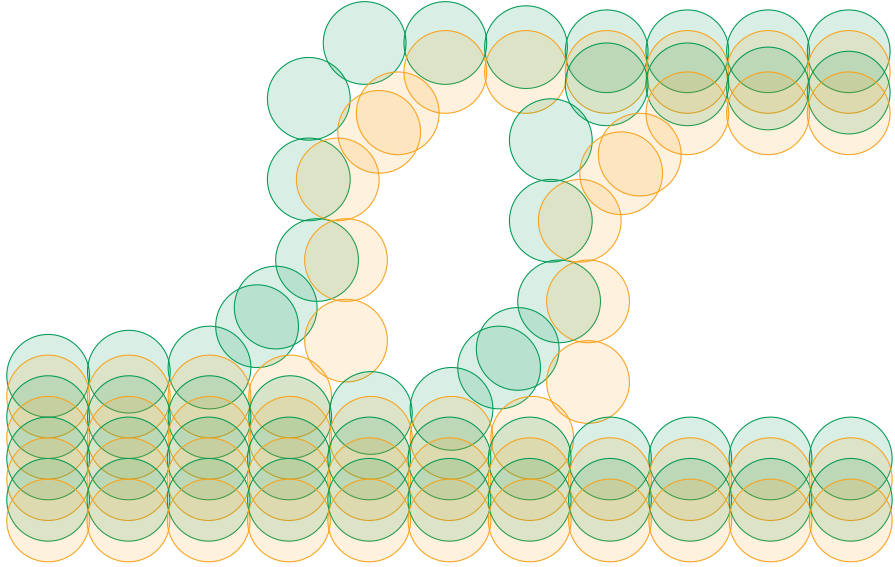
\begin{figure}
\begin{center}
\begin{tikzpicture}[x=0.55cm,y=0.55cm]

\tf{-3.9,0.5}{-3.9,0}
\tf{-1.95,0.6}{-1.95,0}
\turnleft{0,0}{0}
\turnright{5.7,8.55}{180}
\tf{7.65,8.45}{7.65,7.85}
\tf{9.6,8.35}{9.6,7.85}
\tf{11.55,8.35}{11.55,7.85}
\tf{13.5,8.35}{13.5,7.85}
\tf{15.45,8.35}{15.45,7.85}

\begin{scope}[shift={(5.85,-1)}]
\tf{-9.75,0.5}{-9.75,0}
\tf{-7.8,0.5}{-7.8,0}
\tf{-5.85,0.5}{-5.85,0}
\tf{-3.9,0.5}{-3.9,0}
\tf{-1.95,0.6}{-1.95,0}
\turnleft{0,0}{0}
\turnright{5.7,8.55}{180}
\tf{7.65,8.45}{7.65,7.85}
\tf{9.6,8.35}{9.6,7.85}
\end{scope}

\foreach \x in {-3.9,-1.96,...,15.6}
{
   \tf{\x,-1.5}{\x,-2}
   \tf{\x,-2.5}{\x,-3}
}
\end{tikzpicture}
\end{center}
\caption{Splitting the top 2 strings off a bundle of 4 strings: we peel the
top layers one by one.}\label{fig:bundleturn}
\end{figure}

This turning operation can be used on the top string of a bundle to ``peel''
off strings one by one and unify them later in a new bundle, see
Fig.~\ref{fig:bundleturn}. This is how we can split and turn a bundle: we
peel and turn the strings one by one. Notice that the lockstep condition is
upheld both in the bottom and top bundle. It requires $O(n)$ extra horizontal
space and $O(n^2)$ disks to split a bundle with this method.

If we were to return the strings one by one to the bottom bundle without
correction as depicted in Fig.~\ref{fig:clausecheck}, the returning strings
would be in a level disadvantage compared to the bottom bundle, so the new
bundle would violate the lockstep condition. To avoid this issue, we use a
correction mechanism. We have some room to squeeze bundle levels
horizontally. The largest horizontal distance between neighboring levels is
$2$; for the smallest distance, we need to make sure that a disk does
not intersect other disks from neighboring levels other than the disks in the
same string with the same truth value. So the horizontal distance has to be
at least $2\sqrt{1-\left(\frac{1}{4n}\right)^2}<2-\frac{1}{15n^2}$. Thus, if
we have $15n^2$ compressed levels in a bundle, then they take up the same
horizontal space as $15n^2-1$ maximum distance levels.

A detour of a string (peeling off, going through a gadget, returning to the
bottom bundle) requires a constant number of extra levels to achieve, we can
compensate for this with the addition of a polynomial number of extra disks.
Before a string peels off from the top bundle downward to rejoin the bottom
bundle, we add $15n^2k$ compressed levels to the top bundle and $(15n^2-1)k$
maximum distance levels to the bottom bundle, if the total number of extra
levels added by turning up, going through the gadget and turning down is $k$.
This ensures that the lockstep condition is upheld in the bottom bundle after
the return of this string. For each string that leaves the bottom bundle and
later returns, we use this correction mechanism. Overall, this correction
mechanism is invoked a polynomial number of times, so requires a polynomial
number of disks.

\subsection{Tapes}

Our tapes consist of tape blocks: a tape block is a collection of three
disks, the centers of which lie on a line at distance $\epsilon$ apart -- so
it is isometric to the old connector blocks $A,B,C$ and $D$ for the case
``$n$''$=2$ (see Section~\ref{sec:AppHardness}). Denote the three disks
inside a tape block $T^k$ by $\delta^k_1,\delta^k_2$ and $\delta^k_3$. We can
place multiple such blocks next to each other to form a tape. An example is
depicted in Fig.~\ref{fig:tape}.

\begin{figure}
\begin{center}
\begin{tikzpicture}[x=0.65cm,y=0.65cm]
\begin{scope}[rotate around={200:(0,0)}]
\tf{-2,0}{-1.9,0}
\tapeblock{0,0}{0}
\tapeblock{2,0}{0}
\tapeblock{3.87,0.7}{50}
\tapeblock{5.85,0.95}{-40}
\tapeblock{6.9,-0.76}{-90}
\tf{6.9,-2.86}{6.9,-2.96}
\node at (-1.95,0) {G};
\node at (6.9,-2.91) {F};
\node at (0,0){$T_p$};
\node at (3.87,0.7){$T_k$};
\node at (6.9,-0.76){$T_1$};
\end{scope}
\end{tikzpicture}
\hfill
\begin{tikzpicture}[x=0.9cm,y=0.9cm,
   every node/.style={draw=black,fill=white,circle,inner sep=2pt}]
\node [label={[label distance=0.1cm]above:True}, draw=ForestGreen,
fill=ForestGreen!15] (T) at (0,1.5) {};
\node [label={[label distance=0.1cm]below:False}, draw=YellowOrange,
fill=YellowOrange!15] (F) at (0,0.5) {};
\foreach \x in {1,2,...,5}
{
   \node [draw=RedViolet,fill=RedViolet!30] (V\x0) at (\x,0) {};
   \node [draw=RedViolet,fill=RedViolet!30] (V\x1) at (\x,1) {};
   \node [draw=RedViolet,fill=RedViolet!30] (V\x2) at (\x,2) {};
   \draw (V\x0) -- (V\x1) -- (V\x2);
   \draw (V\x0) to [bend left=20] (V\x2);
}
\draw (V11) -- (T) -- (V10);
\draw (T) -- (F) -- (V10);
\foreach \x in {2,3,...,5}
{
   \pgfmathtruncatemacro{\xp}{\x-1};
   \draw (V\x0) -- (V\xp0);
   \draw (V\x1) -- (V\xp1);
   \draw (V\x2) -- (V\xp2);
   \draw (V\x0) -- (V\xp1);
   \draw (V\x0) -- (V\xp2);
   \draw (V\x1) -- (V\xp2);
}
\node [label={[label distance=0.1cm]below:False},draw=YellowOrange,
fill=YellowOrange!15] (X0) at (6,0.5) {};
\node [label={[label distance=0.1cm]above:True},draw=ForestGreen,
fill=ForestGreen!15] (X1) at (6,1.5) {};
\draw (X0) -- (X1);
\draw (V52) -- (X1);
\draw (V52) -- (X0) -- (V51);
\draw [blue] (-0.3,0.2) rectangle (0.3,1.8);
\foreach \x in {1,2,...,5}
{
   \pgfmathsetmacro{\xm}{\x-0.3};
   \pgfmathsetmacro{\xp}{\x+0.3};
   \draw [RedViolet] (\xm,-0.3) rectangle (\xp,2.3);
}
\draw [blue] (5.7,0.2) rectangle (6.3,1.8);
\node [draw=none, fill=none] at (1,-0.8) {$1$};
\node [draw=none, fill=none] at (2,-0.8) {$2$};
\node [draw=none, fill=none] at (3,-0.8) {$3$};
\node [draw=none, fill=none] at (4,-0.8) {$4$};
\node [draw=none, fill=none] at (5,-0.8) {$5$};
\begin{scope}[every node/.style={draw=none,fill=none,rectangle}]
\node at (6.7,1) {G};
\node at (-0.7,1) {F};
\node at (5,2.7){$T_p$};
\node at (3,2.7){$T_k$};
\node at (1,2.7){$T_1$};
\end{scope}
\end{tikzpicture}
\end{center}
\caption{Tape blocks connecting two true-false disk pairs and the
corresponding subgraph.} \label{fig:tape}
\end{figure}
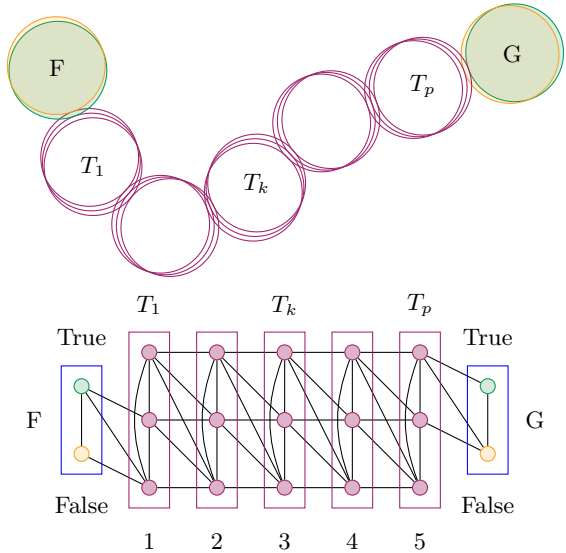

The tapes always connect blocks in which disks have truth values assigned,
e.g., the end of a string or disks of a gadget block. Denote the starting
true and false disks by $F$ and the ending true and false disks by $G$. We
say that a set of tape blocks $T^1,T^2,\dots, T^p$ forms a tape from $F$ to
$G$ if it satisfies the following conditions.
\begin{itemize}
\item In the first block, $\delta^1_1$ intersects both the true and false
disk(s) of $F$, $\delta^1_2$ intersects the true disk(s) of $F$, and
$\delta^1_3$ is disjoint from both the true and false disk(s).
\item $\delta^k_i$ intersects the disk $\delta^{k+1}_j$ if and only if $j\leq
i$ $(k=1,\dots,p-1)$.
\item In the last block, $\delta^p_1$ is disjoint from $G$, $\delta^p_2$
intersects the false disk(s) of $G$, and $\delta^p_3$ intersects both the
true and false disk(s).
\item Non-neighboring tape blocks are disjoint, $F$ is disjoint from all
blocks except the first, and $G$ is disjoint from all blocks except the last.
\end{itemize}

We would like to examine the set of disks that are used in a minimum
broadcast set from a tape.

\begin{lemma}\label{lem:tape}
Let $T$ be a tape from $F$ to $G$ that has $p$ tape blocks. Every $h$-hop
broadcast set contains at least $p-1$ disks from the tape. If a broadcast set
contains exactly $p-1$ disks, then the truth value of $F$ is is at least the
truth value of $G$, i.e., it cannot happen that the active disks in $F$ are
all false disks and the active disks in $G$ are all true disks.
\end{lemma}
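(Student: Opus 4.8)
The plan is to argue about the hop distances along the tape. First I would observe that the tape blocks $T^1,\dots,T^p$ form a ``path-like'' structure: because $\delta^k_i$ intersects $\delta^{k+1}_j$ iff $j\le i$, and non-neighboring blocks are disjoint, any path in $\graph$ from $F$ through the tape to $G$ must pass through every block $T^k$ (it cannot skip a block). In particular, the disk $\delta^p_3$ (and the true/false disks of $G$ that it reaches) can only be reached via a path that enters $T^1$, then visits $T^2, T^3, \dots, T^p$ in order. Hence every point of $G$ is at hop distance at least $p$ from $s$ through the tape, and a covering path to $G$ must activate at least one disk in each of $T^1,\dots,T^{p-1}$ (the disks of $T^p$ need not be active since they only need to be covered, and $\delta^{p-1}_3$-type disks plus $G$'s own reach suffice). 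This gives the lower bound of $p-1$ disks from the tape.

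Next I would analyze the tight case where exactly $p-1$ disks are used. Since every block except possibly the last must contribute an active disk, exactly one disk is active in each of $T^1,\dots,T^{p-1}$, and none in $T^p$. Call the active disk in $T^k$ by its index $a_k\in\{1,2,3\}$. For the chain of activations to actually form a connected path carrying the broadcast forward, consecutive active disks must intersect: $\delta^k_{a_k}$ must reach $\delta^{k+1}_{a_{k+1}}$, which by the intersection rule forces $a_{k+1}\le a_k$; so the sequence $a_1\ge a_2\ge\cdots\ge a_{p-1}$ is non-increasing. At the start, $\delta^1_1$ is the only disk of $T^1$ reaching both truth disks of $F$, while $\delta^1_2$ reaches only the true disk; if the active disks in $F$ are all false, then to have $F$'s active disk connect into the tape we need $a_1=1$ (since $\delta^1_2$ and $\delta^1_3$ miss the false disk). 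At the end, $\delta^p_3$ reaches both truth disks of $G$, $\delta^p_2$ reaches only the false disk, and $\delta^p_1$ misses $G$ entirely; the last active tape disk $\delta^{p-1}_{a_{p-1}}$ reaches $\delta^p_j$ only for $j\le a_{p-1}$, so to cover the true disk of $G$ we need $\delta^p_3$ to be covered, hence $a_{p-1}\ge 3$, i.e.\ $a_{p-1}=3$. But a non-increasing sequence with $a_1=1$ and $a_{p-1}=3$ is impossible. Therefore the configuration ``all active disks of $F$ false, all active disks of $G$ true'' cannot occur with only $p-1$ tape disks, which is exactly the claimed statement. (The other three combinations of truth values are all consistent with a monotone sequence, e.g.\ $a_1=\cdots=a_{p-1}=2$ carries false$\to$false, and $a_1=3,\dots$ carries true to anything, so no further restriction is claimed.)

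The main obstacle I anticipate is being careful about the hop-count bookkeeping: the $h$-hop constraint means we cannot simply argue about connectivity, we must ensure that the path realizing the broadcast through the tape does not ``waste'' hops, and in particular that the tight bound $p-1$ genuinely forces exactly one active disk per block rather than, say, zero in some block compensated by two elsewhere. This requires using the fact that the blocks are visited strictly in order (so there is no shortcut), together with the covering requirement at $G$, to pin down the active set block-by-block. Once that structural rigidity is established, the monotonicity argument on $(a_k)$ and the boundary conditions at $F$ and $G$ finish the proof cleanly; I would present the intersection rule $j\le i$ as the key combinatorial gadget that makes the sequence monotone, mirroring the role the connector blocks $A,B,C,D$ played in Lemma~\ref{lem:Mconnect}.
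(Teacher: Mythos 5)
Your core mechanism --- the rule ``$\delta^k_i$ intersects $\delta^{k+1}_j$ iff $j\le i$'' forces the index carried along the tape to be monotone, so a tape cannot turn \myfalse at $F$ into \mytrue at $G$ without paying an extra disk --- is exactly the right one, and it is the same mechanism the paper uses (phrased there as ``climbing to a higher index costs an additional active disk''). However, there are two genuine gaps in the execution. First, your lower bound rests on the premise that $G$ must be reached \emph{through the tape}; in the construction this is false by design ($G$ is the end of a string or a gadget block and is reached by a shorter route, the tapes being deliberately lengthened so that they never provide shortcuts). Consequently nothing forces ``one active disk in each of $T^1,\dots,T^{p-1}$ and none in $T^p$.'' The correct reason for the bound is that the tape's \emph{own} disks must be dominated and its active disks must be connected to $s$: two non-adjacent empty blocks strand the active disks between them (non-neighboring blocks are disjoint, and $F$, $G$ touch only the first and last block), and two adjacent empty blocks $T^k,T^{k+1}$ force $\delta^k_3$ to be dominated from the $F$-side and $\delta^{k+1}_1$ from the $G$-side, which by the index-monotonicity costs $k$ and $p-k$ disks respectively, i.e.\ $p$ in total --- contradiction. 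This yields ``at most one empty block,'' hence $\ge p-1$ disks, but the empty block can sit anywhere.

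Second, because the empty block $T^k$ may be interior, in the tight case the $p-1$ active disks split into a prefix chain $T^1,\dots,T^{k-1}$ that can only connect to $s$ through $F$ and a suffix chain $T^{k+1},\dots,T^p$ that can only connect through $G$; your single non-increasing sequence $a_1\ge\cdots\ge a_{p-1}$ does not exist across the gap, and the intended contradiction ``$a_1=1$ yet $a_{p-1}=3$'' never materializes. Your derivation of $a_{p-1}=3$ is also based on a wrong premise: the true disk of $G$ is \emph{active} under the contradiction hypothesis and needs no covering from the tape; the constraint at the $G$-end is rather that the last non-empty block's active disk must reach $s$ through $G$'s active (true) disk, which only $\delta^p_3$ touches. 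The argument can be repaired: if $F$ is \myfalse{} the prefix chain is pinned at index $1$ (so every $a_m=1$ for $m<k$), if $G$ is \mytrue{} the suffix chain is pinned at index $3$ (so every $a_m=3$ for $m>k$), and then the disk $\delta^k_2$ of the empty block --- whose only potential dominators are $\delta^{k-1}_j$ with $j\ge 2$, $\delta^{k+1}_j$ with $j\le 2$, and (in the boundary cases $k=1$ or $k=p$) the inactive true disk of $F$ or false disk of $G$ --- is left undominated. You need this domination argument at the empty block, which is where the paper's proof does its counting; without it the proof does not close.
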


\begin{proof}
Let the tape blocks be $T^1,T^2, \dots, T^p$. If there are at most $p-2$
active disks, then there are at least two empty blocks. These blocks have to
be neighboring, otherwise a point in between the two blocks is impossible to
reach from the source. Let these blocks be $T^k$ and $T^{k+1}$. All disks in
$T_k$ must be reached through the blocks $F,T^1,\dots,T^{k-1}$. Specifically,
$\delta^k_3$ has to be reached. The shortest path to this point from any
$F$-disk requires at least $k$ tape disks. Similarly, the shortest path from
any $G$-disk to $\delta^{k+1}_1$ requires at least $t-k$ disks. Overall, at
least $t$ active disks of the tape are required to reach these disks -- this
is a contradiction.

If the tape contains $t-1$ active disks, then both $F$ and $G$ must contain
an active disk, otherwise there would be a component inside the tape that is
not connected to the source. Suppose for contradiction that the active disks
of $F$ are false and the active disks of $G$ are true. There is at least one
tape block that has no active disk; let $T^k$ be such a block, where $k$ is
as small as possible. Since $\delta^k_2$ has to be covered, it has to be
reached either from $F$ or $G$.

Suppose that $\delta^k_2$ is reached through $F$; this requires $k$ active
disks from the tape blocks. We have only $p-1-k$ active disks for the rest of
the $p-k$ blocks $T^{k+1},\dots, T^p$, so there has to be another empty tape
block, $T^{\ell} \;(\ell > k)$. As previously demonstrated, we cannot have
non-neighboring empty blocks, so the other empty block is $T^{k+1}$. This
means that $\delta^k_3$ also has to be dominated from the left side, the
shortest path to which requires $k+1$ active tape disks from any false disk
of $F$. This leads to an additional empty block among $T^{k+1},\dots, T^p$.
But as shown above, there can be at most one such block ($T^{k+1}$) -- we
arrived at a contradiction. The same argument works for the case when
$\delta^k_2$ is reached from $G$.
\qed \end{proof}

\subsection{Gadgets and their connection to tapes and strings}

\mypara{Crossing and branching gadgets.} Our crossing gadget and our
branching gadget are almost identical to the one used in the \Wone-hardness
proof of \cdsudg. This gadget can be used to transmit information both
horizontally and vertically -- this is exactly what we need. Since we only
need to transmit truth values, we take the gadget for ``$n$''$=2$, resulting
in $X$ blocks with $2\cdot 2$ and $Y$ -blocks with $2\cdot 2 + 1$ disks. The
only change we make in the crossing gadget is that we swap the $X_1$ and
$X_2$ blocks.

For the branching gadget, we modify some offsets so that we can transmit the
vertical truth value on the right side of our gadget. For this purpose, we
redefine the offsets in the following right side $X$-blocks.

\[\text{offset}(X_3(j))=(-\iota_2(j),-j) \qquad
\text{offset}(X_4(j))=(\iota_2(j),-j)\]

In case of these horizontal connections, we say that a disk $X_k(j)$ from the
block $X_k$ is a true disk if $\iota_1(j)=2$ and a false disk if
$\iota_1(j)=1$. Similarly, for vertical connections, a disk $X_{\ell}(j)$ is
a true disk if $\iota_2(j)=2$ and a false disk if $\iota_2(j)=1$.

\mypara{Connecting gadgets with tapes and strings.}
When connecting branching and crossing gadgets or two crossing gadgets with
tapes horizontally, we are going to add a tape that goes from the $X_4$ block
of the left gadget to the $X_7$ block of the right gadget, and a tape that
goes from the $X_8$ block of the right gadget to the $X_3$ block of the left
gadget. Note that in the \Wone-hardness proofs, we used the same strategy with
tapes consisting of only one block. In this case, we place the first and last
block of each tape at the same location as the connector block in the proof
of Theorem~\ref{thm:W1-hardness}, and use some tape blocks in between these,
the number of which will be specified later. Note that this placement gives
us a tape that is consistent with the definition of true and false disks in
the $X$-blocks.

In order to connect strings and side strings to the gadgets, we use both
tapes and parent-leaf pairs.  Fig.~\ref{fig:gadgetwithear} depicts a
connection to a crossing gadget from the top and bottom.

\begin{figure}
\begin{center}
\begin{tikzpicture}[x=0.3cm,y=0.3cm,
tape/.style={RedViolet,pattern color=RedViolet,pattern=dots,very thick}]
\begin{footnotesize}
\blocksswap
\end{footnotesize}
\foreach \x in {0,4,8,12,16}
   \foreach \y in {0,16}
      {\draw [ybi] (\x,\y) circle (1);
      \draw [ybi] (\y,\x) circle (1);}
\foreach \x in {2,6,10,14}
   \foreach \y in {0,16}
      {\draw [xbi] (\x,\y) circle (1);
      \draw [xbi] (\y,\x) circle (1);}
\begin{scope}[rotate around={{-90}:(8,8)},draw=RedViolet]
\draw [tape] (-2,2) circle (1);
\draw [tape] (-2,14) circle (1);
\draw [tape] (-4,2) circle (1);
\draw [tape] (-4,14) circle (1);
\draw [tape] (-5.95,2.2) circle (1);
\draw [tape] (-5.95,13.8) circle (1);
\foreach \y in {4.2,7.8,9.8,11.8}
   \draw [tape] (-6,\y) circle (1);
\turnright{-11.9,6.3}{180}
\tf{-9.9,6.3}{-9.9,5.6}
\tf{-8,6.2}{-8,5.75}
\tf{-6,6.1}{-6,5.9}
\cc{-2,6}{-2,7}
\cc{-4,6}{-3.8,4.7}
\end{scope}
\begin{scope}[rotate around={{90}:(8,8)},draw=RedViolet]
\draw [tape] (-2,2) circle (1);
\draw [tape] (-2,14) circle (1);
\draw [tape] (-4,2) circle (1);
\draw [tape] (-4,14) circle (1);
\draw [tape] (-5.95,2.2) circle
(1);
\draw [tape] (-5.95,13.8) circle
(1);
\foreach \y in {4.2,7.8,9.8,11.8}
   \draw [tape] (-6,\y)
   circle (1);
\turnleft{-11.9,6.3}{180}
\tf{-9.9,5.6}{-9.9,6.3}
\tf{-8,5.75}{-8,6.2}
\tf{-6,5.9}{-6,6.1}
\cc{-2,6}{-2,7}
\cc{-4,6}{-3.8,4.7}
\end{scope}
\pgfmathsetmacro{\de}{0.05}
\pgfmathsetmacro{\shift}{2-\de*\de-\de*\de*\de*\de}
\pgfmathsetmacro{\base}{2-\de}
\foreach \x in {0,90,180,270}
	{\sevencircles{\x}}
\ccc{6,6}{6-\de,6}
\ccc{8,6}{8,6+4*\de}
\ccc{10,6}{10,6-\de}
\ccc{10,8}{10-4*\de,8}
\ccc{10,10}{10,10+\de}
\ccc{8,10}{8,10-4*\de}
\ccc{6,10}{6,10+\de}
\ccc{6,8}{6+4*\de,8}
\node[draw=black,thick,rounded corners=2pt,below left=2mm]
 at (34,35) {
\begin{tabular}{@{}c@{ }l@{}}
 \raisebox{2pt}{\tikz{\cc{0,0}{0.4,0} \ccc{2.7,0}{3.1,0}}}&parent-leaf pair\\
 \raisebox{2pt}{\tikz{\draw [tape, RedViolet, pattern
 color=RedViolet,pattern=dots] (0,0) circle (1);}}&tape block\\
 \raisebox{2pt}{\tikz{\tf{0.4,0}{0,0}}}&string block\\
 \raisebox{2pt}{\tikz{\draw [xbi] (0,0) circle (1);
 \draw [ybi] (2.9,0) circle (1);}}&$X$- and $Y$-block
\end{tabular}};
\end{tikzpicture}
\end{center}
\caption{Connecting to a crossing or branching gadget from the top and
bottom.} \label{fig:gadgetwithear}
\end{figure}
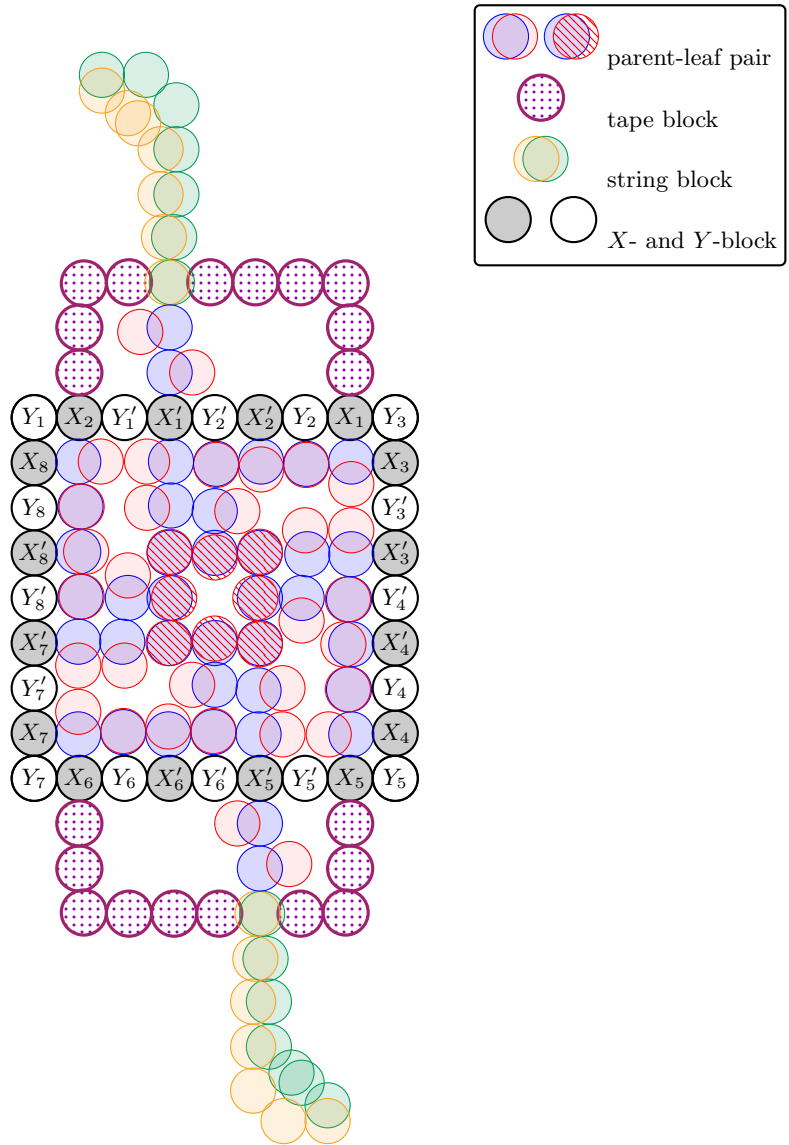

We need to connect both ``sides'' of the string: on the top, we use a tape
from $X_2$ to the last string block, and a tape from the last string block to
$X_1$. Moreover, in order to make sure that all the disk pairs of the strings
are in use, and the connection is not maintained through the tapes, we create
a  short path to the gadget with some disks that are guaranteed to be in the
solution. This path consists of parent and leaf disk pairs, where all the
parents will be inside a canonical solution -- we used this technique before
inside the gadgets to ensure gadget connectivity. The string exits the gadget
similarly. Note that the shortest path through the gadget from the string end
on the top to the string end on the bottom has length 18, and its internal
vertices are all parent disks, a disk from $X'_1$ and a disk from $X'_5$; the
paths using any of these tapes are longer.

We use the same type of connection to connect side strings to the right side
of the last crossing gadget (or to the branching gadget, if the current
clause contains the first variable). The complete gadget together with the
connections and string turns fits in $50$ units of vertical space. (Recall
that all distances have been scaled by a factor of two, so that we have
unit radius disks.)

We briefly return to the tape pairs that connect neighboring blocks. We need
to make sure that the tapes do not provide a shortcut -- we want the shortest
path from source to the last level $h$ to be through string blocks, and to go
through gadgets as discussed above. When choosing a tape length, we also need
to bridge the distance between neighboring gadgets. Note that this amount can
be polynomial in $n$ because of the correction mechanism for strings. We add
a small detour to make sure that the shortest path to a gadget that uses a
tape is longer than the shortest path that uses only the string that enters
the gadget. It is easy to see that there is enough place for such a detour:
taking twice the amount of blocks that would be necessary to cover the
distance is enough. A tape connection between neighboring blocks is depicted
in Fig.~\ref{fig:twotapes}. (Note that these tapes need no additional
vertical space: they fit easily in the $18$ units of vertical space between
the gadgets.)

\begin{figure}
\begin{center}
\begin{tikzpicture}[x=0.16cm,y=0.16cm,
tape/.style={RedViolet,pattern color=RedViolet,pattern=dots,very thick}]
\begin{tiny}
\blocksswap
\foreach \x in {0,4,8,12,16}
   \foreach \y in {0,16}
   {
      \draw [ybi] (\x,\y) circle (1);
      \draw [ybi] (\y,\x) circle (1);
   }
\foreach \x in {2,6,10,14}
   \foreach \y in {0,16}
   {
      \draw [xbi] (\x,\y) circle (1);
      \draw [xbi] (\y,\x) circle (1);
   }
\begin{scope}[shift={(56,0)}]
  \blocksswap
  \foreach \x in {0,4,8,12,16}
     \foreach \y in {0,16}
     {
        \draw [ybi] (\x,\y) circle (1);
        \draw [ybi] (\y,\x) circle (1);
     }
  \foreach \x in {2,6,10,14}
     \foreach \y in {0,16}
     {
        \draw [xbi] (\x,\y) circle (1);
        \draw [xbi] (\y,\x) circle (1);
     }
\end{scope}
\foreach \x in {18,20,...,30,34,36,...,54}
   \draw [tape] (\x,14) circle (1);
\foreach \x in {20,22,...,30}
   \draw [tape] (\x,10) circle (1);
\foreach \x in {20,22,...,34}
   \draw [tape] (\x,6) circle (1);
\foreach \y in {8,10,12}
   \draw [tape] (34,\y) circle (1);
\draw [tape] (20,8) circle (1);
\draw [tape] (30,12) circle (1);
\begin{scope}[rotate around={{180}:(36,8)}]
\foreach \x in {18,20,...,30,34,36,...,54}
   \draw [tape] (\x,14) circle (1);
\foreach \x in {20,22,...,30}
   \draw [tape] (\x,10) circle (1);
\foreach \x in {20,22,...,34}
   \draw [tape] (\x,6) circle (1);
\foreach \y in {8,10,12}
   \draw [tape] (34,\y) circle (1);
\draw [tape] (20,8) circle (1);
\draw [tape] (30,12) circle (1);
\end{scope}
\end{tiny}
\end{tikzpicture}
\end{center}
\caption{Connecting neighboring gadgets with tapes.} \label{fig:twotapes}
\end{figure}
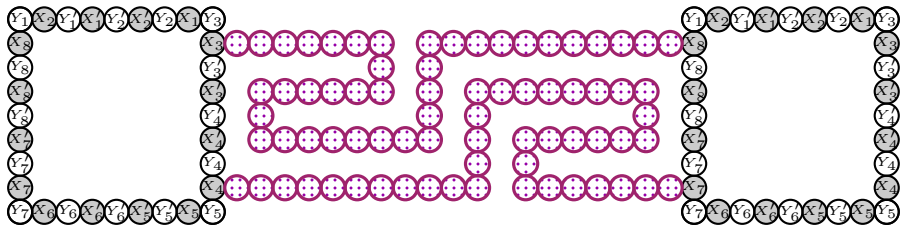

\mypara{The clause check gadget.} The clause check gadget is very simple, it
contains four well-placed disks: one at the end of each of the three side
strings, and one disk that only intersects the three other clause check
disks. We turn the three side strings towards their corresponding disks so
that the side strings do not interfere with each other. Among the six last
disks at the end of the three side strings only the ones corresponding to
the literals of this clause intersect the gadget. The rest of the side string
disks are disjoint from the gadget. See Fig.~\ref{fig:clausecheckgadget}
for an example of checking $(x_2 \vee x_3 \vee \bar{x}_5)$. The vertical
space required is less than 20 units.

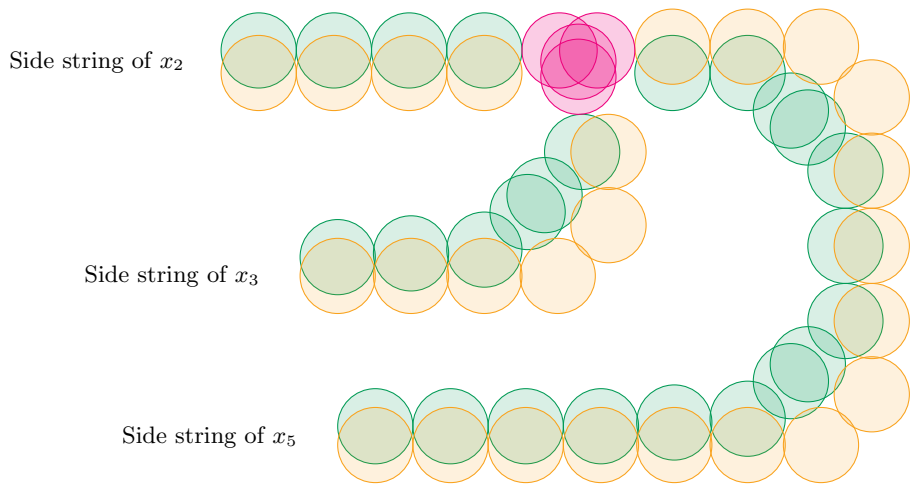
\begin{figure}
\begin{center}
\begin{tikzpicture}[x=0.5cm,y=0.5cm,
   every node/.style={draw=black,fill=white,circle,inner sep=2pt}]
\tf{-9.9,0.5}{-9.9,0}
\tf{-7.9,0.5}{-7.9,0}
\tf{-5.9,0.5}{-5.9,0}
\tf{-3.9,0.5}{-3.9,0}
\tf{-1.95,0.6}{-1.95,0}
\turnleft{0,0}{0}
\tf{2.6,5.3}{3.3,5.3}
\turnleft{3.3,7.3}{90}
\tf{-2,9.9}{-2,10.6}

\begin{scope}[shift={(-7,4.5)}]
\tf{-3.9,0.5}{-3.9,0}
\tf{-1.95,0.6}{-1.95,0}
\turnleft{0,0}{0}
\end{scope}

\begin{scope}[shift={(-13,10.5)}]
\tf{0,0}{0,-0.6}
\tf{2,0}{2,-0.6}
\tf{4,0}{4,-0.6}
\tf{6,0}{6,-0.6}
\end{scope}
\draw [RubineRed, fill=RubineRed, fill opacity=0.2] (-4,10.5) circle (1);
\draw [RubineRed, fill=RubineRed, fill opacity=0.2] (-5,10.5) circle (1);
\draw [RubineRed, fill=RubineRed, fill opacity=0.2] (-4.5,9.8) circle (1);
\draw [RubineRed, fill=RubineRed, fill opacity=0.2] (-4.5,10.2) circle (1);
\begin{small}
\node [draw=none, fill=none] at (-14.3,0.2) {Side string of $x_5$};
\node [draw=none, fill=none] at (-15.3,4.5) {Side string of $x_3$};
\node [draw=none, fill=none] at (-17.3,10.2) {Side string of $x_2$};
\end{small}
\end{tikzpicture}
\end{center}
\caption{Clause check gadget for the clause $(x_2 \vee x_3 \vee \bar{x}_5)$.}
\label{fig:clausecheckgadget}
\end{figure}

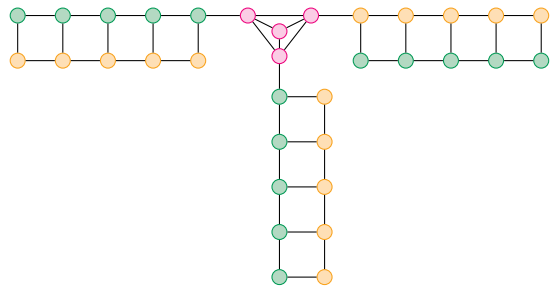
\begin{figure}
\begin{center}
\begin{tikzpicture}[x=0.6cm,y=0.6cm,
   every node/.style={draw=black,fill=white,circle,inner sep=2pt}]

\node [RubineRed, fill=RubineRed, fill opacity=0.2] (A) at (-0.2,0) {};
\node [RubineRed, fill=RubineRed, fill opacity=0.2] (C) at (1.2,0) {};
\node [RubineRed, fill=RubineRed, fill opacity=0.2] (B) at (0.5,-0.9) {};
\node [RubineRed, fill=RubineRed, fill opacity=0.2] (D) at (0.5,-0.35) {};
\draw (A) -- (B) -- (C) -- (A) -- (D) -- (B);
\draw (C) -- (D);

\foreach \x in {1,2,...,5}
{
   \node [draw=ForestGreen,fill=ForestGreen!30] (AT\x) at (-\x-0.3,0) {};
   \node [draw=YellowOrange,fill=YellowOrange!30] (AF\x) at (-\x-0.3,-1) {};
   \draw (AT\x) -- (AF\x);
   \node [draw=ForestGreen,fill=ForestGreen!30] (CT\x) at (\x+1.3,-1) {};
   \node [draw=YellowOrange,fill=YellowOrange!30] (CF\x) at (\x+1.3,0) {};
   \draw (CT\x) -- (CF\x);
   \node [draw=ForestGreen,fill=ForestGreen!30] (BT\x) at (0.5,-\x-0.8) {};
   \node [draw=YellowOrange,fill=YellowOrange!30] (BF\x) at (1.5,-\x-0.8) {};
   \draw (BT\x) -- (BF\x);
}
\draw (A)--(AT1);
\draw (B)--(BT1);
\draw (C)--(CF1);
\foreach \x in {2,3,...,5}
{
   \pgfmathtruncatemacro{\xp}{\x-1};
   \draw (AT\x) -- (AT\xp);
   \draw (BT\x) -- (BT\xp);
   \draw (CT\x) -- (CT\xp);
   \draw (AF\x) -- (AF\xp);
   \draw (BF\x) -- (BF\xp);
   \draw (CF\x) -- (CF\xp);
}
\end{tikzpicture}
\end{center}
\caption{The subgraph spanned by the disks of a clause check gadget and its
surroundings.} \label{fig:clausecheckgadgetUDG}
\end{figure}

Our complete construction can fit in $80$ units of vertical space. Ten units
can accommodate the lower bundle and turning strings up and down from it;
$50$ units of vertical space can accommodate the branching and crossing
gadgets, along with their connections and tapes. We need ten units for the
bundle that goes above the gadgets (along with the string turns), and finally
$20$ more for the side strings and the clause check gadget. Recall that we
did a scaling by two to switch to the intersection model of broadcasting. In
the original model of broadcasting, the construction occupies $40$ units of
vertical space.

In case of a satisfiable formula, we can choose the disks in each side
string that correspond to the value of the variable, and choose a disk from
the clause check gadget that intersects a true literal (at least one of the
literals is true in the clause).

This lemma describes the usage of the clause check gadgets and the side strings.

\begin{lemma}\label{lem:clausecheck}
Let $\varrho$ be the number of true-false disk pairs (blocks) in the three
side strings that correspond to a particular clause checking gadget. An
$h$-hop broadcast set contains at least $\varrho+1$ disks from the three side
strings and the clause check gadget. Moreover, if an  $h$-hop broadcast set
has exactly $\varrho+1$ actives among these disks, then the truth values
chosen at the beginning of the side strings satisfy the clause.
\end{lemma}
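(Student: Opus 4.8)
The plan is to combine the tape lemma (Lemma~\ref{lem:tape}) with a careful analysis of the four-point clause-check gadget. First I would recall the structure feeding into a clause check: each of the three side strings is (by the earlier construction) a sequence of true--false disk pairs, connected to the clause-check gadget by a short path of parent--leaf pairs and by tapes; the gadget itself is four disks $A$ (end of side string of the first literal), $B$, $C$ (the other two literal-ends) and $D$, where $D$ intersects only $A$, $B$, $C$ (see Fig.~\ref{fig:clausecheckgadgetUDG}). The side strings turn towards their literal disks so they do not interfere with each other, and among the six last disks (true and false) of the three side strings only the three corresponding to the \emph{literals of the clause} touch the gadget. Since the three side strings are vertex-disjoint except through the gadget, I would argue the lower bound $\varrho+1$ by showing that each side string must contribute at least one active disk per block (this is essentially Lemma~\ref{lem:tape} applied to the tape pieces, together with the fact that a string of disk pairs requires one active disk per level to propagate the signal the required number of hops), giving at least $\varrho$ disks, and then that at least one \emph{additional} active disk is needed because the disk $D$ must be dominated and reached within $h$ hops, and $D$ is only reachable through $A$, $B$, or $C$ (or is itself active). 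Either way one more active disk beyond the $\varrho$ string disks is forced.

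Next I would analyze the tightness case: suppose an $h$-hop broadcast set has exactly $\varrho+1$ active disks among the side-string-and-gadget disks. Then each side string uses exactly one active disk per block (no slack), and exactly one extra disk is spent in/near the gadget. By the second part of Lemma~\ref{lem:tape}, for each side string the truth value delivered at the gadget end $G$ is at most the truth value delivered at the string start $F$; more importantly, because there is no slack anywhere, the truth value that arrives at each literal-disk $A$, $B$, $C$ is \emph{exactly} the one selected for that variable in the base bundle (an all-true or all-false arriving configuration), not a ``degraded'' one. Now the single extra disk: the reachability/hop constraint forces $D$ to be reached in exactly $h$ hops via one of $A$, $B$, $C$, and the counting forces exactly one of the six literal-end disks to be active (the one adjacent to the path reaching $D$). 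I would then check, by the distances/hop-lengths fixed in the construction, that the active literal-end disk must be the one whose truth value makes that literal \emph{true}: if instead it were the ``wrong'' disk (e.g. the \myfalse disk of a positive literal), then the corresponding side string would need to ``flip'' its truth value somewhere, which by Lemma~\ref{lem:tape} (the monotonicity of truth values along tapes) or by the no-slack condition costs an extra disk, contradicting the count $\varrho+1$. Hence the chosen-at-the-start truth value of at least one of the three variables satisfies the clause.

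The main obstacle I expect is bookkeeping the hop counts: one must verify that in the no-slack regime the path from $s$ through a side string to its literal disk and then to $D$ has length \emph{exactly} $h$, and that any alternative route (through a tape, or through the ``wrong'' parity disk, or through a different literal disk) is strictly longer or requires an extra active disk, so that feasibility within $h$ hops together with the size bound $\varrho+1$ pins down both which disk is active and that it carries the satisfying truth value. This is where the careful choice of tape lengths and detours made earlier in the construction (``the shortest path to a gadget that uses a tape is longer than the shortest path that uses only the string'') is used. I would present this as: (i) lower bound by disjointness plus Lemma~\ref{lem:tape} plus the forced extra disk for $D$; (ii) in the equality case, no-slack propagation of exact truth values along each side string via Lemma~\ref{lem:tape}; (iii) the unique extra disk is a literal-end disk on a true literal, because otherwise a parity flip costs another disk. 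The conclusion is then immediate: a clause with an $h$-hop broadcast of size $\varrho+1$ restricted to its gadget is satisfied by the base-bundle truth assignment.
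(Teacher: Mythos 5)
Your overall architecture (per-string counting plus one forced extra disk at the gadget, then a parity argument in the equality case) matches the paper's, but there is a genuine gap in your lower bound. You assert that ``each side string must contribute at least one active disk per block,'' and you lean on Lemma~\ref{lem:tape} for this; neither is justified. Side strings are strings (true--false disk pairs forming two parallel paths), not tapes, so Lemma~\ref{lem:tape} does not apply to them, and the assertion itself is false: a side string \emph{can} have an empty block. The paper first proves (by a local exchange argument) that a side string cannot have \emph{two} empty blocks, and then does the real work of the lemma: if block $U_k$ of a $p$-block side string is empty, then the $\neg v$ disk of $U_{k-1}$ must be reached from the start (costing $k-2$ actives in $U_1,\dots,U_{k-2}$), the $\neg v$ disk of $U_{k+1}$ must be reached \emph{through the clause-check gadget} (forcing that string's clause-check disk active plus $p-k$ actives in $U_{k+1},\dots,U_p$), and one more active is needed adjacent to $U_k$ --- recovering a count of $p$ for that string together with its clause-check disk. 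Without this accounting, three strings each with an empty block could contribute as few as $\varrho-3$ disks, and your single ``$+1$ for $D$'' does not close the gap; moreover, the clause-check disk forced by an empty block threatens to be double-counted with your ``$+1$.'' The paper avoids this by observing separately that the string actually connecting the middle disk to the source must have \emph{no} empty block and must additionally activate its own clause-check disk, yielding $p+1$ for that string and hence $\varrho+1$ in total.

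Your treatment of the equality case is closer to the paper's, and the key intuition (reaching the middle disk requires the literal-end disk of some side string to be active, and if that string carries the falsifying value a parity flip costs an extra disk) is correct; note only that the ``wrong'' literal-end disk does not even intersect the gadget, so the extra cost comes from having to activate the literal disk \emph{in addition to} the propagated non-literal disk, not from routing through the wrong disk. To repair the proof you need (i) the no-two-empty-blocks claim, and (ii) the case analysis showing that an empty block forces the corresponding clause-check disk plus enough string disks to restore the count to $p$, before the final ``$+1$'' can be charged to the string that actually reaches the middle disk.
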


\begin{proof}
We prove the following claim first.

\myclaim{A side string cannot contain two empty blocks.}
{Suppose that $U_k$ and $U_{\ell}$ are two empty side string blocks. If they
are not neighboring, then a disk between them is unreachable from the source.
So $\ell=k+1$. Consequently, both disks of $U_k$ are dominated from the start
of the side string, and both disks of $U_{k+1}$ are dominated from the end.
Since the side string has length at more than four, either $k>2$ or $k<p-1$.
Suppose $k>2$, the other case is similar. The only way to reach both disks in
$U_k$ is to have both of the disks in $U_{k-1}$ active. Since $U_{k-1}$ is
also reached from the left, there is an active disk in $U_{k-2}$; let its
truth value be $v$. So we can deactivate the disk in $U_{k-1}$ of value $\neg
v$ and activate the disk in $U_k$ of value $v$. This way every disk that has
been dominated remains dominated, and the number of active disks does not
increase. (Note that we do not need to worry about exceeding $h$ hops since
$h$ will be chosen large enough to not interfere with side strings.)}

Now we show that every  $h$-hop broadcast set includes at least $\varrho+1$
disks from these side strings and the clause check gadget. Suppose there is a
side string of $p$ blocks that contains an empty block $U_k$, and let $v$ be
the truth value of the disk in the last block (the one intersecting the
clause check gadget). Suppose $2 \leq k \leq p-1$; a small variation of the
argument applies to the cases $k=1$ and $k=p$. In $U_{k-1}$, the disk of
value $\neg v$ has to be reached from the beginning of the string --- the
shortest path requires at least $k-2$ active disks in $U_1,\dots,U_{k-2}$. In
$U_{k+1}$, the disk of value $\neg v$ has to be reached through the clause
check gadget; this requires that the clause check disk corresponding to this
side string is active, and there are at least $p-k$ side string actives from
$U_{k+1},\dots, U_p$, since we also need to change truth value along the way.
Additionally, the disk of value $\neg v$ in $U_k$ has to be reached from one
of the neighboring blocks, requiring $U_{k-1}(\neg v)$ or $U_{k+1}(\neg v)$
to be active. Overall, either a side string does not contain an empty block
(so it has at least $p$ disks), or we needed $k-2 + (p-k) + 1 + 1 = p$ active
disks from the side string and the corresponding clause check disk. Moreover,
at least one of the three side strings needs to connect the middle point of
the clause check gadget to the source: the shortest path through a side
string of $p$ blocks has $p+1$ inner vertices, since it has to include one
disk from each block of the side string and the clause check disk
corresponding to this side string. Consequently, we need at least $\varrho+1$
active disks.

Finally, we need to show that if the assignment at the beginning of the side
strings does not satisfy the clause (all literals are false), then we need at
least $\varrho+2$ active disks. A similar argument shows that the string
that reaches the clause check gadget must have one extra active disk.
\qed \end{proof}

\subsection{Reduction from 3-SAT}

Let $\beta$ be the number of branchings, let $\gamma$ be the number of
crossings, let $\xi$ be the number of disk pairs inside strings and side
strings, and let $\tau$ be the number of tape blocks in our construction. We
examine the disks that are necessarily part of a minimum broadcast set if the
formula is satisfiable. It will be apparent that a solution of the same size
cannot exist if the formula is not satisfiable.

We include all the disks from the strings and side strings that correspond to
the value given to the variable, altogether $\xi$ disks. Add the disks from
the gadgets: the blue parent disks inside and one disk from each $X$-block,
altogether $52(\beta+\gamma)$ disks. The branching and crossing gadget
connections require four blue parent disks outside the gadget at the top and
bottom connection, and two more disks on the last gadget (one per branching),
so we require $4(\beta+\gamma)+ 2\beta$ for connections. In each tape we
include one disk from all of its blocks except one. The number of tapes that
connect neighboring gadgets is $2\gamma$, and we also use 2 tapes per string-
gadget connection, so we have $4(\beta+\gamma)+ 2\beta$ such tapes. Thus, the
number of tape disks in a solution is $\tau - (2\gamma + 4(\beta+\gamma)+
2\beta)$. Finally, we use one disk to cover each clause check gadget, overall
$m$ disks. (Recall that $m$ is the number of clauses.)

In case of a satisfiable formula, the total number of disks required for a
canonical broadcast set is
\begin{align*}
&\xi + 52(\beta+\gamma) + 4(\beta+\gamma)+ 2\beta + \tau - \big(2\gamma +
4(\beta+\gamma)+ 2\beta\big) + m\\ =\; &\xi + 52 \beta + 50 \gamma + \tau +
m.
\end{align*}

\begin{theorem}
There is a minimum $h$-hop broadcast set of cost $C=\xi + 52 \beta + 50
\gamma + \tau + m$ if and only if the original 3-CNF formula is satisfiable.
\end{theorem}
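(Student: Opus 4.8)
We proceed in three steps whose combination gives that the optimum cost equals $C$ if and only if the formula is satisfiable. First, \emph{the construction}: given a satisfying assignment, activate the disks listed in the paragraph preceding the theorem --- the matching truth-value disk of each of the $\xi$ string and side-string blocks, the canonical disk of each $X$- and $X'$-block together with the parent disks inside every branching and crossing gadget, the prescribed parent disks at each gadget connection, all but one block of every tape, and the single clause-check disk reaching a true literal of each clause. The count is $\xi+52\beta+50\gamma+\tau+m=C$, so it only remains to verify feasibility: the parent disks connect the $X$-blocks inside each gadget and link consecutive gadgets, the chosen string and side-string disks chain through the branching, crossing and clause-check gadgets exactly as in Fig.~\ref{fig:clausecheck}, and the lockstep condition together with the correction mechanism guarantees that the resulting path to any vertex of $L_h$ has length exactly $h$, while the extra levels appended at the end ensure the side strings and clause checks never force more than $h$ hops. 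Since membership in \np\ is trivial and the construction has $\Theta(n^4m)$ points inside a strip of width $40$, this part also supplies what is needed for Theorem~\ref{thm:widestripNPC}.

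Second, \emph{the lower bound}, which holds for \emph{every} $h$-hop broadcast set $D$. Partition the disks of the construction into four groups: (a) the true/false pairs of the base-bundle strings; (b) the true/false pairs of the side strings together with the clause-check disks; (c) the $X$-, $X'$-block and parent disks of all gadgets and gadget connections; (d) the tape blocks. Each block of group (a) contains an active disk, because the unique way to reach a vertex of $L_h$ in exactly $h$ hops uses one vertex of that variable's string per level; summing gives at least the number of base-bundle string blocks. Summing Lemma~\ref{lem:clausecheck} over the $m$ clauses bounds group (b) below by the number of side-string blocks plus $m$. The analogues of Lemma~\ref{lem:Mgadget} and Lemma~\ref{lem:Mconnect} for the ``$n$''$=2$ gadgets, together with the parent--leaf pairs forcing the parents into any canonical minimum set, bound group (c) below by its prescribed count. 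Lemma~\ref{lem:tape} applied to each tape bounds group (d) below by $\tau$ minus the number of tapes. The four bounds add up to exactly $C$, so $|D|\ge C$; in particular, if $|D|=C$ then all four bounds, and hence the lemmas invoked in them, hold with equality.

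Third, \emph{the extraction}. Assume $|D|=C$. Equality for group (a) and the structure of the base bundle force, for each $x_i$, that exactly one of the top path and the bottom path of the $x_i$-string is active --- using both would break the size budget, and the hop bound forbids mixing vertices from the two paths --- and we set $x_i$ accordingly. Equality in Lemma~\ref{lem:tape} on the tapes entering a branching gadget, together with the adapted Marx lemmas and the modified offsets defining true/false disks in the $X$-blocks, forces the side string emitted by that gadget to carry the truth value of the $x_i$-string; propagating the same equality through each successive crossing gadget, the side string that reaches the clause-check gadget still carries that value. Finally, equality in Lemma~\ref{lem:clausecheck} (exactly $\varrho+1$ active disks among the three side strings and the clause check) forces, by the last sentence of that lemma, the literal values entering each clause check to satisfy the clause. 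Hence every clause is satisfied, and the formula is satisfiable; combining the three steps yields the theorem.

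\textbf{Main obstacle.} The delicate point is not any single inequality but the \emph{global bookkeeping}. One must check that groups (a)--(d) genuinely partition the disk set --- in particular that side-string disks are counted exactly once, which is precisely why Lemma~\ref{lem:clausecheck} has to be phrased as an \emph{extra} $+1$ on top of the side-string blocks already in the count --- that the hop bound $h$ can be chosen simultaneously large enough that the side strings and clause checks impose no spurious constraint yet small enough that the base bundle still forces a single path per variable, and that ``equality'' propagates undisturbed along the whole chain branching gadget $\to$ crossing gadgets $\to$ side string $\to$ clause check, including the boundary cases: a clause whose smallest variable is $x_1$ has no crossing gadgets and branches directly, and the cases $k=1$ and $k=p$ in the proof of Lemma~\ref{lem:clausecheck} need the separate argument indicated there.
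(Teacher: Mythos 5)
Your proposal follows essentially the same route as the paper: construct a canonical set of size $C$ from a satisfying assignment, lower-bound every $h$-hop broadcast set by $C$ via Lemma~\ref{lem:tape}, Lemma~\ref{lem:clausecheck} and the canonical-gadget analogue of Lemma~\ref{lem:Mgadget}, and then extract a satisfying assignment from the equality case by propagating truth values through branching and crossing gadgets to the clause checks. Your explicit four-group bookkeeping is only a slightly more organized presentation of the same counting the paper performs, so the two arguments coincide in substance.
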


\begin{proof}
As we demonstrated previously, if the formula is satisfiable, then there is
an $h$-hop broadcast set of the given size. We need to show that if there is
an $h$-hop broadcast set of this size, then the formula is satisfiable. Take
a minimum  $h$-hop broadcast set. First, we know that the shortest path to
the string ending disks requires exactly $h$ hops, and the only path of this
length includes all blocks of the string in question, plus the shortest way
through the gadgets in which this string is involved. It is easy to check
that the shortest way through a gadget from the string end on the top to the
string end on the bottom uses only blue disks, and one disk from $X'_1$ and
$X'_5$ each. Without loss of generality we can suppose that the  $h$-hop
broadcast set restricted to each gadget is canonical by the analogue of
Lemma~\ref{lem:Mgadget}. Let $t=2\gamma + 4(\beta+\gamma)+ 2\beta$ be the
number of tapes in the construction. A minimum $h$-hop broadcast set must
include at least $\tau - t$ tape disks, and for each clause $i$, at least
$\varrho_i+1$ active disks as shown by Lemmas~\ref{lem:tape}
and~\ref{lem:clausecheck}, where $\varrho_{i}$ is the number of blocks in the
three side strings that correspond to a clause. So a minimum $h$-hop
broadcast set does indeed require at least $C$ disks. An $h$-hop broadcast
set of this size that is canonical when restricted to each gadget also means
that the truth value carried by a string before entering a gadget is the same
as the truth value carried after exiting the gadget. Similarly, the truth
values are transferred between neighboring gadgets connected by a tape pair:
this can be seen by applying Lemma~\ref{lem:tape} for both tapes. And
finally, all clauses must have a true literal at the beginning of at least
one of the corresponding side strings by Lemma~\ref{lem:clausecheck}. Since
the disk choice at the beginning of a side string is forced to comply with
the corresponding string, it follows that the truth values defined by the
strings satisfy the formula.
\qed \end{proof}

Our construction can be built in polynomial time -- note that the
coordinates of each point can be represented with $O(\log n)$ bits, since a
precision of $c/n^4$ is sufficient. We have successfully reduced 3-SAT to
the $h$-hop broadcast problem in a strip of width $40$. Since the problem is
trivially in \np, this concludes the proof of Theorem~\ref{thm:widestripNPC}.

\section{Conclusion}
We studied the complexity of the broadcast problem in narrow and wider
strips. For narrow strips we obtained efficient polynomial algorithms, both
for the non-hop-bounded and for the $h$-hop version, thanks to the special
structure of the problem inside such strips. On wider strips, the broadcast
problem has an $n^{O(w)}$ algorithm, while the $h$-hop broadcast becomes
\np-complete on strips of width $40$. With the exception of a constant width
range (between $\sqrt{3}/2$ and $40$) we characterized the complexity when
parameterized by strip width. We have also proved that the planar problem
(and, similarly, \cdsudg) is \Wone-hard when parameterized by the solution
size. The problem of finding a planar $h$-hop broadcast set seems even
harder: we can solve it in polynomial time for $h=2$ (see
Appendix~\ref{sec:planaralgs}) but already for $h=3$ we know no better
algorithm than brute force. Interesting open problems include:

- What is the complexity of planar $3$-hop broadcast? In particular, is
there a constant value $t$ such that $t$-hop broadcast is \np-complete?

- What is the complexity of $h$-hop broadcast in planar graphs?

\bibliography{broadcast}

\newpage

\appendix

\section{Planar 2-hop broadcast}\label{sec:planaralgs}
To compute a minimum-size broadcast set inside a narrow strip in the
hop-bounded case, we will need a subroutine for the special case of two hops.
For this we provide an algorithm that does not need that the points are
inside a strip of width at most $\sqrt{3}/2$. Since this result is of
independent interest, we provide it in a separate subsection.

Our algorithm is a modification of the $O(n^7)$ algorithm by
Amb\"uhl~et~al.~\cite{Ambuhl04}. Their algorithm works for the case
where one can use different radii for the disks around the points.
For the homogeneous case that we consider (where a point is either active with unit radius
or inactive) we obtain a better bound.
%
%
\begin{theorem}\label{thm:n4}
There is an algorithm that finds a minimum planar 2-hop broadcast set in
$O(n^4)$ time.
\end{theorem}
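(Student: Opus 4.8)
\medskip
\noindent\textit{Proof plan.}
Place $s$ at the origin. First I would reduce the problem to a purely geometric covering problem. An active point outside $\disk(s)$ is never within one hop of $s$, so it can only help cover points at hop-distance $3$ or more; since every point must be reached within two hops, such an active point is useless and can be deleted from any minimum solution. Hence a minimum $2$-hop broadcast set has the form $\{s\}\cup R$, where $R$ is a minimum-size subset of $R_{\mathrm{cand}}:=P\cap\disk(s)$ whose disks cover $P':=P\setminus\disk(s)$. If some $p\in P'$ is covered by no $\disk(r)$ with $r\in R_{\mathrm{cand}}$ (in particular if $|sp|>2$), the instance is infeasible; this test and the other preprocessing can be done in $O(n\log n)$ time via a Voronoi diagram (cf.\ Lemma~\ref{lem:union_intersection}). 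So it remains to solve the covering problem optimally in $O(n^4)$ time.

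Next I would exploit the special geometry that makes this covering problem tractable. Every $p\in P'$ has $|sp|\ge 1$, whereas every candidate relay $r$ has $|sr|\le 1$, so $s$ lies inside $\disk(r)$. Therefore, along any ray from $s$ in direction $\theta$, the disk $\disk(r)$ meets the exterior of $\disk(s)$ in a single radial segment $[\,1,\,u_r(\theta)\,]$, where $u_r(\theta)=r\cdot\hat\theta+\sqrt{1-|sr|^2+(r\cdot\hat\theta)^2}$ is the distance from $s$ to $\partial\disk(r)$ in direction $\theta$ and $\hat\theta$ is the unit vector in that direction. Thus, in angular coordinates $(\theta,d)$ around $s$, relay $r$ covers exactly the points lying under the unimodal ``tent'' curve $u_r$ over its angular support, and it covers $p$ iff $d_p\le u_r(\theta_p)$. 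Moreover any two such curves agree in at most two directions (two unit circles meet in at most two points), so the covered regions behave like pseudo-halfplanes.

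The algorithm then follows the dynamic program of Amb\"uhl~\etal~\cite{Ambuhl04}, specialised to the homogeneous setting. Conceptually one first cyclically cuts the instance by guessing which relay covers the angularly-first point of $P'$ (an $O(n)$ guess) and then sweeps the points of $P'$ in angular order. The delicate point, which our specialisation inherits, is that a relay's covered point set need not be a contiguous block (the radial constraint $d_p\le u_r(\theta_p)$ can leave gaps), so the sweep must also remember the already-chosen relays whose coverage extends beyond the current frontier; the pseudo-halfplane property above bounds the number of such ``still active'' relays by a constant, which keeps the state space polynomial. In the variable-radius algorithm of~\cite{Ambuhl04} every relay appearing in the state additionally carries a transmission radius, encoded by one or two critical points and hence contributing extra factors of $n$; fixing the source radius and all relay radii to $1$ removes these degrees of freedom, and careful bookkeeping of what remains — the cyclic-cut guess, the frontier, and the $O(1)$ active relays — yields a running time of $O(n^4)$. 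Equivalently: start from the $O(n^7)$ bound of~\cite{Ambuhl04}, save one factor of $n$ for the now-fixed source radius and two more for the now-fixed relay radii.

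The main obstacle I anticipate is not the geometry but the faithful re-derivation of the dynamic program together with its correctness proof after freezing the radii. One must establish that there is still an optimal solution whose relays admit a consistent angular ordering (so the sweep is well defined), pin down exactly which $O(1)$ pieces of boundary information suffice to continue the sweep once coverage gaps are allowed, and verify that the cyclic cut introduces no double counting. Assembling these pieces with the preprocessing above gives the claimed $O(n^4)$ algorithm.
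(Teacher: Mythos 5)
Your reduction to a pure covering problem (relays must lie in $\disk(s)$, so one must cover $P\setminus\disk(s)$ by unit disks centered in $\disk(s)$) and your geometric observations (each relay disk meets a ray from $s$ in a single radial segment, two relay boundaries cross at most twice) are correct and match the structural facts the paper establishes (star-shapedness of the union and Observation~\ref{obs:ijij}). The gap is in the dynamic program itself, which is the heart of the theorem and which you explicitly leave as an ``anticipated obstacle.'' Your plan hinges on a left-to-right angular sweep whose state records the already-chosen relays whose coverage extends beyond the frontier, together with the claim that the pseudo-halfplane property bounds the number of such relays by a constant. That claim is not proved, and it is false as stated: each relay can contribute two separated arcs to the boundary of the union, and the non-crossing property only forces these arc pairs to be \emph{laminar} in the circular order, not shallow. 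A nested chain of relays $p_1,\dots,p_k$ whose second arcs all lie past the frontier gives $k=\Omega(n)$ ``still active'' relays, so a sweep with $O(1)$ state does not have enough information. This is precisely why the paper does not sweep: it sets up an interval DP $A(i,j)$ over all $O(n^2)$ angular ranges, proves that in an optimal solution every relay appears at most twice on the boundary sequence (Lemma~\ref{lem:atmost-twice}) and that occurrences do not interleave, and uses these to justify a recurrence that splits $[i,j]$ at the second occurrence of the disk covering $q_i$. None of these ingredients appears in your proposal.

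The running-time argument is also not a proof: ``start from $O(n^7)$ and subtract one factor of $n$ for the source radius and two for the relay radii'' presupposes a specific accounting of the Amb\"uhl \etal\ algorithm that you neither state nor verify, and their algorithm's state is organized differently from what freezing radii alone would suggest. To repair the proposal you would need to (a) replace the bounded-state sweep by an interval DP (or prove a correct bounded-state invariant, which I do not believe exists), (b) prove the at-most-twice and non-interleaving properties of an optimal solution so that the recurrence and the top-level cyclic decomposition $\opt=\min_{i,j}\bigl(A(i,j)+A(j+1,i-1)\bigr)$ are justified, and (c) count the $O(n^2)$ table entries times $O(n^2)$ work per entry directly.
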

\begin{proof}
We start by testing if there is a solution consisting of a single disk
(namely~$\disk(s)$) or two disks ($\disk(s)$ and $\disk(p)$ for some $p\neq s$).
This takes $O(n^2)$ time. If we do not find a solution of size one or two, we proceed as
follows.

Let $Q := \{ q_1,\ldots,q_m\}$ be the subset of points in $P$ that are
not covered by~$\disk(s)$, where the points are numbered in counterclockwise
order around~$s$. We define $[i,j]$ to be the set of indices $\{i,\ldots,j\}$
if $i\leq j$, and we define $[i,j]$ to be the set of indices
$\{i,\ldots,m,1,\ldots,j\}$ if $i> j$. Furthermore, we define $Q[i,j]$ to
be the set of points with indices in~$[i,j]$. Let $\Delta$ be the set of
disks (excluding the source disk~$\disk(s)$) that may be useful in a minimum
2-hop broadcast. Obviously any point~$p\in P$ with $\disk(p)\in\Delta$ must
lie inside $\disk(s)$, because the broadcast is 2-hop. Moreover, $\disk(p)$
must contain at least one point $q_i\in Q$ to be useful.

We start by making sure that there is a feasible solution, so by checking
that $Q \subseteq \bigcup \Delta$. The rest of the algorithm is a dynamic
program, but we need several notations to describe it. The values $A[i,j]$ of
our subproblems are defined as follows:
\begin{quotation}
\noindent $A(i,j)$ := the minimum number of disks from $\Delta$ needed to
                      cover all points in $Q[i,j]$.
\end{quotation}
We will prove later that the size of an optimal broadcast set (not counting
the source disk, and assuming that we need at least two disks in addition to
the source disks) is given by
\begin{equation}\label{eq:opt}
\opt = \min_{i,j} \big(A(i,j) + A(j+1,i-1) \big).
\end{equation}

Define $\Delta_i$ to be the set of
disks that can be used to cover a point $q_i\in Q$, that is,
\[
\Delta_i := \{ \delta \in \Delta : q_i \in \delta\}.
\]
Let $\mynext{i}$ be the first index in the sequence~$[i,i-1]$ such that
$Q[i,\mynext{i}]$ cannot be covered by a single disk from $\Delta_i$. (Such
an index must exist since the solution size is at least three.) Furthermore,
for a disk $\delta\in\Delta$, let $\mynext{i,\delta}$ be the first index in
$[i,i-1]$ such that $Q[i,\mynext{i,\delta}]$ cannot be covered by $\delta$.
Thus $\mynext{i} = \max_{\delta\in\Delta} \mynext{i,\delta}$.

We now wish to set up a recurrence for $A(i,j)$. To this end, consider a disk
$\delta\in\Delta$ and the point set $\delta \cap Q[i,j]$. The points in
$\delta\cap Q[i,j]$ need not be consecutive in angular order around~$s$: the
disk $\delta$ may first cover a few points from $Q[i,j]$ (until
$q_{\mynext{i,\delta}-1}$), then there may be some points not covered, then
it may cover some points again, and so on; see Fig.~\ref{fig:2hop-broadcast}
where the angular ranges containing covered points are indicated in gray.
\begin{figure}[bt]
    \begin{center}
    \includegraphics{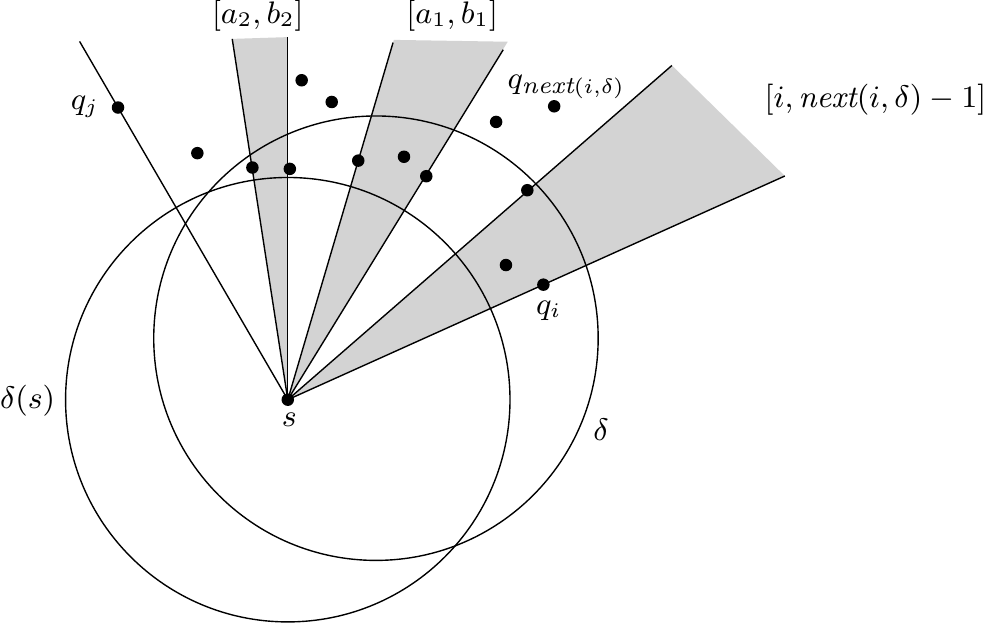}
    \end{center}
    \caption{Definition of the intervals~$[a_i,b_i]$}
    \label{fig:2hop-broadcast}
\end{figure}
We can thus define a set of maximal intervals that together form $\delta\cap Q[i,j]$:
\[
\delta \cap Q[i,j]
    = Q[i,\mynext{i,\delta}-1] \cup Q[a_1,b_1] \cup Q[a_2,b_2] \dots \cup Q[a_t,b_t].
\]
Now define $\mathfrak{I}(i,j,\delta)$ as
\[
\mathfrak{I}(i,j,\delta) :=  [a_1-1,b_1+1] \cup [a_2-1,b_2+1] \dots \cup [a_t-1,b_t+1].
\]
We claim that we now have the following recurrence:
\begin{align}\label{eq:dpAn4}
A&(i,j)=\nonumber\\
&\begin{cases}
1 & \text{ if } i=j \\[1ex]
1 \!+\! \displaystyle \min \Big\lbrace A(\mynext{i},j),
\displaystyle \mkern-25mu \min_{\substack{\delta \in \Delta_i\\ (a,b) \in \mathfrak{I}(i,j,\delta)} } \mkern-25mu
\left(A(\mynext{i,\delta},a) \!+\! A(b,j) \right)\Big\rbrace & \text{ otherwise}
\end{cases}
\end{align}
We need to establish some key properties to prove the correctness of this
recurrence. Let $D$ be the set of active points in a minimum-size $2$-hop
broadcast.  We call a disk $\disk(p)$ of an active point~$p$
an \emph{active disk}. Let $\cu(D) := \bigcup \{\disk(p)\;:\;p\in D\}$ be
the union of the active disks.
\begin{observation}
The region $\cu(D)$ is star-shaped with respect to the source point $s$, that is,
for any point $z$ in $\cu$, the segment $sz$ is inside $\cu(D)$.
\end{observation}
\begin{proof}
Let $p \in D$ be a point such that $z\in\disk(p)$. Suppose for contradiction
that there is a point $t \in sz$ that lies outside $\cu(D)$, and let $\ell$ be
the perpendicular bisector of $tz$. Since $t \not\in \disk(p)$, point $p$
lies on the same side of $\ell$ as $z$. Note that since $t \not\in
\disk(s)$, the disk $\disk(s)$ is entirely covered by the other half plane
of $\ell$. Thus $p \not\in \disk(s)$, which is a contradiction since in a
2-hop broadcast set we have $D\subset \disk(s)$.
\qed \end{proof}
Let $\partial \cu(D)$ be the boundary of $\cu(D)$. By the previous
observation, $\partial \cu(D)$ is connected for 2-hop broadcast sets.
Note that a point~$q\in Q$ can be covered by multiple active disks.
We will assign a unique point $\mypred(q)\in D$ whose disk covers $q$ to
each $q\in Q$, as follows. We call $\mypred(q)$ the \emph{predecessor}
of $q$ (in the given solution~$D$) because $\mypred(q)$ can be thought
of as the predecessor of $q$ in a broadcast tree induced by~$D$.
Let $\myray(q)$ be the ray emanating from
$s$ and passing through~$q$, and consider the point~$z$ where $\myray(q)$ exists
$\cu(D)$. Then we define $\mypred(q)$ to be the point that is the center
of the active disk~$\delta$ on whose boundary~$z$ lies (with ties
broken arbitrarily, but consistently).

Recall that the points in $Q$ are numbered in angular order around~$s$,
and consider the circular sequence
$\sigma(D) := \langle \mypred(q_1),\ldots,\mypred(q_m) \rangle$.
We modify $\sigma(D)$ by replacing any consecutive subsequence consisting
of the same point by a single occurrence of that point. For example,
we would modify $\langle p,p,p,q,q,p,p,r,r,r,p\rangle$ to obtain $\langle p,q,p,r,p\rangle$.
\begin{observation}\label{obs:ijij}
In a 2-hop broadcast set~$D$, the boundary sequence $\sigma(D)$ has no cyclic
subsequence $\cdots p \cdots p' \cdots p \cdots p'$ with $p \neq p'$.
\end{observation}
\begin{proof}
Between two adjacent occurrences of $p$ and $p'$ on the boundary, there must
be an intersection between $p$ and $p'$. Since there can be at most two
intersections between two circles, the sequence $\cdots p \cdots p' \cdots p \cdots p'$
cannot occur in~$\sigma$.
\qed \end{proof}
\begin{lemma} \label{lem:atmost-twice}
In a 2-hop broadcast set~$D$, any point~$p\in D$ can appear in $\sigma(D)$ at most
twice.
\end{lemma}
\begin{proof}
Consider the part of the boundary $\partial \disk(p)$ lying outside
the source disk~$\disk(s)$. This boundary part, which we denote by $\gamma$,
can be partitioned into arcs where $\partial\disk(p)$ defines $\partial \cu(D)$
and arcs where it does not. Assume for a contradiction that there
are three arcs where $\partial\disk(p)$ defines $\partial \cu(D)$---obviously
this is necessary for $p$ to appear three times in~$\sigma(D)$.
Then there must be two arcs, $\gamma_1$ and $\gamma_2$, where
$\partial\disk(p)$ does not define $\partial \cu(D)$ and such that
$\gamma_1$ and $\gamma_2$ lie fully in the interior of~$\gamma$.
Let $\alpha(\gamma)$ denote the opening angle of the cone with apex $p$
defined by $\gamma$, and define $\alpha(\gamma_1)$ and $\alpha(\gamma_2)$
similarly; see Fig.~\ref{fig:cones}.
\begin{figure}[bt]
    \begin{center}
    \includegraphics{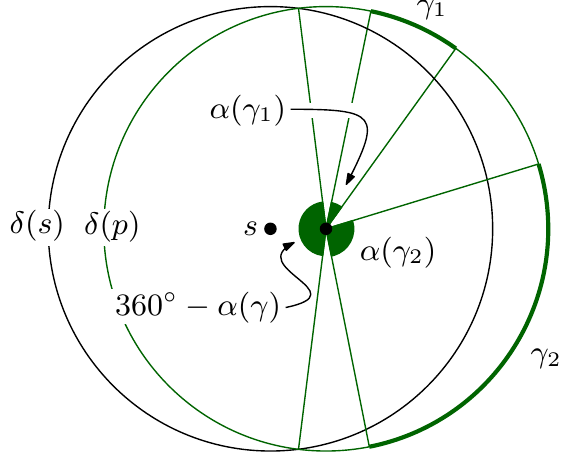}
    \end{center}
    \caption{Illustration for the proof of
    Lemma~\protect\ref{lem:atmost-twice}.}
    \label{fig:cones}
\end{figure}
It is easy to see that $\alpha(\gamma)\leq 240^{\circ}$.
Since $\gamma_1$ and $\gamma_2$ do not cover $\gamma$ completely
then one of them, say $\gamma_1$, must be less than $120^{\circ}$.
We will show that this leads to a contradiction, thus proving the lemma.

Let $\disk(p')$ be a disk covering (part of) $\gamma_1$. Since $\disk(p')$
covers less than $120^{\circ}$ of $\gamma_1$, its center $p'$ must lie
outside $\disk(p)$. On the other hand, $p'$ must lie inside $\disk(s)$,
since we have a 2-hop broadcast and $p'\in D$.
Now observe that $p'$ lies on the ray $\rho$ starting at $p$
that goes through the midpoint of the arc $\gamma_1\cap \disk(p')$. This is
a contradiction because $\rho$ is disjoint from $\disk(s) \setminus
\disk(p)$.  
\qed \end{proof}
We are now ready prove the correctness of our algorithm.
\medskip

First consider Equation~\eqref{eq:opt}. It is clear that
\[
\opt \le \min_{i,j} \left(A(i,j) + A(j+1,i-1) \right)
\]
since the union of the best covering of $Q[i,j]$ and
$Q[j+1,i-1]$ is a feasible covering.

To prove the reverse, let~$D$ be a minimum-size 2-hop broadcast set.
Suppose some point, $p$, appears only once in $\sigma(D)$.
Let $i,j$ be such that $\{q\in Q: \mypred(q) = p\} = Q[i,j]$.
Then $A(i,j)=1$ and there is a covering of $Q[j+1,i-1]$ with
$|D|-1$~disks. Hence, $\min_{i,j} \left(A(i,j) + A(j+1,i-1) \right) \leq \opt$
in this case.
If all points appear twice in $\sigma(D)$ then we can argue as follows.
Consider a point $p\in D$, and let $i_1,j_1$ and $i_2,j_2$ be such that
$\{q\in Q: \mypred(q) = p\} = Q[i_1,j_1] \cup Q[i_2,j_2]$. Then the set
of disks used by $D$ in the covering of $Q[i_1,j_2]$ is disjoint from the
set of disks used by $D$ in the covering of $Q[j_2+1,i_1-1]$ by
Observation~\ref{obs:ijij}. Hence, $\left(A(i_1,j_2) + A(j_2+1,i_1-1) \right)
\leq \opt$.
\medskip

Next, we prove that the recursive formula~\eqref{eq:dpAn4} holds. We prove
this by induction on the length of~$[i,j]$. If $i=j$, then $A(i,j)=1$ is
correct since our initial feasibility check implies that there is at least
one disk $\delta \in \Delta$ that can cover $q_i$.
Now consider the case $i\neq j$. First we note that
\[
A(i,j) \leq 1+ \displaystyle \min \Big\lbrace A(\mynext{i},j),
\displaystyle \min_{\substack{\delta \in \Delta_i\\ (a,b) \in \mathfrak{I}(i,j,\delta)} }
\left(A(\mynext{i,\delta},a) + A(b,j) \right)\Big\rbrace.
\]
Indeed, there is a disk covering $Q[i,\mynext{i}-1]$ by definition of $\mynext{i}$
and we can cover $Q[\mynext{i},j]$ by $A(\mynext{i},j)$ disks by induction.
Similarly, the definition of $\mathfrak{I}(i,j,\delta)$ implies that
any disk $\delta\in\Delta_i$ covers $Q[i,\mynext{i,\delta}-1]$
and $Q[a+1,b-1]$. By induction we can thus cover $Q[i,j]$
by $1+A(\mynext{i,\delta},a) + A(b,j)$ disks.

To prove the reverse, let $D$ be a minimum-size 2-hop broadcast for
$Q[i,j]$ and let $p := \mypred(q_i)$. If $p$ appears in the covering of
$Q[i,j]$ only once, then $A(i,j)=1+A(\mynext{i},j)$. Otherwise $p$ appears
twice by Lemma~\ref{lem:atmost-twice}. Let $q_a$ be the last point before the
second appearance of $p$ in $\sigma(D)$, and let $q_b$ be the first point
after the second appearance of $p$ in $\sigma$. By
Observation~\ref{obs:ijij}, the coverings of $Q[\mynext{i,\delta},a]$ and
$Q[b,j]$ are disjoint in $D$. Hence, $1+\left(A(\mynext{i,\delta},a) +
A(b,j) \right) \leq |D|$. We conclude that
\[
A(i,j) \geq 1+ \displaystyle \min \Big\lbrace A(\mynext{i},j),
\displaystyle \min_{\substack{\delta \in \Delta_i\\ (a,b) \in
\mathfrak{I}(i,j,\delta)} } \left(A(\mynext{i,\delta},a) + A(b,j)
\right)\Big\rbrace. \]
\medskip

It remains to analyze the running time. The algorithm works by first
computing $\mynext{i}$, $\mynext{i,\delta}$ and $\mathfrak{I}(i,j,\delta)$
for each $i,j$ and $\delta \in \Delta$. This can easily be done in $O(n^4)$
time. Running the dynamic program using the recursive
formula~\eqref{eq:dpAn4} then takes $O(n^4)$ time, as we have $O(n^2)$
entries $A(i,j)$ that each can be computed in $O(n^2)$ time. Finally,
computing the optimal solution using Equation~\eqref{eq:opt} takes $O(n^2)$m
time. Hence, the overall time requirement is $O(n^4)$, while the space
required is $O(n^2)$. Computing an optimal solution itself, rather than just
the value of $\opt$, can be done in a standard manner, without increasing the
time or space bounds.
\qed \end{proof}

\end{document}